\newif\ifabstract
\abstracttrue
 \abstractfalse 
\newif\iffull
\ifabstract \fullfalse \else \fulltrue \fi

\documentclass[11pt]{article}
\usepackage{amsfonts}
\usepackage{amssymb}
\usepackage{amstext}
\usepackage{amsmath}
\usepackage{xspace}
\usepackage{graphicx}
\usepackage{url}
\usepackage{graphics}
\usepackage{colordvi}
\usepackage{colordvi}
\usepackage{subfigure}

\textheight 9.3in \advance \topmargin by -1.0in \textwidth 6.7in
\advance \oddsidemargin by -0.8in
\newcommand{\myparskip}{3pt}
\parskip \myparskip

\usepackage[utf8]{inputenc}
\usepackage{mathrsfs}
\usepackage{amsmath,amsthm,bm}
\usepackage{xspace}
\usepackage{latexsym}
\usepackage{graphicx}
\usepackage{epsfig}
\usepackage{subfigure}
\usepackage{verbatim}
\usepackage{cases}
\usepackage{url}
\usepackage{amsfonts,latexsym}
\usepackage{setspace}
\usepackage{subfigure}
\usepackage{tikz}
\usepackage{pgfplots}
\usepackage{footnote}
\usepackage{enumitem}

\usepackage{algorithm}
\usepackage[]{algpseudocode}
\makeatletter
\newcommand{\algmargin}{\the\ALG@thistlm}
\makeatother
\newlength{\whilewidth}
\settowidth{\whilewidth}{\algorithmicwhile\ }
\algdef{SE}[parWHILE]{parWhile}{EndparWhile}[1]
{\parbox[t]{\dimexpr\linewidth-\algmargin}{%
		\hangindent\whilewidth\strut\algorithmicwhile\ #1\ \algorithmicdo\strut}}{\algorithmicend\ \algorithmicwhile}%
\algnewcommand{\parState}[1]{\State%
	\parbox[t]{\dimexpr\linewidth-\algmargin}{\strut #1\strut}}

\makeatletter

\newcommand{\Rmnum}[1]{\expandafter\@slowromancap\romannumeral #1@}
\makeatother

\usepackage{array}
\usepackage{color}

\newtheorem{thm}{Theorem}[section]
\newtheorem{lem}[thm]{Lemma}
\newtheorem{pro}[thm]{Proposition}
\newtheorem{cor}[thm]{Corollary}

\newtheorem {dfn}[thm]{Definition}

\newtheorem{rem}[thm]{Remark}
\newtheorem{ass}[thm]{Assumption}
\newtheorem{claim}[thm]{Claim}

\newcommand{\calm}{\mathcal{M}}     
\newcommand{\cals}{\mathcal{S}}     
\newcommand{\caln}{\mathcal{N}}

\newcommand{\cali}{\mathcal{I}}
\newcommand{\cala}{\mathcal{A}}

\newcommand{\caly}{\mathcal{Y}}

\newcommand{\la}{{\lambda}}

\newcommand{\by}{\boldsymbol{y}}

\newcommand{\bmu}{\boldsymbol{\mu}}
\newcommand{\bla}{\boldsymbol{\lambda}}
\newcommand{\bga}{\boldsymbol{\gamma}}

\newcommand{\got}{\textsf{GOT}\xspace}

\newcommand{\opt}{\textsf{OPT}\xspace}
\newcommand{\alg}{\textsf{ALG}\xspace}

\newcommand{\ota}{\textsf{OTA}\xspace}

\newcommand{\omkp}{\textsf{OMKP}\xspace}
\newcommand{\fomkp}{\textsf{FOMKP}\xspace}
\newcommand{\omkpar}{\textsf{OMKPAR}\xspace}

\newcommand{\opd}{\textsf{OPD}\xspace}
\newcommand{\dual}{\textsf{Dual}\xspace}

\DeclareMathOperator*{\argmax}{arg\,max}

\def\smallint{\begingroup\textstyle \int\endgroup}

\begin{document}

\title{Competitive Algorithms for the Online Multiple Knapsack Problem with Application to Electric Vehicle Charging}

\author{Bo~Sun\thanks{The Hong Kong University of Science and Technology. Email: {\tt bsunaa@connect.ust.hk}.}\and 
Ali~Zeynali\thanks{University of Massachusetts Amherst. Email: {\tt azeynali@umass.edu}.} \and
Tongxin~Li\thanks{California Institute of Technology. Email: {\tt tongxin@caltech.edu}.} \and
Mohammad~Hajiesmaili\thanks{University of Massachusetts Amherst. Email: {\tt hajiesmaili@cs.umass.edu}.} \and
Adam~Wierman\thanks{California Institute of Technology. Email: {\tt adamw@caltech.edu}.} \and
Danny~H.K.~Tsang\thanks{The Hong Kong University of Science and Technology. Email: {\tt eetsang@ust.hk}.}
}

\begin{titlepage}
\maketitle

\thispagestyle{empty}

\begin{abstract}
We introduce and study a general version of the fractional online knapsack problem with multiple knapsacks, heterogeneous constraints on which items can be assigned to which knapsack, and rate-limiting constraints on the assignment of items to knapsacks. This problem generalizes variations of the knapsack problem and of the one-way trading problem that have previously been treated separately, and additionally finds application to the real-time control of electric vehicle (EV) charging. We introduce a new algorithm that achieves a competitive ratio within an additive factor of one of the best achievable competitive ratios for the general problem and matches or improves upon the best-known competitive ratio for special cases in the knapsack and one-way trading literatures. Moreover, our analysis provides a novel approach to online algorithm design based on an instance-dependent primal-dual analysis that connects the identification of worst-case instances to the design of algorithms. Finally, we illustrate the proposed algorithm via trace-based experiments of EV charging.
\end{abstract}

\end{titlepage}

\section{Introduction}
\label{sec:intro}

Online optimization has become a foundational piece of the design of networked and distributed systems that is used to capture the challenges of decision-making in uncertain environments. Theoretical results have had impact for data center optimization~\cite{adnan2012energy,liu2014greening,luo2013temporal,yang2020online}, video streaming~\cite{spiteri2020bola,wang2016competitive}, energy systems~\cite{alinia2018competitive,Alinia2020online,Guo2017optimal,qureshi2009cutting,Zhang2017}, cloud management~\cite{lucier2013efficient,zheng2016online,Zhang2017,tan2020online_combinatorial}, and beyond.  

Two classical problems within the online optimization literature that have received considerable attention in recent years are the online knapsack problem and the one-way trading problem. In the \emph{online knapsack problem}, an agent must make irrevocable decisions about which items to pack into a knapsack without knowing which items will arrive in the future. In the \emph{one-way trading problem}, an investor must trade a limited amount of one asset to another asset without knowing the future conversion rates. These problems have seen broad application in recent years, e.g., to auction-based resource provisioning in cloud/edge clusters~\cite{Zhang2017,tan2020online_combinatorial}, admission control and routing of virtual circuits~\cite{buchbinder2009design,paris2016online}, and transactive control of distributed energy resources~\cite{alinia2019online,Alinia2020online,alinia2018competitive,sun2020orc}.

These two problems are seemingly very different, and the papers on each tend to use very different algorithmic approaches and analytic techniques, e.g., threat-based algorithms~\cite{El2001,fujiwara2011average}, threshold-based algorithms~\cite{Zhang2017,zhou2008budget,tan2020}, online primal-dual algorithms~\cite{buchbinder2009design,sun2020orc,buchbinder2009online}, online linear programming~\cite{agrawal2014dynamic,yang2019online}, model predictive control~\cite{Guo2017optimal,Lee2018large}, and more.
In addition, within each problem, a wide range of variations have been considered, each motivated by features of different applications. 
For example, versions of online $0/1$ knapsack~\cite{Zhang2017}, online multiple knapsacks, where items can be assigned across multiple knapsacks~\cite{zhou2008budget}, and online fractional knapsack, where each item can be partially admitted~\cite{noga2005online}. Similarly, a wide set of variants of one-way trading have emerged, e.g., with~\cite{El2001} or without~\cite{yang2019online} leftover assets, and concave returns~\cite{Lin2019}. The disconnected nature of these literatures begs the question: \emph{Is it possible for a unified algorithmic approach to be developed or does each variant truly require a carefully crafted approach?}  

\textbf{Contributions of This Paper.}
Despite the differences in approaches and techniques, there are also similarities between these problems that lead one to hope that unification is possible. In this paper, we provide such an algorithmic unification via a generalization of the online knapsack and one-way trading problems, i.e., we show that a single algorithmic approach can be used to provide near optimal algorithms across nearly all previously considered variants of these two problems. 

More specifically, we take motivation from the online electric vehicle (EV) charging problem, which is a prominent problem in energy systems~\cite{alinia2018competitive,alinia2019online,Alinia2020online,Guo2017optimal,Lee2018large,sun2020orc,zheng2014online}. 
In this problem, an operator of an EV charging facility must charge a set of EVs that arrive over time without knowing requests of future arrivals.
Each request includes a charging demand, a charging rate limit, and a departure time before which it must receive charge.
Upon the arrival of each EV, the operator receives its request and schedules its charging to maximize the aggregate value of all EVs.
Most commonly, the operator tries to satisfy EVs' charging demands using a \textit{best effort} policy although drivers desire an on-arrival commitment, which notifies them a guaranteed amount of energy to be delivered upon their arrivals~\cite{alinia2019online,sun2020orc}. 
However, on-arrival commitment adds significant challenges to design online algorithms with theoretical guarantee since it introduces strong temporal coupling for the schedule of each EV, especially when a charging rate limit is also enforced. 
Thus, to mitigate the temporal coupling, this paper focuses on a policy that achieves on-arrival commitment by determining a committed schedule upon arrivals.
We show that this simple policy does not lose too much flexibility in the worst case since it can achieve a nearly optimal competitive ratio, and hence provides a good baseline schedule for further adjusting the charging adaptively over time (see Section~\ref{subsec:examples}). 
Because of the rate constraint, the EV charging problem cannot fit into either the one-way trading or the online multiple knapsack problem (\omkp) directly. However, it can be modeled as a form of a fractional \omkp (\fomkp) with rate constraints that generalizes existing problems in both online knapsack and one-way trading.  Additionally, the resulting problem also captures other applications, such as classical formulations of cloud scheduling~\cite{lucier2013efficient,zheng2016online} and geographical load balancing~\cite{qureshi2009cutting,liu2014greening,adnan2012energy,luo2013temporal} (see Section \ref{subsec:examples}).    

Focusing on this new \fomkp, the goal of the paper is to design algorithms that can achieve nearly the optimal value as the optimal algorithm. Specifically, we aim to develop algorithms that maintain a minimal \emph{competitive ratio}, which is the worst-case ratio of the value of the optimal offline algorithm to that achieved by the online algorithm.

To that end, we focus on a form of algorithms introduced by~\cite{zhou2008budget} in the context of \omkp called online threshold-based algorithms (\ota). The design of this class of algorithms is based on a threshold function $\phi$ that estimates the dual variables of the problem based on the knapsack utilization.  See Section \ref{subsec:ota} for a formal introduction to \ota. While \ota has proven effective in some contexts, the application of the approach is limited due to the fact that designing the threshold function $\phi$ is more art than science, similarly to the difficulties in designing Lyapunov functions for Lyapunov-based control~\cite{qu2018exponential,chen2011convergence} and designing potential functions for the analysis of online scheduling algorithms~\cite{abernethy2009competing,grove1991harmonic}.

In this paper, we present a new systematic approach for designing the threshold functions in \ota. The approach, described in Section \ref{subsec:approach}, uses a novel instance-dependent online primal-dual analysis to design the threshold function directly from a characterization of worst-case instances of the problem. Thus, the task of identifying instances in order to prove a lower bound is unified with the task of designing an algorithm that can (nearly) achieve that bound.  

This new approach yields the design of a threshold function for \ota that provides the first algorithm with a competitive ratio within an additive factor of one of the best achievable competitive ratio for the general problem and matches or improves on the best-known competitive bounds for a wide variety of special cases in the knapsack and one-way trading literatures (see Section \ref{subsuc:summary-of-results}). Specifically, we illustrate the approach for classical one-way trading and two of its recent variants. In all cases the approach yields either the optimal competitive ratio or a competitive ratio that improves upon the state-of-the-art.  

Finally, to illustrate the performance of the algorithm in a specific application, we end the paper  with a brief discussion in Section~\ref{sec:exp} of the problem that motivates our study: EV charging with on-arrival commitment to drivers and rate constraints. Note that the on-arrival commitment of a charging level to drivers is a distinctive feature of this case study that adds significant additional challenges compared to typical papers on online EV charging.  Additionally, it is rare for algorithms for online EV charging to have theoretical guarantees when rate constraints are considered. We present a case study using the Adaptive Charging Network Dataset, ACN-Data, which includes 50,000 EV charging sessions \cite{LeeLiLow2019a}. Here, we show that our algorithm, which uses an adaptive utilization-based threshold, improves over the most common prior approaches for related online knapsack problems such as~\cite{El2001}, which use a fixed threshold policy. Our design targets the worst-case performance, and we see an over 40\% decrease (nearly a factor of 2 improvement) in the worst-case when utilization is high, while also achieving an around 20\% decrease on average.

In summary, in this paper we make the following  contributions.
\begin{itemize}
	\item We introduce and study a generalization of the fractional online multiple knapsack problem (\fomkp) that is motivated by the EV charging problem and unifies the online knapsack and one-way trading literatures.   
	\item We develop an approach for designing online threshold-based (\ota) algorithms based on a novel instance-dependent online primal-dual analysis that connects the characterization of worst-case instances to the design of online algorithms.  
	\item We design an algorithm for the general \fomkp problem with rate constraints that has a competitive ratio within an additive factor of one from the optimal competitive ratio. The algorithm also matches or improves upon best-known results in specific cases covered by 
	recent papers, e.g.,~\cite{yang2019online,Lin2019,Zhang2017,zhou2008budget}.   
	\item We illustrate the performance of the algorithm in the context of EV charging using a trace-based case study, showing a decrease in the worst case by up to 44\% and of the average case by around 20\% as compared to fixed threshold policies, the most common approach in prior work on online knapsack problems.  
\end{itemize}  

{\bf Related Work.} The online optimization problem considered in this work is related to, and unifies, problems that have originated from a wide range of applications. In the following, we briefly overview the problems that can be considered as variants or special cases of the \fomkp that we study. Details of these problems are discussed in Section~\ref{subsec:examples}, where the relationship to the \fomkp we study is shown formally.

\textit{The Online Knapsack Problem.} Our work generalizes a class of problems known as the online knapsack problems (OKPs)~\cite{chakrabarty2008online,zhou2008budget,Zhang2017,tan2020}, which are online variants of the well-studied knapsack problem~\cite{pisinger2005hard}. Since there is no competitive online algorithm for general OKPs, studies typically assume that the weight of each item is small and the value-to-weight ratio is bounded from both below and above. Under this \emph{infinitesimal assumption}, competitive algorithms can be derived. For example, $(1+\ln\theta)$-competitive online algorithms have been designed for the classical online $0/1$ knapsacks~\cite{Zhang2017,zhou2008budget}, where $\theta$ is the ratio of the upper and lower bounds of the value-to-weight ratio.

An important generalization of the classical problem is to the case of multiple knapsacks.  In an online multiple knapsack problem (OMKP), an operator has a set of knapsacks with heterogeneous capacities. Items  arrive sequentially, each with an associated weight and value, and an operator decides whether to accept each item and where to pack it if it is accepted. Here, \cite{zhou2008budget} presents an algorithm that can achieve a competitive ratio of $1+\ln\theta$ under the infinitesimal assumption.

A more general version of \omkp is the online multiple knapsack problem with assignment restrictions (\omkpar)~\cite{Kellerer2004}. In this problem, each item is associated with a subset of knapsacks and is restricted to be packed in this subset. Here progress has not been made, even under the infinitesimal assumption.  In this paper, we study a generalization of a \emph{fractional} online multiple knapsack problem (\fomkp), where fractional refers to the fact that items can be assigned such that a fraction goes to each of multiple knapsacks. Fractional assignment is an important feature of many applications, e.g., EV charging~\cite{Alinia2020online,zheng2014online}, cloud scheduling~\cite{lucier2013efficient,zheng2016online}, geographical load balancing~\cite{qureshi2009cutting,liu2014greening}. Furthermore, there is a strong connection between the fractional version of knapsack problems and the integral version with an infinitesimal assumption. In particular, algorithms for fractional versions of the problem also can be used for the integral case under the infinitesimal assumption (see Section~\ref{subsec:ota}). Prior work on fractional knapsack problems includes \cite{noga2005online,iwama2002removable,lueker1998average}. However, none of these works include rate constraints, which are core to EV scheduling.  

There are also other variants of generalized OKPs that have been considered in the literature, such as the online fractional packing problem~\cite{buchbinder2009online} and the online multi-dimensional knapsack problem~\cite{Zhang2017}. In these settings, the best-known competitive ratios depend on the number of knapsacks $M$. For example,~\cite{buchbinder2009online} gives a competitive ratio $O(\ln M)$ for the online fractional packing problem and the state-of-the-art competitive ratio for the online multi-dimensional knapsack problem is $O(M)$, shown in~\cite{Zhang2017}. This variant of the OKPs is harder than the FOMKP considered in this paper. Particularly, the FOMKP relaxes the above problems by allowing the amount of items allocated to each knapsack to be a continuous variable. This allows us to achieve competitive ratios that are independent of the dimension of knapsacks.

\textit{The One-Way Trading Problem.} This problem was first introduced and studied by EI-Yaniv \textit{et al.}~\cite{El2001} under the assumption that the price is bounded from above and below. It is shown that a threat-based algorithm can achieve a competitive ratio of $O(\ln\theta)$, where $\theta$ is the ratio of upper and lower bounds on the prices (or conversion rates). Follow-up works mainly focus on variants with different assumptions on the prices, e.g., known distribution of prices~\cite{fujiwara2011average}, unbounded prices~\cite{chin2015competitive}, and interrelated prices~\cite{schroeder2018optimal}, or with different performance metrics, e.g., competitive difference~\cite{wang2016competitive}. Recently, Yang \textit{et al.}~\cite{yang2019online} considers a bounded price but a different problem setting, in which the investor is unaware of whether a price is the last one and may have leftover assets due to the sudden termination of trading process. This work designs a threshold-based algorithm that is shown to be $(1+\ln\theta)$-competitive. A further extension was given by Lin \textit{et al.}~\cite{Lin2019}, which generalizes the linear objective function to a concave one. In this paper, we consider a problem that generalizes all these variants, and we present a single algorithm that matches or improves upon the competitive ratio in each case. 

\textit{The Online EV Charging Problem.} The task of managing the charging of EVs is a prominent algorithm challenge for smart energy systems \cite{Guo2017optimal,Lee2018large,Alinia2020online}. A number of variants of online EV charging have been tackled in prior work~\cite{Alinia2020online,alinia2018competitive,alinia2019online,sun2020orc,zheng2014online}. Despite the fact that many algorithms achieve good performances (on average) in practice, analyzing algorithms to provide worst-case guarantees for online EV charging is notoriously difficult, and existing algorithms, such as model predictive control (MPC), are known to be vulnerable to adversarial inputs. For example, in~\cite{Guo2017optimal}, the EV charging problem is modeled as an online linear program and the authors show that MPC is equivalent to an offline solver when the costs are uniformly monotone and has a competitive ratio $O(\theta)$ otherwise.  Most commonly, online EV algorithms have been allowed to adaptively determine the EV charging schedule over time, thus providing no guarantees to a driver at arrival about the total charge they will receive, e.g., \cite{Guo2017optimal,Lee2018large,Alinia2020online}.  This approach simplifies the analysis; however, charging with on-arrival commitment~\cite{alinia2019online,sun2020orc} is what is desired by drivers. In this paper, we consider charging with on-arrival commitment, in which the charging schedule is determined upon EVs' arrivals and will be kept unchanged. Competitive analysis of this setting is known to be challenging, e.g., \cite{alinia2019online} has shown that no bounded competitive ratio can be achieved in general. However, in this paper, we give an online algorithm with a nearly optimal competitive ratio under a set of regularity conditions that are standard in the online knapsack literature (Assumption~\ref{ass:value-function}). Further, our algorithm achieves its competitive ratio when charging rate constraints are included, which add additional challenges and are typically not considered in online EV charging formulations.

\section{The Online Fractional Multiple Knapsack Problem}

This paper focuses on a novel generalization of the fractional Online Multiple Knapsack Problem (\fomkp). In a classical \omkp, each arriving item can only be packed into one of the knapsacks.  In contrast, in the \fomkp each knapsack $m\in\calm$ is allowed to accept a fraction of the entire size of each item $n\in\caln$, i.e., the accepted item can be packed into multiple knapsacks, each of which accommodates a portion of the total accepted item. Additionally, the formulation we consider incorporates heterogeneous rate-limiting constraints depending on the knapsack and the item to be packed.  This generalization is motivated by issues in practical problems such as online EV charging and enables the unification of a wide range of classical online algorithms problems, many of which are traditionally approached with contrasting algorithmic techniques. 

\subsection{Problem Statement}
\label{sec: problem}
We consider a setting where items in a set $\caln$ need to be packed into  knapsacks in a set $\calm$. For each item $n$, the operator decides an \textit{assignment vector} denoted by $\by_n:=(y_{n1},\dots,y_{nM})$, where each entry $y_{nm}$ is the fraction of item $n$ packed into the knapsack $m$. The assignment vector $\by_n$ must satisfy the following constraints. The set of assignment vectors $\by_n$ satisfying~\eqref{p:1}-\eqref{p:3} is  $\caly_n$:
\begin{align}
	\label{p:1}
	\sum\nolimits_{m\in\calm} y_{nm} \le  D_n, \quad &  \forall n\in\caln,\\
	\label{p:2}
	\sum\nolimits_{n\in\caln}y_{nm} \le C_m, \quad  & \forall m\in\calm,\\
	\label{p:3}
	0 \le y_{nm}\le  Y_{nm}, \quad  & \forall n\in\caln,  m\in\calm.
\end{align}

The first constraint~\eqref{p:1} is a \textit{demand constraint}, which bounds the total accepted fractions of the item $n$ by the item size $D_n$. The second constraint \eqref{p:2} ensures the assigned fractions $\by_n$ satisfy the \textit{capacity constraints}, of the heterogeneous knapsacks, where $C_m$ is the maximum capacity of the knapsack $m$. The third constraint~\eqref{p:3} is a \emph{rate constraint}, which ensures that at most $Y_{nm}$ fraction of the item $n$ can be packed into the knapsack $m$.  This constraint also allows imposing heterogeneous restrictions on which items can be packed into which knapsacks, e.g., by setting $Y_{nm}= 0$ for knapsacks that are not available to the item $n$.  Due to the algorithmic difficulties it creates, the rate constraint~\eqref{p:3} is rarely studied in the literature of \omkp. Note that all three of these constraints are crucial to capturing applications such as EV scheduling.  We highlight this in Section \ref{subsec:examples}.

\subsubsection{Objective Function}  The objective of an \fomkp is to optimize the value of packed items subject to the constraints \eqref{p:1}-\eqref{p:3}.  More formally, let ${g_n(\by_n): \caly_n \to \mathbb{R}^+}$ denote the \textit{value function} of the item $n$. This function models the value of the item $n$ with an assignment vector $\by_n$.
Optimizing over assignment vectors that satisfy~\eqref{p:1}-\eqref{p:3}, the offline version of \fomkp can be summarized as
\begin{align}
	\label{p:omkp}
	({\rm Offline}\ \fomkp)\quad\max_{\by_n}\quad  \sum\nolimits_{n\in\caln}  g_n(\by_n),
	\quad \text{ s.t. }\quad& {\rm constraints\ } \eqref{p:1}-\eqref{p:3}.
\end{align}

In this paper, we follow standard practice in the literature and focus on value functions that are separable  or aggregate functions, e.g.,  \cite{qureshi2009cutting,adler2011algorithms,Alinia2020online}. These definitions are useful in order to prove competitive bounds. 

\begin{dfn}[Aggregate Function]\label{ass:aggregate-value-function}
	The aggregation of allocations contributes to the value function, i.e., ${g_n(\by_n) = g_{n}(\sum_{m\in\calm}y_{nm})}$.
\end{dfn}

\begin{dfn}[Separable Function]\label{ass:separable-value-function}
	The value function is separable over allocations, i.e.,\\ ${g_n(\by_n) = \sum_{m\in\calm} g_{nm}(y_{nm})}$, where $g_{nm}(y_{nm})$ is the value of allocating $y_{nm}$ of item $n$ to knapsack $m$.
\end{dfn}

Both definitions capture a broad range of applications. For example, interpreting different knapsacks as different time slots allows us to model the EV charging application (described in detail in Section \ref{subsec:examples}) using an aggregate function. Additionally, the notion of separable value functions captures the phenomenon of different values for allocating items to different knapsacks, which  is of interest to applications such as geographical load balancing (described in detail in Section \ref{subsec:examples}). Note that when the value function is linear, an aggregate function is by definition a separable function. Additionally, when there is only one knapsack, both definitions are equivalent.

In addition, we assume that the value functions satisfy the following regularity conditions.
\begin{ass}\label{ass:value-function}
	The value functions $\{g_n:n\in\caln\}$ satisfy:
	
	(i) for any $n\in\caln$, $g_n(\cdot)$ is non-decreasing, differentiable and concave in $\caly_n$; 
	
	(ii) for any $n\in\caln$, $g_n(\boldsymbol{0}) = 0$; 
	
	(iii) the partial derivative of $g_n(\cdot)$ is bounded, i.e., there exist constants  $L,U>0$ such that for any $n\in\caln$ and $m\in\calm$,  $L \le \frac{\partial g_n}{\partial y_{nm}} \le U$.
\end{ass}

These are again classical assumptions in the online knapsack literature~\cite{zhou2008budget,Zhang2017,Lin2019,tan2020}. The first condition ensures that the value function is smooth and has diminishing returns. The second condition indicates that packing no item earns no value. The third condition requires that the partial derivatives of the value function are lower and upper bounded by $L$ and $U$, respectively. $L$ and $U$ are assumed to be known and let $\theta := U/L$ denote the fluctuation ratio.

\subsubsection{Online Formulation}

The parameters described to this point can be encapsulated in two sets, $\cals$ and $\cali$. 
The set $\cals:=\{\{C_m\}_{m\in\calm}, L, U\}$ includes the capacity information, and the partial derivative bounds of value functions. We call $\cals$ the setup information since it is known from the start and can be used for the design of online algorithms.
The set $\cali: = \{D_n, \{Y_{nm}\}_{m\in\calm}, g_n(\cdot) \}_{n\in\caln}$ contains the information corresponding to each item, including the item size, rate limits, and value functions. 
$\cali$ is also called arrival information.
The focus of this paper is an online formulation where the arrival information of each item is revealed upon its arrival. Thus, the algorithm only knows the causal information $\{D_k, \{Y_{km}\}_{m\in\calm}, g_k(\cdot)\}_{k=1,\dots,n}$ for the decision-making of item $n$. 

Our goal is to design an online algorithm that makes an irrevocable assignment decision based only on causal information and still performs nearly as well as the offline optimum. Particularly, we evaluate the performance of an online algorithm under the competitive analysis framework. Given setup information $\cals$, let $\opt(\cali)$ and $\alg(\cali,\cala)$ denote the offline optimum of the \fomkp and the value achieved by an online algorithm $\cala$ under an arrival instance $\cali$, respectively. The competitive ratio of the online algorithm $\cala$ is defined as
$
\texttt{CR}(\cala) = \max_{\cali \in\Omega} \frac{\opt(\cali)}{\alg(\cali,\cala)},
$
where $\Omega$ denotes the set of all instances that satisfy Assumption~\ref{ass:value-function}. An algorithm $\cala$ is  $\alpha$-competitive if $\texttt{CR}(\cala)\leq \alpha$.

\subsection{Examples} \label{subsec:examples}

The generalization of \fomkp introduced above is novel and serves to unify a wide variety of classical online problems. On one hand, it is a generalization of two classical online optimization problems: the one-way trading and the online knapsack problems, bringing together two streams of research that were previously treated separately in the literature. On the other hand, it is the core model of many practical online decision-making applications such as the EV charging and online geographical load balancing problems. We highlight these connections explicitly in the following. Previous studies on those problems are summarized at the end of Section~\ref{sec:intro} 

{\bf The One-Way Trading Problem.} 
One-way trading~\cite{El2001} is a classical online problem where an investor aims to trade a limited amount of one asset (e.g., dollar) to another asset (e.g., yen). The sequence of the trading prices is not known to the investor ahead of time and the investor must decide the amount of traded assets for each price except the last one, and trade remaining assets at the last price. The objective is to maximize the total profits of the entire trading process.

To see that one-way trading is a special case of \fomkp, observe that \fomkp reduces to the following generalized one-way trading (\got) problem when the number of knapsacks is $M = 1$ and the rate limit is equal to the item size, i.e., $Y_{n1} = D_n$,
\begin{align}\label{p:got}
	({\rm Offline}\ {\got})\quad\max_{y_n} \quad \sum\nolimits_{n\in\caln} g_n(y_n), \quad {\rm s.t.}\ \sum\nolimits_{n\in\caln}y_n \le C,\quad
	0\le y_n\le D_n, \quad\forall n\in\caln,
\end{align}
where we omit the knapsack index for simplicity. Notice that  
\got includes all previous variations of the one-way trading problem in the literature, e.g.,~\cite{yang2019online,El2001}, and additionally extends the most general one-way trading model in~\cite{Lin2019} by including the rate constraint $y_n\le D_n$.

{\bf The Online Multiple Knapsack Problem with Small Weights.} While our focus is on the \fomkp, there are strong connections between the fractional and integral versions. In the integral version, items must be assigned to a single knapsack and cannot be split between multiple knapsacks. Thus, \omkp~\cite{zhou2008budget} is an online integer linear program with multiple capacity constraints. 

To see the connection between \fomkp and \omkp, note that, when the assignment set $\caly_n$ is restricted to the following discrete set  
\begin{align}\label{eq:dfn-discrete-set}
	\tilde{\caly}_n := \left\{\by_n: \sum\nolimits_{m\in\calm} y_{nm} \le D_n, y_{nm}\in \{0,Y_{nm}\}, \forall m\in\calm \right\}, 
\end{align}
where $Y_{nm} \in \{0, D_n\}$, \fomkp becomes an \omkp with assignment restrictions (\omkpar)~\cite{Kellerer2004}, which is a generalization of \omkp.
Under the assumptions that (i) each item can be packed into any one of all knapsacks, i.e., $Y_{nm} = D_n, \forall m\in\calm$, and (ii) the value of each item is independent of knapsacks, e.g., $g_n(\by_n)$ is aggregate, \omkpar reduces to \omkp.

Under an infinitesimal assumption that is standard in the literature (i.e., the weights/sizes of items are much smaller than the knapsack capacities)~\cite{zhou2008budget,tan2020}, the online algorithms designed for the \fomkp can be converted to an integral version to solve the \omkpar with the same competitiveness. This is formally highlighted in Remark~\ref{rem:ota-integral}. 
Thus, the \fomkp can be considered as a generalization of \omkpar under an infinitesimal assumption. Note that this assumption is typically satisfied in practical applications. 
For example, the energy demand required by a single EV is much smaller than the capacity of the garage and the resource required by a single job or VM is much smaller than the capacity of servers. Additionally, such an infinitesimal assumption is typically required for online (non-fractional) knapsack problems, e.g., \cite{zhou2008budget,Zhang2017,tan2020}, since no non-trivial competitive results are known without assumptions~\cite{zhou2008budget}. 

{\bf Online EV Charging with On-Arrival Commitment.} 
In an online EV charging problem, an operator of an EV charging facility charges a set $\caln$ of EVs that arrive sequentially in a set $\calm$ of time slots. 
The facility capacity (i.e., the available total charging power) at time $m$ is $C_m$. Each EV $n$ is characterized by parameters $\{\calm_n, D_n, Y_{n}, g_n(\cdot)\}$, where $\calm_n:=\{t^a_n,\dots,t^d_n\}$ denotes the available window of EV $n$ with $t_n^a$ and $t_{n}^d$ as the arrival and departure times, $D_n$ denotes the charging demand, $Y_{n}$ is the charging rate limit, and $g_n(\cdot)$ is a value function of the total received energy of EV $n$. By letting $Y_{nm} = Y_n, \forall m\in\calm_n$, and $Y_{nm} = 0, \forall m\in\calm\setminus\calm_n$, the offline EV charging problem is precisely the problem~\eqref{p:omkp}.
In order to ensure the quality of service (QoS) for drivers in the online setting, the operator needs to guarantee how much energy each EV will be provided when the EV first plugs in, i.e., on-arrival commitment.
Crucially, we consider a specific approach to achieve such on-arrival commitment by committing to a charging schedule upon the arrival of each EV. 
Particularly, upon the arrival of each EV $n$, the operator commits to a charging schedule $\by_n = (y_{n1},\dots,y_{nM})$, where $y_{nm}$ is the charging rate of EV $n$ at time $m$, and obtains a value $g_n(\by_n):=g_n(\sum_{m\in\calm}y_{nm})$ from charging EV $n$ by the schedule $\by_n$. The goal of the operator is to maximize the total values of all EVs. We can then see the online EV charging with on-arrival commitment exactly fits into \fomkp.

Note that our formulation considers the charging schedules as irrevocable decisions, i.e., $\by_n$ will be fixed on arrival of EV $n$. 
This formulation restricts the decision space of online EV charging since the operator has the flexibility of re-optimizing the charging schedules over time in practice.
However, to the best of our knowledge, no online EV charging algorithms with re-optimizing in the literature have achieved bounded competitive ratios and on-arrival commitment simultaneously. 
As is shown in next section, under our formulation with the most intuitive way to guarantee on-arrival commitment, we can design online algorithms that can achieve nearly optimal competitive ratios within an additive factor of one. Such results indicate our formulation  does not lose too much flexibility. 

Moreover, compared to \fomkp, the online EV charging, which models time slots as knapsacks, actually has additional information, i.e., past time slots will not be available for assignment of future arrivals. 
Thus, the operator can adjust the charging rates slot by slot based on the committed charging schedules to further improve the competitive ratios. 
Particularly, if the committed schedules only consume partial capacity of the current time slot, the operator can allocate the remaining capacity to the EVs whose demand constraints and rate-limiting constraints of current time slot are not binding.
Since value functions of all EVs are non-decreasing in their total received energy, the resulting schedules after such adaptive adjustment can achieve the total value no worse than the committed schedules, and hence improve the empirical competitive ratios in practice.   

\textbf{Cloud Scheduling and Geographical Load Balancing.} \fomkp with separable value functions can be viewed as an extension of the classical task of job scheduling in a cloud, e.g., ~\cite{lucier2013efficient,zheng2016online}, which includes the so-called geographical load balancing problem, e.g.,~\cite{qureshi2009cutting,liu2014greening,adnan2012energy,luo2013temporal}. More concretely, consider a service provider, e.g., Netflix or YouTube, with a set of geographically distributed  infrastructures for performing video processing jobs, e.g., encoding video files into multiple quality levels~\cite{wang2019sustainable} to be used in ABR streaming algorithms~\cite{spiteri2020bola}. These video processing jobs are heterogeneous and typically have flexibility in execution across different locations without violating QoS requirements.  

While the problems of job scheduling and geographical load balancing have been mainly studied separately, with \fomkp, the joint problem could be tackled.
Formally, in this model a knapsack $m\in\calm$ denotes a pair of time and location, i.e., with $T$ time slots and $K$ locations, we have $T\times K = M$ knapsacks. There is a set of $\caln$ (e.g., video) jobs, each characterized by $\{\calm_n, D_n, Y_n, \{g_{nm}(\cdot)\}_{m\in\calm_n}\}$, where $\calm_n$ is the set of slot/locations available for job $n$, $D_n$ is the computation demand, and $Y_n$ is the job parallelism bound~\cite{lucier2013efficient} that captures the maximum number of processing units (or servers) that can be allocated to a single job at any given slot/location. Lastly, $g_{nm}(\cdot)$ captures the value (or cost, e.g., energy~\cite{qureshi2009cutting} or bandwidth~\cite{adler2011algorithms}) of executing job $n$ at slot/location $m$. In this model, the deadline constraints and QoS constraints, e.g., infeasibility of running jobs in far locations~\cite{gupta2019combining} could be captured using the rate constraints. In particular, by letting $Y_{nm} = Y_n, \forall m\in\calm_n$, and $Y_{nm} = 0, \forall m\in\calm\setminus\calm_n$, the deadline and QoS constraints of job $n$ could be enforced. 

\section{Algorithms \& Results}
The key challenge when designing online algorithms for \fomkp results from the capacity constraints that couple the knapsack decisions of all items. Formally, one can understand the difficulty created by this via the dual variables.  In particular, if the optimal dual variables associated with the capacity constraints were to be given, the \fomkp could be decoupled across items and the optimal knapsack decision for each item could be determined by maximizing a pseudo-utility that is defined as the value of the item minus a linear cost using the optimal dual variables as the price.
However, in the online setting, the optimal dual variables cannot be known since the future items' information is unavailable. 
Thus, we can only use an adaptive estimation of the dual variables to solve the online problem based on causal information.

This intuition leads to an important algorithmic idea at the core of literature focusing on the \omkp, e.g., \cite{zhou2008budget}: \textit{estimate the dual variable as a function of the knapsack utilization}, i.e., the fraction of the consumed knapsack capacity. We refer to this estimation function as a \textit{threshold function} defined below.

\begin{dfn}[Threshold Function]
	A threshold function $\phi_m(w)$ of a knapsack $m$ is a non-decreasing function that evaluates the price (or marginal cost) of packing items to the knapsack $m$ when its utilization $w$ is within capacity $w\in[0,C_m]$ and $\phi_m(w) = +\infty$ when $w\in(C_m,+\infty)$.
\end{dfn}

The approach we follow in this paper is to design a class of online threshold-based algorithms (\ota) for \fomkp. In the following, we first formally introduce the \ota class of algorithms, followed by key ideas for competitive analysis, and finally we present our main competitive results. Proofs are deferred to the next sections.

\subsection{Online Threshold-Based Algorithms (\ota)}
\label{subsec:ota}

The \ota framework has been developed in the context of \omkp
by Zhou \textit{et al.}~\cite{zhou2008budget}. The basic idea of \ota is to use threshold functions to estimate the cost of a (non-fractional) knapsack assignment under infinitesimal assumptions and determine the online solution by solving a pseudo-utility maximization problem, i.e., the value from the item minus the cost of packing it. 
We extend this idea to \fomkp, where the estimated cost of assignment decisions is estimated by an integral of the threshold function.

More formally, given a set of threshold functions $\phi:=\{\phi_m(\cdot)\}_{m\in\calm}$, the details of the \ota algorithm are provided in Algorithm~\ref{alg:ota-overall}.
Let $\by_n^*$ be the online assignment decision produced by $\ota_\phi$.
Let $\smash{w_m^{(n)} = \sum_{k=1}^{n-1} y_{km}^*}$ denote the utilization of a knapsack $m$ observed upon the arrival of item $n$, which is the total fraction of the occupied knapsack capacity by the previous $n-1$ items. 
$\ota_\phi$ uses the utilization as the state for decision-making.
Since $\phi_m(u)du$ can be considered as the cost of assigning a small bit of the item to knapsack $m$ when its current utilization is $u$, we can estimate the total cost of assigning $y_{nm}$ fraction of the item $n$ to the knapsack $m$ by an integral ${\smallint_{w_m^{(n)}}^{w_m^{(n)}+ y_{nm}}\phi_m(u)du}$. Therefore, the second term of the pseudo-utility in the problem~\eqref{p:utility-maximization} is the total estimated cost of a knapsack assignment $\by_n$.
Since $\phi$ is a non-decreasing function, the estimated cost is a convex function in $\by_n$ and this pseudo-utility maximization problem~\eqref{p:utility-maximization} can be efficiently solved. By definition, the threshold function becomes infinite when the utilization exceeds the capacity, which avoids violating the knapsack capacities.

The above highlights that $\ota_\phi$ is fully parameterized by the threshold function $\phi$. Thus, the key design question is how to determine the threshold function $\phi$ such that $\ota_\phi$ is competitive with the offline optimum. Interestingly, prior works, e.g., \cite{yang2019online,zhou2008budget,Zhang2017}, use the same threshold function for the classical one-way trading and online $0/1$ knapsack problems. However, this threshold function is obtained through trial and error, and it is unclear how to design  threshold functions for more complicated variations or other settings. The crucial bottleneck for progress of these algorithms is understanding how to design the threshold function, and the key idea in our work is a systematic approach for the design of such threshold functions, which we describe in the next section. 

\begin{rem}\label{rem:ota-integral}
	Our focus is on fractional knapsack problems, but $\ota_\phi$ can be easily converted into an integral version for solving the non-fractional problem \omkpar, which restricts the schedule to a discrete set $\tilde{\caly}_n$ defined in equation~\eqref{eq:dfn-discrete-set}~\cite{Kellerer2004}.
	To do so, the estimated cost of packing item $n$ to knapsack $m$ is approximated by ${\smallint\nolimits_{w_{m}^{(n)}}^{w_{m}^{(n)}+ y_{nm}}}\phi_m(u)du \approx \phi_m(w_{m}^{(n)}+ y_{nm}) y_{nm}$.  Under the infinitesimal assumption, this approximation is accurate enough and the integral \ota can achieve the same competitive ratio for \omkpar as that of \ota for \fomkp. 
\end{rem}

\begin{algorithm}[t]
	\caption{Online Threshold-Based Algorithm with Threshold Function $\phi$ ($\ota_\phi$)}
	\label{alg:ota-overall}
	\begin{algorithmic}[1]
		\State \textbf{input:} threshold function $\phi :=\{\phi_m(\cdot)\}_{m\in\calm}$, and initial knapsack utilization $w_{m}^{(1)} = 0, \forall m\in\calm$;
		\While{item $n$ arrives}
		\State observe item size $D_n$, rate limits $\{Y_{nm}\}_{m\in\calm}$, and value function $g_n(\cdot)$;
		\State determine knapsack assignment $\by_n^*$ by solving the pseudo-utility maximization problem
		\begin{align}\label{p:utility-maximization}
			\by_n^* = \argmax_{\by_n\in\caly_n} \quad g_n(\by_n) - \sum\nolimits_{m\in\calm}\smallint\nolimits_{w_{m}^{(n)}}^{w_{m}^{(n)}+ y_{nm}}\phi_m(u)du;
		\end{align}
		\State update the utilization $w_{m}^{(n+1)} = w_{m}^{(n)} + y_{nm}^*, \forall m\in\calm$.
		\EndWhile
	\end{algorithmic}
\end{algorithm}

\subsection{Key Idea: Designing the Threshold Function via Instance-dependent Online Primal-dual Analysis }
\label{subsec:approach}

The fundamental challenge when developing an $\ota_\phi$ algorithm is the design of the threshold function $\phi$.  The key idea of the approach proposed in this paper is to design $\phi$ using an instance-dependent primal-dual analysis that extracts the design of the threshold function from the identification of a worst-case instance.  

The use of online primal-dual (\opd) analysis for \ota stems from the work of~\cite{buchbinder2009design}.
The key idea of the \opd approach is to construct a feasible dual solution based on the online solution produced by the online algorithm to be analyzed, and then build the upper bound of the offline optimum using the feasible dual objective based on weak duality~\cite{boyd2004convex}. More concretely, since $\ota_{\phi}$ is only parameterized by the threshold function $\phi$, the performance of $\ota_{\phi}$ can be denoted by $\alg(\cali,\phi)$. 
Let $\dual(\cali,\phi)$ denote the objective of the dual problem of \fomkp evaluated at the constructed dual solution. Therefore, $\dual(\cali,\phi)$ is also a function of $\phi$. This means that the \opd technique allows the design of $\phi$ to be viewed as a search for $\phi$ such that:
\begin{align}\label{eq:opd}
	\alpha \alg(\cali,\phi) \ge  \dual(\cali,\phi) \ge \opt(\cali), \forall \cali\in\Omega.
\end{align}
The first inequality holds only under certain sufficient conditions, i.e.,  $\phi$ must satisfy a set of differential equations parameterized by $\alpha$, see \cite{Devanur2012,tan2020,Zhang2017} for examples.
The second inequality comes from weak duality and holds if the constructed dual solution is feasible. 
The classical \opd mainly focuses on designing the sufficient conditions to reduce the gap in the first inequality but neglects the possibility of reducing the weak duality gap in the second inequality, which also makes a difference to derive a smaller competitive ratio $\alpha$.

The classical approach for designing such a $\phi$ in the literature (e.g.,~\cite{buchbinder2009design,Devanur2012,buchbinder2009online,tan2020}) uses the primal-dual relationship (e.g., weak duality, KKT conditions) between an offline primal problem and its dual. This viewpoint does not rely on understanding instances of particular online optimization problems.
However, the gap between $\opt(\cali)$ and $\dual(\cali,\phi)$ not only depends on the constructed dual solution, but also the constraint coefficients of the primal problem. Thus, under different instances, the dual objective based on the primal constraints in the same offline formulation may lead to a loose upper bound. 

The novelty of our approach is the construction of instance-dependent offline formulations by adding constraints to the primal problem that are constructed based on online solutions, and then utilizing the corresponding dual objectives to bound the offline optimum. In this way, we actually perform an instance-dependent \opd analysis. Moreover, by focusing on the worst-case instances, this approach yields threshold functions that are tuned to the challenges of the online problem, and are tight for the worst case.  

While the application of this approach is complex for the general case of \fomkp, it can be illustrated concretely in the specific case of the generalized one-way trading (\got) described in the problem \eqref{p:got}. In that setting, the following lemmas (see details in Section~\ref{sec:got}) provide a simple, concrete illustration of the approach. First, Lemma~\ref{lem:sufficient} provides a sufficient condition on $\phi$ that can ensure $\ota_\phi$ is $\alpha$-competitive. 

\begin{lem}
	\label{lem:sufficient}
	Under Assumption~\ref{ass:value-function}, $\ota_\phi$ for \got is $\alpha$-competitive if the threshold function $\phi$ is
	\begin{align*}
		\phi(w)=
		\begin{cases}
			L&  w\in[0,\beta)\\
			\varphi(w) & w\in[\beta,C]
		\end{cases},
	\end{align*}
	where $\beta \in [0,C]$ is a utilization threshold and $\varphi$ is a non-decreasing function, and $\phi$ satisfies
	\begin{align}\label{eq:ode-sufficient}
		\begin{cases}
			\varphi(w) C \le \alpha \smallint_{0}^{w} \phi(u)du, w\in[\beta, C], \\
			\varphi(\beta) = L, \varphi(C) \ge U. 
		\end{cases}
	\end{align} 
\end{lem}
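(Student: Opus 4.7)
The plan is to prove $\alpha$-competitiveness of $\ota_\phi$ via an online primal-dual argument, driven by the final knapsack utilization $w^\star := \sum_n y_n^*$. First I would establish the lower bound $\alg \ge \int_0^{w^\star}\phi(u)\,du$: because $y_n = 0$ is always admissible in the pseudo-utility subproblem \eqref{p:utility-maximization} with objective value zero, the maximizer $y_n^*$ must attain non-negative pseudo-utility, so $g_n(y_n^*) \ge \int_{w^{(n)}}^{w^{(n+1)}}\phi(u)\,du$, and summing telescopes.

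Next I would split on $w^\star$. If $w^\star < \beta$, then $\phi \equiv L$ on the used range, and Assumption \ref{ass:value-function}(iii) gives $g_n'(y)\ge L$, making the pseudo-utility non-decreasing in $y$; hence the algorithm takes $y_n^* = D_n$ for every item while leaving capacity slack, so $\alg = \sum_n g_n(D_n) \ge \opt$. In the main case $w^\star \ge \beta$, I would set $\lambda := \varphi(w^\star)$ as the dual variable for the capacity constraint in the Lagrangian relaxation of \got, and aim to derive the pointwise comparison
\[
g_n(y_n^{\opt}) \le g_n(y_n^*) + \lambda\,(y_n^{\opt} - y_n^*) \qquad \forall n,
\]
which follows from concavity of $g_n$ together with the first-order conditions of pseudo-utility maximization (in particular $g_n'(y_n^*) = \varphi(w^{(n+1)}) \le \lambda$ in the interior regime $y_n^* \in (0,D_n)$, and $y_n^{\opt}\le D_n = y_n^*$ when the rate limit binds). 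Summation, using $\sum_n y_n^{\opt} \le C$ and $\sum_n y_n^* = w^\star$, then gives $\opt \le \alg + \lambda(C - w^\star)$.

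The competitive ratio drops out arithmetically. The lemma's hypothesis $\varphi(w)C \le \alpha\int_0^w \phi$ gives $\lambda C \le \alpha \int_0^{w^\star}\phi$, while monotonicity of $\phi$ gives $\lambda w^\star = \varphi(w^\star) w^\star \ge \int_0^{w^\star}\phi$; subtracting produces $\lambda(C-w^\star) \le (\alpha-1)\int_0^{w^\star}\phi \le (\alpha-1)\,\alg$, and combining with the previous step yields $\opt \le \alpha\,\alg$, as desired.

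The main obstacle I anticipate is the pointwise comparison in the case $y_n^* = D_n$ (binding rate limit): the direct concavity bound uses $g_n'(D_n)$, which can be strictly below $\lambda = \varphi(w^\star)$ in hindsight, causing the desired inequality to fail from first principles. Bridging this gap is exactly where the paper's instance-dependent primal-dual trick enters -- rather than use a generic Lagrangian of \got, one augments the primal with constraints shaped by the online trajectory $\{w^{(n)}\}_n$ whose dual absorbs the missing slack. The two boundary conditions $\varphi(\beta) = L$ (which stitches the easy and main cases together at $w^\star = \beta$) and $\varphi(C) \ge U$ (which ensures no item's marginal value exceeds the maximum threshold unaccounted for) are the hinges that make the whole argument close.
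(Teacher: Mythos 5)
Your overall plan matches the paper's: an online primal-dual analysis driven by the final utilization $w^\star=w^{(N+1)}$, a case split on $w^\star \lessgtr \beta$, a dual variable $\lambda=\varphi(w^\star)$, the telescoping lower bound $\alg\geq \smallint_0^{w^\star}\phi(u)\,du$ from pseudo-utility nonnegativity, and the closing arithmetic combining $\varphi(w)C\leq\alpha\smallint_0^w\phi$ with the boundary conditions.

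The one genuine gap is the pointwise comparison $g_n(y_n^{\opt})\leq g_n(y_n^*)+\lambda\,(y_n^{\opt}-y_n^*)$, which you correctly flag as the weak link. It is in fact false in general: the first-order optimality of the $\ota_\phi$ step ties $g_n'$ to the local price $\phi(w^{(n+1)})\leq\lambda$, not to the final price $\lambda$, and whenever $y_n^{\opt}<y_n^*$ the concavity bound then points the wrong way. But the repair you gesture at — augmenting the primal with trajectory-dependent constraints — is not what the paper does here and is not needed. The correct and simpler fix is to keep each item's price at $\phi(w^{(n+1)})$. Writing $h_n(\rho)=\max_{0\leq y\leq D_n}\{g_n(y)-\rho y\}$, one has the valid per-item chain
\begin{align*}
g_n(y_n^{\opt})-\lambda\,y_n^{\opt}\;\leq\; h_n(\lambda)\;\leq\; h_n(\phi(w^{(n+1)}))\;=\; g_n(y_n^*)-\phi(w^{(n+1)})\,y_n^*,
\end{align*}
where the first step is the definition of $h_n$, the second is monotonicity of $h_n$ together with $\phi(w^{(n+1)})\leq\phi(w^{(N+1)})=\lambda$, and the equality is the KKT characterization of the $\ota_\phi$ decision (the paper's Proposition~\ref{pro:conjugate-function}). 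Summing and using $\sum_n\phi(w^{(n+1)})y_n^*\geq\smallint_0^{w^\star}\phi(u)\,du$ gives
\begin{align*}
\opt\;\leq\;\alg+\lambda C-\smallint_0^{w^\star}\phi(u)\,du,
\end{align*}
which is slightly weaker than your claimed $\opt\leq\alg+\lambda(C-w^\star)$ but is all you need: the ODE hypothesis yields $\lambda C\leq\alpha\smallint_0^{w^\star}\phi$, hence $\opt\leq\alg+(\alpha-1)\smallint_0^{w^\star}\phi\leq\alpha\alg$ directly, and you never need the intermediate bound $\lambda w^\star\geq\smallint_0^{w^\star}\phi$.

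The instance-dependent primal-dual idea you invoke does appear in the paper, but only in the easy regime $w^\star<\beta$: there the capacity $C$ in the dual objective is replaced by $w^\star$ so that the dual bound collapses to $\sum_n g_n(D_n)=\alg$ rather than leaving the residual $L(C-w^\star)$. Your direct argument for that regime reaches the same conclusion without the dual and is fine, modulo the same tie-breaking caveat the paper leaves implicit (when $g_n'\equiv L$ on an interval the pseudo-utility argmax is not unique; one tacitly breaks ties toward the larger allocation).
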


The form of the threshold function specified by the lemma consists of two segments, a flat segment in $[0,\beta)$ and a non-decreasing segment in $[\beta,C]$.
This two-segment function results from two families of instances as shown in Case I and Case II of Section~\ref{subsec:proof-threshold-got}, in which different offline formulations are needed to construct the dual objective $\dual(\cali,\phi)$ such that the gap between $\dual(\cali,\phi)$ and  $\opt(\cali)$ is minimized.
The differential equations and boundary conditions in~\eqref{eq:ode-sufficient} are designed to guarantee the first inequality of the \opd relationship~\eqref{eq:opd} holds.
By binding all inequalities and solving equations~\eqref{eq:ode-sufficient}, the resulting threshold function $\phi^*$ achieves the minimal competitive ratio among the threshold functions that satisfy this sufficient condition.   
This competitive ratio is an upper bound of the optimal competitive ratio and its tightness depends on the instances.

Conversely, the next key lemma shows necessary conditions that need to be satisfied in order to  achieve $\alpha$-competitiveness. It is phrased in terms of a \emph{utilization function}, which is an abstracted model of an online algorithm, mapping an instance to the \textit{final} utilization level of the knapsack.  See Definition~\ref{dfn:instance} for a formal definition.

\begin{lem}
	\label{lem:necessary-condition}
	If there exists an $\alpha$-competitive online algorithm for \got, there must exist a utilization function $\psi(p): [L,U]\to[0,C]$ such that $\psi$ is a non-decreasing function and satisfies 
	\begin{align}\label{eq:necessary-condition}
		\begin{cases}
			L\psi(L)+\smallint_{L}^{p} u d\psi(u) \ge pC/\alpha, p\in[L,U],\\
			\psi(L) \ge C/\alpha, \psi(U) \le C.
		\end{cases}
	\end{align} 
\end{lem}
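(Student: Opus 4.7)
The plan is to construct an adversarial family of increasing-price instances, extract $\psi$ from the behavior of any putative $\alpha$-competitive algorithm, and then read off the required inequalities directly from the competitiveness condition.

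First, I would reduce to a well-behaved special case of \got\ that still falls inside the class permitted by Assumption~\ref{ass:value-function}: linear value functions $g_n(y_n) = p_n y_n$ with $p_n \in [L,U]$ and per-item sizes $D_n \ge C$ so that the rate constraint $y_n \le D_n$ in \eqref{p:got} is never binding. For each terminal price $p \in [L,U]$, I build an instance $\cali_p$ whose prices form a fine non-decreasing sequence on $[L,p]$ ending at $p$, realized as a prefix of one ``master'' non-decreasing sequence on $[L,U]$. Because the algorithm is online and must act irrevocably when each item arrives, its decisions on any price $p' \le p$ are identical across all $\cali_{p'}$ with $p' \ge p$, so the cumulative allocation is a deterministic (or, for a randomized algorithm, expected) function of the largest price seen so far.

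Next, I would define $\psi(p)$ as this cumulative (expected) mass committed by the algorithm by the time the largest observed price equals $p$. Monotonicity of cumulative allocation yields that $\psi$ is non-decreasing, and the capacity constraint $\sum_n y_n \le C$ yields $\psi(U) \le C$. In the limit of arbitrarily fine price sequences, the revenue of the algorithm on $\cali_p$ is the Lebesgue--Stieltjes integral $L\psi(L) + \smallint_L^p u\, d\psi(u)$, where the $L\psi(L)$ term collects any atom of $\psi$ at the initial price $L$. On the offline side, since each item has $D_n \ge C$, the offline optimum on $\cali_p$ places all $C$ units at the highest realized price, so $\opt(\cali_p) = pC$.

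Finally, applying the $\alpha$-competitiveness assumption $\alpha \cdot \alg(\cali_p) \ge \opt(\cali_p)$ for every $p \in [L,U]$ gives the main integral inequality $L\psi(L) + \smallint_L^p u\, d\psi(u) \ge pC/\alpha$; specializing to $p=L$ yields the boundary condition $\psi(L) \ge C/\alpha$, and $\psi(U) \le C$ was already established. I expect the main obstacle to lie not in the algebra but in making the continuous limit rigorous: verifying that one can pass from discrete price sequences to a limiting right-continuous non-decreasing $\psi$ without losing a factor, and handling possible atoms of $\psi$ via a Lebesgue--Stieltjes integral. For randomized algorithms, replacing $\psi(p)$ by the expected cumulative allocation (and using linearity of expectation in both $\alg$ and $\opt = pC$) lets the same construction go through unchanged.
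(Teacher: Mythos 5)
Your proposal is correct and takes essentially the same approach as the paper: it constructs the same family of continuously non-decreasing price instances (the paper uses $D_n = C$ where you use $D_n \ge C$, an immaterial difference), defines $\psi$ as the algorithm's cumulative allocation indexed by the terminal price, and derives all three conditions in \eqref{eq:necessary-condition} directly from $\alpha$-competitiveness and the capacity bound, exactly as in the paper's proof.
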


This lemma provides an interpretation of an online algorithm for \got as a black box, with an instance as an input and a sequence of changes in the knapsack utilization as an output. 
Given a family of instances, each online algorithm corresponds to a utilization function $\psi$, and in Lemma~\ref{lem:necessary-condition} we specifically design the family of instances in a way that allows them to be indexed by a continuous marginal value within $[L,U]$ (see Definition~\ref{dfn:instance}), making $\psi$ a simple single variable function.
If $\psi$ satisfies \eqref{eq:necessary-condition}, the online algorithm corresponding to $\psi$ can achieve at least $1/\alpha$ of offline optimum under the specifically-designed family of instances. 
Thus, the existence of a solution to \eqref{eq:necessary-condition} is necessary for the existence of an $\alpha$-competitive online algorithm.	
Moreover, Lemma~~\ref{lem:necessary-condition} means the $\alpha$ in \eqref{eq:necessary-condition} is a lower bound of the optimal competitive ratio if it ensures there exists a solution to \eqref{eq:necessary-condition}. The minimal $\alpha$ can be achieved when all inequalities are binding in differential equations~\eqref{eq:necessary-condition} and let $\psi^*$ denote the corresponding solution.  
An important observation is that the two sets of differential equations~\eqref{eq:ode-sufficient} and~\eqref{eq:necessary-condition} with the same $\alpha$ are essentially the same when all inequalities are binding.  In particular, the threshold function $\phi^*$ is an inverse function of the utilization function $\psi^*$. 
This implies that $\ota_{\phi^*}$ achieves not only an upper bound but also a lower bound, and hence exactly the optimal competitive ratio of \got. Conversely, it also means the special instance used for constructing the necessary condition is actually the worst-case instance of \got. Thus, the $\ota_{\phi^*}$ algorithm and the worst-case instances are connected via the differential equations, implying the design of $\ota_{\phi^*}$ is equivalent to finding the worst-case instance for \got.

Returning to the general \fomkp, the instance-dependent \opd approach can still be leveraged to design competitive \ota algorithms via an understanding of the worst-case instance, though the application is more complex (see Section~\ref{sec:omkp}). 
The optimal \ota and the worst-case instance can be derived simultaneously when the upper and lower bounds match.

\subsection{Summary of Results}
\label{subsuc:summary-of-results}

Using the instance-dependent \opd approach described in the previous section, the main results of the paper present the design of threshold functions for \fomkp and the special case of \got, which has received considerable attention in the literature. 

For simplicity, we start by discussing the special case of \got, where we design a threshold function that achieves the optimal competitive ratio $1 + \ln\theta$.

\begin{thm}\label{thm:threshold-got}
	Under Assumption~\ref{ass:value-function}, when the threshold function of $\ota_\phi$ for \got is
	\begin{align}\label{eq:threshold-got}
		\phi^*(w)=
		\begin{cases}
			L & w\in[0,\beta^*)\\
			L e^{(1+\ln\theta)w/C - 1} & w\in[\beta^*, C]
		\end{cases},
	\end{align}
	where $\beta^* = \frac{C}{\alpha_{\phi^*}}$ is the utilization threshold, the competitive ratio of $\ota_{\phi^*}$ is $\alpha_{\phi^*} = 1+\ln\theta$.
\end{thm}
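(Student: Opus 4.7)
The plan is to prove Theorem~\ref{thm:threshold-got} by directly verifying that the proposed $\phi^*$ satisfies the sufficient conditions of Lemma~\ref{lem:sufficient} with $\alpha = 1+\ln\theta$. Since Lemma~\ref{lem:sufficient} already reduces $\alpha$-competitiveness to checking (i) the two-segment structure, (ii) the boundary conditions $\varphi(\beta^*)=L$ and $\varphi(C)\ge U$, and (iii) the integral inequality $\varphi(w)C \le \alpha\smallint_0^w \phi(u)du$ on $[\beta^*,C]$, the proof reduces to a handful of routine checks plus one integral computation. The harder work (the instance-dependent primal-dual construction) is already encapsulated in Lemma~\ref{lem:sufficient}; I do not need to redo it.

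First I would set $\alpha := 1+\ln\theta$ and observe that $\phi^*$ in \eqref{eq:threshold-got} is in the two-segment form required by Lemma~\ref{lem:sufficient}, with flat value $L$ on $[0,\beta^*)$ and $\varphi(w)=Le^{\alpha w/C-1}$ on $[\beta^*,C]$, which is strictly increasing (hence non-decreasing). For the boundary conditions, I would plug in $\beta^*=C/\alpha$ to get $\varphi(\beta^*)=Le^{\alpha\cdot(1/\alpha)-1}=Le^0=L$, and plug in $w=C$ to get $\varphi(C)=Le^{\alpha-1}=Le^{\ln\theta}=L\theta=U$, so both boundary conditions hold (the second with equality).

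Next I would verify the key differential inequality in~\eqref{eq:ode-sufficient} by computing the integral directly. Splitting at $\beta^*$ and using $\beta^*=C/\alpha$,
\begin{align*}
\smallint_0^w \phi^*(u)\, du
&= L\beta^* + \smallint_{\beta^*}^w Le^{\alpha u/C-1}\, du
= \tfrac{LC}{\alpha} + \tfrac{LC}{\alpha}\left[e^{\alpha w/C-1}-e^{\alpha\beta^*/C-1}\right]\\
&= \tfrac{LC}{\alpha} + \tfrac{LC}{\alpha}\bigl[e^{\alpha w/C-1}-1\bigr]
= \tfrac{C}{\alpha}\cdot Le^{\alpha w/C-1}
= \tfrac{C}{\alpha}\,\varphi(w).
\end{align*}
Multiplying by $\alpha$ gives $\alpha\smallint_0^w\phi^*(u)\,du=C\,\varphi(w)$, so the inequality in~\eqref{eq:ode-sufficient} holds with equality on the entire interval $[\beta^*,C]$.

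Having verified every hypothesis of Lemma~\ref{lem:sufficient}, I would conclude that $\ota_{\phi^*}$ is $(1+\ln\theta)$-competitive for \got, i.e., $\alpha_{\phi^*}=1+\ln\theta$, completing the proof. The only mild ``obstacle'' is keeping the exponent bookkeeping consistent (in particular that $\alpha\beta^*/C=1$ is what kills the constants and collapses the integral into a clean multiple of $\varphi(w)$); beyond that, the argument is a direct substitution. I would also add a brief remark that the matching lower bound from Lemma~\ref{lem:necessary-condition} (by taking $\psi^*=(\phi^*)^{-1}$) certifies that this competitive ratio is in fact optimal for \got, so the equality in the integral identity is not accidental but rather reflects the tight coupling between the worst-case instance and the threshold function highlighted in Section~\ref{subsec:approach}.
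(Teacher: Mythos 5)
Your proof is correct, and it takes a genuinely different (and arguably cleaner) route to the theorem as stated. The paper does not verify $\phi^*$ directly; instead it applies Gronwall's inequality (Lemma~\ref{lem:Gronwall-inequality}) to the differential inequality in~\eqref{eq:ode-sufficient} to \emph{derive} the bound $\varphi(w)\le \tfrac{\alpha L\beta}{C}e^{(w-\beta)\alpha/C}$, combines this with $\varphi(C)\ge U$, and argues that the minimal feasible $\alpha$ is attained when every inequality is tight, which simultaneously pins down $\phi^*$ and $\alpha_{\phi^*}=1+\ln\theta$. You skip the synthesis step entirely: you accept $\phi^*$ as given, check the two boundary conditions $\varphi(\beta^*)=L$ and $\varphi(C)=U$, and compute $\smallint_0^w\phi^*(u)\,du=\tfrac{C}{\alpha}\varphi(w)$ to see that \eqref{eq:ode-sufficient} holds with equality — all of which I verified is arithmetically correct. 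What the paper's Gronwall argument buys that your verification does not is the stronger statement that $\phi^*$ is the \emph{optimal} threshold function within the class covered by Lemma~\ref{lem:sufficient} (this is used elsewhere in the discussion to motivate the design methodology); what your approach buys is a shorter, more elementary proof of the theorem as literally stated, at the cost of offering no explanation of where $\phi^*$ comes from. One small point of rigor worth flagging: both your argument and the paper's Gronwall step establish only the upper bound $\texttt{CR}(\ota_{\phi^*})\le 1+\ln\theta$; the claim that the competitive ratio \emph{equals} $1+\ln\theta$ requires the matching lower bound on the problem, which is Theorem~\ref{thm:opt-CR}. You correctly note this in your closing remark, so nothing is missing — just be careful not to write ``completing the proof'' before invoking the lower bound.
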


\begin{thm}\label{thm:opt-CR}
	The optimal competitive ratio of \got is $\alpha^* = 1 + \ln\theta$.
\end{thm}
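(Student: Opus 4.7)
The plan is to combine the upper bound already furnished by Theorem~\ref{thm:threshold-got} with a matching lower bound derived from the necessary condition in Lemma~\ref{lem:necessary-condition}. Since Theorem~\ref{thm:threshold-got} exhibits $\ota_{\phi^*}$ as a $(1+\ln\theta)$-competitive algorithm, we already have $\alpha^* \le 1+\ln\theta$, so the whole task reduces to proving $\alpha^* \ge 1+\ln\theta$ via Lemma~\ref{lem:necessary-condition}.

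First, I would suppose for contradiction that some online algorithm achieves competitive ratio $\alpha < 1+\ln\theta$. By Lemma~\ref{lem:necessary-condition}, there must exist a non-decreasing utilization function $\psi:[L,U]\to[0,C]$ satisfying $\psi(L)\ge C/\alpha$, $\psi(U)\le C$, and
\begin{align*}
L\psi(L)+\smallint_{L}^{p}u\, d\psi(u)\ \ge\ pC/\alpha, \qquad p\in[L,U].
\end{align*}
The next step is to massage the left-hand side into something directly solvable. Applying integration by parts (valid since $\psi$ is monotone and hence of bounded variation),
\begin{align*}
L\psi(L)+\smallint_{L}^{p}u\, d\psi(u)\ =\ p\psi(p)-\smallint_{L}^{p}\psi(u)\,du,
\end{align*}
so the inequality becomes $p\psi(p)-\smallint_{L}^{p}\psi(u)\,du\ \ge\ pC/\alpha$.

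Next, I would argue that the constraint is most stringent when it holds with equality, i.e., I would lower-bound $\alpha$ by solving the corresponding ODE. Writing $F(p):=\smallint_{L}^{p}\psi(u)\,du$ and setting equality gives the linear first-order ODE $F'(p)-F(p)/p = C/\alpha$ with $F(L)=0$. The integrating factor $1/p$ yields $\bigl(F(p)/p\bigr)' = C/(\alpha p)$, hence $F(p) = (pC/\alpha)\ln(p/L)$ and therefore $\psi(p) = (C/\alpha)\bigl(1+\ln(p/L)\bigr)$. Any feasible $\psi$ must dominate this solution pointwise (a standard Gr\"onwall-style comparison, since any deficit in $\psi$ at some point would make $F$ smaller there and violate the inequality just beyond). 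Plugging $p=U$ into the boundary condition $\psi(U)\le C$ gives
\begin{align*}
\tfrac{C}{\alpha}\bigl(1+\ln(U/L)\bigr)\ \le\ C\ \Longleftrightarrow\ \alpha\ \ge\ 1+\ln\theta,
\end{align*}
contradicting $\alpha<1+\ln\theta$ and completing the lower bound.

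The hard part will not be the ODE calculation itself but rather justifying the pointwise comparison step cleanly: formally showing that any $\psi$ satisfying the integral inequality dominates the equality solution, so that the binding boundary condition $\psi(U)\le C$ transfers to the closed-form expression. This is a routine monotone-operator / comparison-lemma argument, but it deserves an explicit line in the write-up. Everything else reduces to the integration by parts identity and solving a separable ODE, which are mechanical once the reduction via Lemma~\ref{lem:necessary-condition} is in place.
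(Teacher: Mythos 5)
Your proposal follows essentially the same route as the paper: transform the necessary condition of Lemma~\ref{lem:necessary-condition} by integration by parts into $p\psi(p)-\smallint_{L}^{p}\psi(u)\,du \ge pC/\alpha$, deduce $\psi(p)\ge\frac{C}{\alpha}\bigl(1+\ln(p/L)\bigr)$ by a Gronwall comparison, and conclude from $\psi(U)\le C$ that $\alpha\ge 1+\ln\theta$, which together with Theorem~\ref{thm:threshold-got} pins down the optimal ratio. The one place to tighten is the step you yourself flag: the paper does not argue informally that ``a deficit in $\psi$ would violate the inequality just beyond'' (that heuristic is not quite right, since a pointwise dip in $\psi$ changes $F$ only through an integral), but instead invokes Gronwall's Inequality in the exact form recorded as Lemma~\ref{lem:Gronwall-inequality}, which directly converts the integral inequality $\psi(p)\ge\frac{C}{\alpha}+\frac{1}{p}\smallint_L^p\psi(u)\,du$ into the pointwise lower bound without any contradiction argument.
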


This is the best-known result for \got (under Assumption~\ref{ass:value-function}), improving upon the results summarized in the one-way trading problem~\cite{yang2019online} and the online $0/1$ knapsack problem~\cite{Zhang2017,zhou2008budget}.
Importantly, the threshold function $\phi^*$ in~\eqref{eq:threshold-got} coincides with the optimal threshold function of \ota used in~\cite{yang2019online,Zhang2017,zhou2008budget}, and $\ota_{\phi^*}$ achieves the same competitive ratio.
Thus, Theorem~\ref{thm:threshold-got} highlights that generalizing the objective of the one-way trading problem from linear to concave functions and taking into account the rate limits do not degrade the competitive performance of \ota.	

In addition, our approach can be extended to solve two previously studied variants of \got. First, \cite{El2001} considers {a setting that the investor can trade all its remaining assets at the last (lowest) price. 
	We consider a variant of the \got defined in~\eqref{p:got} that also allows to fill the remaining knapsack capacity with items of the lowest marginal value $L$.   
	In this context our approach yields a threshold function for \ota that achieves the same optimal competitive ratio as the special case in~\cite{El2001}.} Second, \cite{Lin2019} considers a relaxed assumption on the value function, which restricts the average value of each item to be lower bounded by $L/c$, instead of the marginal value in \got, with a given parameter $c\ge 1$ (see Assumption~\ref{ass:relaxed} for detail). 
In this context our approach yields a threshold function for \ota that improves the upper bound on the competitive ratio from $O(c(1+ \ln\theta))$ in~\cite{Lin2019} to $O(\ln(c\theta))$. Beyond these cases, our result also applies to \got problems for which no previous bounds on the competitive ratio were known.

Obtaining results for the general \fomkp problem is more challenging than in the \got setting; however, the same approach we introduce in the \got setting can be generalized.  For the general case, using the instance-dependent \opd approach, we design a threshold function that nearly achieves the optimal competitive ratio -- it differs by an additive factor of one.  In this case, we have two results, one for the case of aggregate value functions (see Definition~\ref{ass:aggregate-value-function}) and one for the case of separable value functions (see Definition~\ref{ass:separable-value-function}).  
\begin{thm}\label{thm:threshold-omkp-aggregate}
	Under Assumptions~\ref{ass:value-function}, when the threshold function of $\ota_\phi$ for the \fomkp with an aggregate value function is
	\begin{align}\label{eq:threshold-fomkp-aggregate}
		\phi_m^*(w)=
		\begin{cases}
			L & w\in[0,\beta_m^*)\\
			Le^{\frac{\alpha_{\phi^*} }{C_m}w - \frac{\alpha_{\phi^*}}{\alpha_{\phi^*} - 1}} & w\in[\beta_m^*, C_m]
		\end{cases},
	\end{align}
	where $\beta_m^* = \frac{C_m}{\alpha_{\phi^*} - 1}$, the competitive ratio of $\ota_{\phi^*}$ is the solution of $\alpha_{\phi^*} - 1 - \frac{1}{\alpha_{\phi^*} - 1} = \ln\theta$.
\end{thm}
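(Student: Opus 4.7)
The strategy is to lift the instance-dependent online primal-dual (\opd) approach developed for \got in Section~\ref{subsec:approach} to \fomkp with an aggregate value function, and then solve the resulting sufficient conditions at equality to identify $\phi^*$. Concretely, I would first prove an analogue of Lemma~\ref{lem:sufficient}: a set of conditions on a family of threshold functions $\phi=\{\phi_m\}_{m\in\calm}$ that guarantees $\ota_\phi$ is $\alpha$-competitive. Then I would verify that the $\phi^*$ in~\eqref{eq:threshold-fomkp-aggregate} satisfies those conditions with equality and that the boundary relation at $w=C_m$ produces the transcendental equation $\alpha-1-1/(\alpha-1)=\ln\theta$.

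To derive the sufficient condition, I would form the Lagrangian dual of the offline \fomkp~\eqref{p:omkp} with dual variables $\{\mu_n\}$ for~\eqref{p:1}, $\{\lambda_m\}$ for the capacity constraints~\eqref{p:2}, and $\{\nu_{nm}\}$ for the rate constraints~\eqref{p:3}. Given the online trace produced by $\ota_\phi$, I would construct a candidate dual solution by setting $\lambda_m := \phi_m(w_m^{(N+1)})$ using the final utilization, and by reading off $\mu_n$ and $\nu_{nm}$ from the KKT conditions of the per-item pseudo-utility maximization~\eqref{p:utility-maximization}. Because the value function is aggregate, $g_n(\by_n)=g_n(\sum_m y_{nm})$, the KKT conditions force the thresholds $\phi_m(w_m^{(n)}+y_{nm}^*)$ to coincide across all knapsacks to which item $n$ is fractionally allocated (at a value equal to $g_n'$), which is precisely what lets the constructed $(\lambda,\mu,\nu)$ be dual feasible and provide a clean upper bound $\dual(\cali,\phi)\ge\opt(\cali)$. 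Partitioning instances into a low-utilization family (where marginal values never leave $[L,\phi_m(\beta_m)]$) and a high-utilization family (which drives some knapsack into $[\beta_m,C_m]$) then yields a two-segment $\phi_m$: flat at $L$ on $[0,\beta_m)$ and a non-decreasing $\varphi_m$ on $[\beta_m,C_m]$, with the instance-dependent bound taking the form
\begin{equation*}
    C_m\,\varphi_m(w) \;+\; \frac{L\,C_m}{\alpha-1} \;\le\; \alpha\int_{0}^{w}\phi_m(u)\,du, \qquad w\in[\beta_m,C_m],
\end{equation*}
together with the continuity/boundary conditions $\varphi_m(\beta_m)=L$ and $\varphi_m(C_m)\ge U$. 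The extra additive $LC_m/(\alpha-1)$ term, which is absent in the \got condition~\eqref{eq:ode-sufficient}, is what eventually shifts the competitive ratio away from $1+\ln\theta$ and into the equation $\alpha-1-1/(\alpha-1)=\ln\theta$; it accounts for the adversary's ability to first ``prime'' each knapsack with low-marginal-value items under the aggregate objective before presenting expensive items.

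To obtain $\phi^*$, I would bind all inequalities to minimize $\alpha$. Differentiating the binding equation yields the ODE $C_m\,\varphi_m'(w)=\alpha\,\varphi_m(w)$, whose solution with the initial condition $\varphi_m(\beta_m^*)=L$ and the choice $\beta_m^*=C_m/(\alpha-1)$ (forced by evaluating the binding equation at $w=\beta_m^*$) gives precisely $\varphi_m(w)=L\,e^{\alpha w/C_m-\alpha/(\alpha-1)}$, matching~\eqref{eq:threshold-fomkp-aggregate}. Imposing $\varphi_m(C_m)=U=\theta L$ then gives $\alpha-\alpha/(\alpha-1)=\ln\theta$, equivalently $\alpha-1-1/(\alpha-1)=\ln\theta$, identifying $\alpha_{\phi^*}$. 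Finally, plugging $\phi^*$ back into the sufficient condition and checking dual feasibility knapsack-by-knapsack completes the argument.

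The main obstacle is the primal-dual accounting in the presence of two features that \got does not have: (i) the aggregate coupling $g_n(\sum_m y_{nm})$, which means the increment in $\alg$ contributed by item $n$ must be charged simultaneously against several capacity duals $\lambda_m$, and (ii) the rate constraints~\eqref{p:3}, which force $\nu_{nm}>0$ for binding knapsacks and break the simple identity ``threshold equals marginal value'' used in the \got proof. Handling both cleanly requires a careful case analysis on which subset of knapsacks is active for each item and a tight bookkeeping of the contribution of the rate duals, which is precisely where the additional $LC_m/(\alpha-1)$ slack must be tracked. I expect the solution of the ODE and the algebra leading to the equation for $\alpha_{\phi^*}$ to be routine once the sufficient condition is established, so the real work is establishing that condition.
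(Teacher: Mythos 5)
Your high-level plan is the same as the paper's: establish a sufficient condition on $\{\phi_m\}_{m\in\calm}$ analogous to Lemma~\ref{lem:sufficient}, bind all inequalities, solve the resulting ODE via Gronwall's Inequality with the boundary condition $\varphi_m(C_m)=U$, and read off $\alpha_{\phi^*}$. The sufficient condition you write down is equivalent to the paper's Lemma~\ref{lem:fomkp-sufficient-aggregate} once $\beta_m=C_m/(\alpha-1)$ is substituted, your observation that the aggregate objective equalizes the thresholds $\phi_m(w_m^{(n+1)})$ across active knapsacks (all equal to $\mu_n^*$) is correct and used in the paper's feasibility check, and the ODE and boundary algebra leading to $\alpha_{\phi^*}-1-1/(\alpha_{\phi^*}-1)=\ln\theta$ are right.

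However, the actual derivation of the $L\beta_m$ surcharge is missing, and your stated intuition for it is not the mechanism the proof uses. The paper splits $\Omega$ into three families, not two: $\Omega^1$ (all knapsacks end below $\beta_m$), $\Omega^2$ (all end at or above), and the mixed case $\Omega^3$, writing $\calm^1$ for knapsacks ending below threshold and $\calm^2$ for the rest. Case~II needs only the weaker inequality $\varphi_m(w)C_m\le\alpha\smallint_0^w\phi_m(u)\,du$; the $L\beta_m$ term comes exclusively from Case~III, and it does not arise from the adversary ``priming'' knapsacks with low-value items. It arises because the offline optimum may route more mass into $\calm^1$ than $\ota_\phi$ did, but the extra amount is bounded by $\sum_{m\in\calm^2}\beta_m$: any item that $\ota_\phi$ placed in a $\calm^2$-knapsack \emph{above} its threshold paid marginal cost $>L$, so the only reason it was not placed in $\calm^1$ (marginal cost $L$) is that the rate limits $Y_{nm}$ were binding, and those same rate limits bind on offline. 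The paper encodes this as the added instance-dependent primal constraint $\sum_{n}\sum_{m\in\calm^1}y_{nm}\le\sum_{m\in\calm^1}w_m^{(N+1)}+\sum_{m\in\calm^2}\beta_m$, dualizes it with $\hat\eta=L$ and sets $\hat\lambda_m=0$ on $\calm^1$, and the $L\beta_m$ term falls out of the resulting dual objective. Your ``careful case analysis on which subset of knapsacks is active for each item'' does not by itself produce this constraint; without the rate-limit argument that justifies it, the tighter sufficient condition does not follow and the claimed competitive ratio is not established.
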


\begin{thm}\label{thm:threshold-omkp-separable}
	Under Assumption~\ref{ass:value-function}, when the threshold function of $\ota_\phi$ for the \fomkp with a separable value function is
	\begin{align}\label{eq:threshold-fomkp-separable}
		\phi_m^*(w)=
		\begin{cases}
			L & w\in[0,\beta_m^*)\\
			\frac{U-L}{e^{\alpha_{\phi^*}} - e^{{\alpha_{\phi^*}}/{(\alpha_{\phi^*} - 1)}}} e^{\frac{\alpha_{\phi^*}}{C_m} w} + \frac{L}{\alpha_{\phi^*}} & w\in[\beta_m^*, C_m]
		\end{cases},
	\end{align}
	where $\beta_m^* = \frac{C_m}{\alpha_{\phi^*} - 1}$, the competitive ratio of $\ota_{\phi^*}$ is the solution of $\alpha_{\phi^*} - 1 - \frac{1}{\alpha_{\phi^*} - 1} = \ln\frac{\alpha_{\phi^*} \theta - 1}{\alpha_{\phi^*} - 1}$.
\end{thm}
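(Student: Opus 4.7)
The plan is to mirror the instance-dependent online primal-dual (\opd) methodology that produced Lemmas~\ref{lem:sufficient} and~\ref{lem:necessary-condition} for \got and Theorem~\ref{thm:threshold-omkp-aggregate} for the aggregate case. First I would establish a sufficient-condition lemma tailored to separable value functions: under Assumption~\ref{ass:value-function} and Definition~\ref{ass:separable-value-function}, $\ota_\phi$ is $\alpha$-competitive for \fomkp whenever each per-knapsack threshold $\phi_m$ has the two-segment structure $\phi_m(w)=L$ on $[0,\beta_m^*)$ and a non-decreasing branch $\varphi_m(w)$ on $[\beta_m^*, C_m]$ satisfying a differential inequality together with boundary conditions $\varphi_m(\beta_m^*)=L$ and $\varphi_m(C_m)\ge U$. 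Identifying that sufficient condition is the analytic engine; the claimed $\phi_m^*$ will then emerge by binding the inequality.

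The derivation of the sufficient condition follows the standard \opd template: given the online schedule $\by^*$ produced by $\ota_\phi$, I build an instance-dependent primal by appending upper-bound constraints $y_{nm}\le y^*_{nm}$ and form its Lagrangian dual. Setting the capacity dual variable $\hat{p}_m$ equal to $\phi_m(\hat{w}_m)$, where $\hat{w}_m$ is the final utilization, and using the KKT conditions of the pseudo-utility problem~\eqref{p:utility-maximization} to specify the remaining multipliers yields a feasible dual solution. The crucial structural difference from Theorem~\ref{thm:threshold-omkp-aggregate} is that the separable decomposition $g_n(\by_n)=\sum_{m} g_{nm}(y_{nm})$ decouples stationarity across the pairs $(n,m)$, so the dual upper bound on $\opt(\cali)$ involves the extremes $U$ and $L$ separately rather than only through the ratio $\theta=U/L$. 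Weak duality then translates into an integral inequality on $\varphi_m$ of the schematic form $\varphi_m(w)\,C_m\le \alpha\int_0^{w}\phi_m(u)\,du + (U-L)\,w + L\,C_m$ for $w\in[\beta_m^*,C_m]$.

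Binding this inequality and differentiating yields a first-order linear ODE whose general solution has the form $\varphi_m(w)=A\,e^{\alpha w/C_m}+B$, in which the affine offset $B$ is forced by the inhomogeneous term inherited from the separable dual construction; this is precisely why the aggregate threshold in~\eqref{eq:threshold-fomkp-aggregate} is a pure exponential while the separable threshold in~\eqref{eq:threshold-fomkp-separable} carries the additive constant $L/\alpha_{\phi^*}$. Imposing continuity $\varphi_m(\beta_m^*)=L$ at the matching point $\beta_m^*=C_m/(\alpha_{\phi^*}-1)$ (chosen so that the worst-case instance in the spirit of Lemma~\ref{lem:necessary-condition} simultaneously saturates the flat and curved branches) together with the upper-boundary condition $\varphi_m(C_m)=U$ overdetermines the two unknowns $A$ and $\alpha_{\phi^*}$; solving this system delivers the closed-form constant $A=(U-L)/\bigl(e^{\alpha_{\phi^*}}-e^{\alpha_{\phi^*}/(\alpha_{\phi^*}-1)}\bigr)$ appearing in~\eqref{eq:threshold-fomkp-separable} and the transcendental characterization $\alpha_{\phi^*}-1-1/(\alpha_{\phi^*}-1)=\ln\bigl((\alpha_{\phi^*}\theta-1)/(\alpha_{\phi^*}-1)\bigr)$.

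The main obstacle is producing a dual bound that is tight enough to recover exactly the stated ratio rather than a loose surrogate. Directly dualizing the unmodified \fomkp primal discards the information about which items have been admitted at which utilization level, and this slack cannot be reabsorbed cleanly in the separable setting because the per-$(n,m)$ conjugates of $g_{nm}$ each contribute independently. The remedy, as in the \got derivation, is to index item contributions by the marginal-value threshold at which they are admitted and integrate by parts over the utilization trajectory $w_m^{(n)}\to w_m^{(n+1)}$, converting a sum over items into a single integral of $\phi_m$ against its own image. Verifying that the resulting integral inequality holds pointwise throughout $[\beta_m^*, C_m]$, patches correctly to the flat segment on $[0,\beta_m^*)$, and is attained by a matching worst-case instance so that $\alpha_{\phi^*}$ is not a spurious artifact of a loose relaxation, is the most delicate part of the argument.
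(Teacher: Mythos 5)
Your high-level plan is right — establish a sufficient-condition lemma via instance-dependent \opd, bind it, apply Gronwall, and solve a linear ODE — but two concrete steps in your sketch are wrong and would not recover the stated threshold or competitive ratio.

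First, the instance-dependent primal constraint you propose, $y_{nm}\le y^*_{nm}$, is not a valid relaxation to dualize. Adding it to the offline problem dominates every offline feasible point entry-wise by the online one; since each $g_{nm}$ is non-decreasing this makes the modified optimum trivially at most $\alg(\cali)$, so the resulting dual upper bound says nothing about the true $\opt(\cali)$. The whole point of the instance-dependent construction (as in the \got Case~I argument and in Lemma~\ref{lem:fomkp-sufficient-aggregate}) is to add constraints that provably do \emph{not} cut the offline optimum on the relevant worst-case family. In the separable Case~III the paper adds the single aggregated constraint $\sum_{n\in\caln}\sum_{m\in\calm^1}y_{nm}\le\sum_{m\in\calm}w_m^{(N+1)}$, and justifies it by arguing that, because each knapsack in $\calm^2$ carries its own value function, the offline can reassign at most $\sum_{m\in\calm^2}w_m^{(N+1)}$ extra mass from $\calm^2$ into $\calm^1$. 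That is also the decisive structural contrast with the aggregate case, where the corresponding bound is $\sum_{m\in\calm^2}\beta_m$ (rate limits, not per-knapsack marginal utility, are what block reassignment there). Your proposal never identifies this $\calm^1$/$\calm^2$ split or the $w_m^{(N+1)}$-versus-$\beta_m$ distinction, which is precisely what makes the separable threshold differ from the aggregate one.

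Second, the inhomogeneous term in your schematic differential inequality is wrong in a way that matters. The correct sufficient condition (Lemma~\ref{lem:fomkp-sufficient-separable}) is $\varphi_m(w)\,C_m\le\alpha\int_0^w\phi_m(u)\,du - Lw$ on $[\beta_m,C_m]$; binding and differentiating gives $\varphi_m'(w)C_m=\alpha\varphi_m(w)-L$ and hence the additive offset $+L/\alpha_{\phi^*}$ that appears in \eqref{eq:threshold-fomkp-separable}. Your proposed inequality, $\varphi_m(w)C_m\le\alpha\int_0^w\phi_m + (U-L)w + LC_m$, would instead give $\varphi_m'(w)C_m=\alpha\varphi_m(w)+(U-L)$ and an offset of $-(U-L)/\alpha$, which does not match the stated threshold or lead to the transcendental equation $\alpha_{\phi^*}-1-\tfrac{1}{\alpha_{\phi^*}-1}=\ln\tfrac{\alpha_{\phi^*}\theta-1}{\alpha_{\phi^*}-1}$. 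The $-Lw$ term arises directly from choosing the dual multiplier $\hat\eta=L$ on the new aggregated constraint and setting $\hat\la_m=0$ for $m\in\calm^1$, a dual construction your proposal does not reach. Also a smaller point: $\beta_m^*=C_m/(\alpha_{\phi^*}-1)$ is not chosen a priori to make a worst-case instance ``saturate both branches''; it falls out of simultaneously binding the Gronwall bound with the boundary conditions $\varphi_m(\beta_m)=L$ and $\varphi_m(C_m)=U$, exactly as in the aggregate proof.
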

\begin{figure}[t]
	\center
	\includegraphics[width=0.45\textwidth]{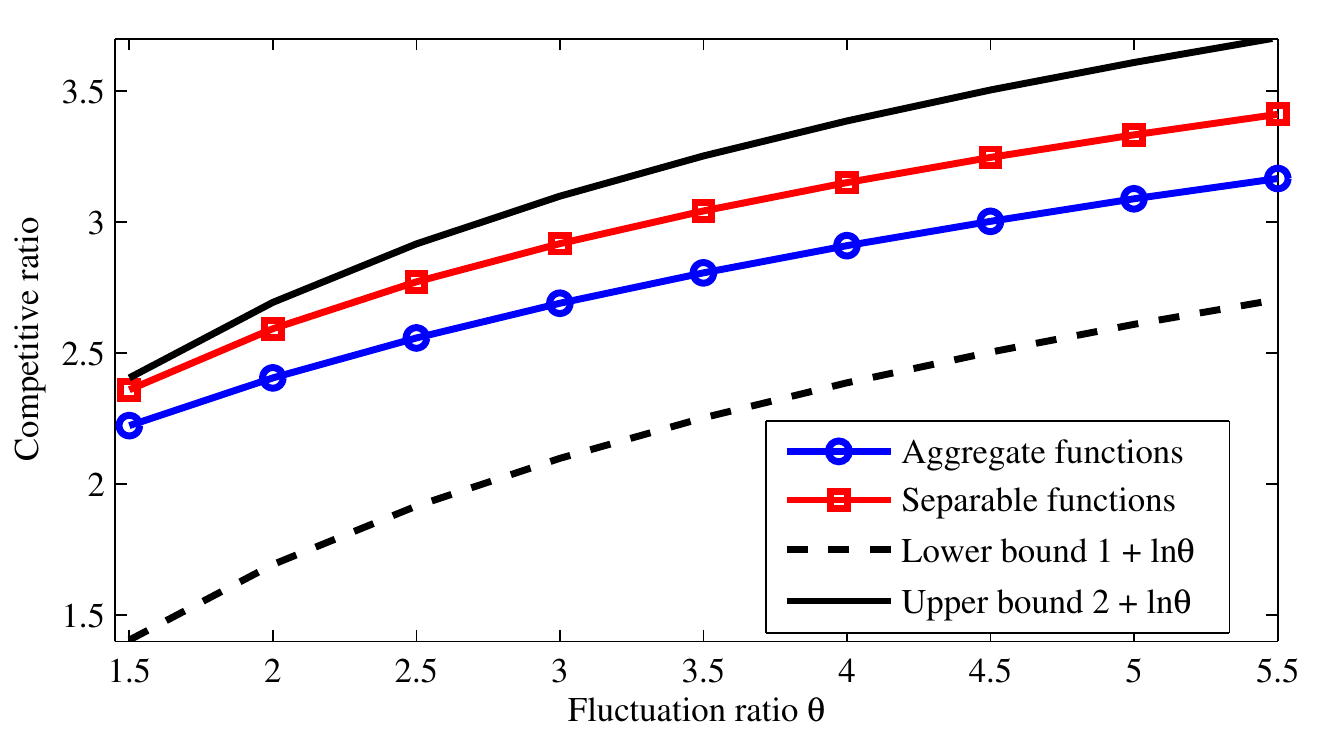}
	\caption{Competitive ratios of $\ota_{\phi}$ for \fomkp with aggregate and separable functions.}
	\label{fig:cr}
\end{figure}
The competitive ratios of both cases are illustrated in Figure~\ref{fig:cr}. In both cases, the competitive ratios are bounded between $1+\ln\theta$ and $2+\ln\theta$, where $1+\ln\theta$ is a lower bound of the optimal competitive ratio. It is also worth contrasting the threshold function with those used in prior work on \omkp.  Compared to~\cite{zhou2008budget}, which uses the same threshold function~\eqref{eq:threshold-got} for all knapsacks, the threshold functions~\eqref{eq:threshold-fomkp-aggregate} and~\eqref{eq:threshold-fomkp-separable} for \fomkp are lower, and consequently estimate a lower marginal cost at the same utilization level, encouraging a more aggressive assignment of items.
The difference in threshold functions results from the rate-limiting constraints and the knapsack-dependent value functions (e.g., separable value functions) in \fomkp. For an example, the rate limits constrain the decision space of \ota, and equivalently increase the chance of missing opportunities for assigning items in the worst case, leading to more aggressive assignments compared to \omkp. Details about impact of rate limits on worst-case instances are provided in Section~\ref{sec:omkp}. 

We close this section by discussing the advantages and limitations of applying the instance-dependent OPD to improve upon competitive ratios of related problems.
The instance-dependent OPD can be considered as an extension of the classical OPD~\cite{buchbinder2009design}. The key advantage of this approach is to provide a pathway to incorporate the understanding of worst-case instances into the design of online algorithms.
Better still, applying this approach does not require a full characterization of the worst case. As is shown in the analysis of \fomkp, a partial understanding of the worst-case instance can already improve upon the competitive ratios.
However, if we have no ideas about the worst-case instances, our approach is no different than the classical OPD. 
For example, with prior knowledge on the worst-case of \fomkp, our approach can be applied to the online $0/1$ knapsack~\cite{zhou2008budget,Zhang2017}, \omkp~\cite{zhou2008budget}, and \omkp with assignment restrictions (\omkpar), for their fractional versions or under an infinitesimal assumption. 
In contrast, for another variant of knapsack problems, including the online multi-dimensional knapsack~\cite{Zhang2017} and online packing problem~\cite{buchbinder2009online}, the worst-case instance of this variant is more difficult to characterize than that of \fomkp due to the strong coupling of the assigned item across multiple knapsacks.
Thus, our approach has no advantage over the classical OPD approach (e.g., in \cite{buchbinder2009online}), and cannot improve upon the known results in~\cite{buchbinder2009online,Zhang2017} at this moment due to a lack of an understanding of worst-case instances.

\section{Optimal Online Algorithms for Generalized One-Way Trading}
\label{sec:got}

In the next two sections we present the analysis that leads to the main results discussed in the previous section. We begin by focusing on an important special case of \fomkp, the generalized one-way trading problem (\got). This problem has garnered considerable interest, e.g.,~\cite{El2001,Lin2019,yang2019online,fujiwara2011average,chin2015competitive,schroeder2018optimal,wang2016competitive}, and serves as a way to introduce the key ideas of our approach without the additional complexity of the full \fomkp formulation.  Then, in Section \ref{sec:omkp} we show how to generalize the ideas presented here to the full \fomkp formulation.  

The key novelty of the main result in this section (Theorem~\ref{thm:threshold-got}) lies in our approach to derive the threshold function~\eqref{eq:threshold-got}, which we outline in Section~\ref{subsec:approach}.  Then, we provide a new proof of the lower bound in Section~\ref{subsec:proof-opt-CR}.  Finally, we discuss extensions to variants of one-way trading in Section~\ref{subsec:variants}

\subsection{Proof of Theorem~\ref{thm:threshold-got}: Designing the Threshold Function}
\label{subsec:proof-threshold-got}

In the \got problem, formulated in~\eqref{p:got}, an operator maintains one knapsack with a total capacity $C$.
Upon the arrival of a new item $n\in\caln$, $\ota_\phi$ immediately decides the fraction of the item to be accepted, $y_n^*$, and obtains a value $g_n(y_n^*)$. 
In this special case of \fomkp, the core pseudo-utility maximization problem~\eqref{p:utility-maximization} in $\ota_\phi$ reduces to the following problem
\begin{align}\label{p:utility-maximization-got}
	\max_{0\le y_n\le D_n}\quad g_n(y_n) - \smallint_{w^{(n)}}^{w^{(n)}+y_n}\phi(u)du.
\end{align}

Our approach here relies on the sufficient conditions on $\phi$ in Lemma~\ref{lem:sufficient}, so we first prove the lemma and then continue with the proof of the theorem. 

\begin{proof}[Proof of Lemma~\ref{lem:sufficient}]
	The dual problem of the offline \got~\eqref{p:got} can be stated as
	\begin{align}\label{p:got-dual}
		\min_{\la \ge 0} \quad&  \sum\nolimits_{n\in\caln} h_n(\la) + \la C,
	\end{align}
	where $\la$ is the dual variable associated with the capacity constraint and 
	\begin{align}\label{eq:conjugate-function}
		h_n(\la) = \max_{0\le y_n \le D_n} g_n(y_n) - \la y_n
	\end{align}
	is the conjugate function of $g_n(\cdot)$. Note that $h_n(\la)$ can be interpreted as the maximal pseudo-utility when a linear price $\la$ is used to estimate the cost of using knapsack capacity. Thus, the maximization problem~\eqref{eq:conjugate-function} has a similar physical meaning to the pseudo-utility maximization problem~\eqref{p:utility-maximization-got} in $\ota_{\phi}$.  This connection is formalized in the following proposition.  
	
	\begin{pro}\label{pro:conjugate-function}
		The conjugate function $h_n(\la)$ has the following properties:
		
		(i) $h_n(\la)$ is a non-increasing function;
		
		(ii) when $\phi(C)\ge U$, $h_n(\phi(w^{(n+1)})) = g_n(y_n^*) - \phi(w^{(n+1)}) y_n^*, \forall n\in\caln$, where $w^{(n+1)} = w^{(n)} + y_n^*$ and $y_n^*$ is the optimal solution to the problem~\eqref{p:utility-maximization-got}.
	\end{pro}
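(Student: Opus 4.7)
I would prove the two parts separately. Part (i) is immediate from the definition of the conjugate, and part (ii) is a first-order (KKT) argument. The hypothesis $\phi(C)\ge U$ enters only to rule out a problematic sub-case in part (ii).

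For part (i), fix $\lambda_1\le \lambda_2$. For every feasible $y_n\in[0,D_n]$ one has $g_n(y_n)-\lambda_2 y_n \le g_n(y_n)-\lambda_1 y_n$ since $y_n\ge 0$. Taking the maximum over the common feasible set on both sides gives $h_n(\lambda_2)\le h_n(\lambda_1)$, which is the stated monotonicity.

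For part (ii), the plan is to use that both problems are concave maximizations: $g_n$ is concave by Assumption~\ref{ass:value-function}(i), $\phi$ is non-decreasing so the cost integral in \eqref{p:utility-maximization-got} is convex in $y_n$, and $\lambda y_n$ is linear. I would then show that the KKT-optimizer $y_n^*$ of \eqref{p:utility-maximization-got} is also optimal for \eqref{eq:conjugate-function} at $\lambda=\phi(w^{(n+1)})$, whence the identity follows. Writing the first-order conditions from the objective's derivative $g_n'(y_n)-\phi(w^{(n)}+y_n)$ and using $w^{(n+1)}=w^{(n)}+y_n^*$, I would check three routine cases: $y_n^*=0$ gives $g_n'(0)\le \phi(w^{(n)})=\phi(w^{(n+1)})$, which is exactly the optimality condition for $y=0$ in \eqref{eq:conjugate-function} at $\lambda=\phi(w^{(n+1)})$; $y_n^*=D_n$ gives $g_n'(D_n)\ge \phi(w^{(n)}+D_n)=\phi(w^{(n+1)})$, matching the boundary condition for $y=D_n$; and $0<y_n^*<D_n$ with capacity not binding gives $g_n'(y_n^*)=\phi(w^{(n+1)})$, the interior stationarity of the conjugate problem. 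Concavity then yields $h_n(\phi(w^{(n+1)}))=g_n(y_n^*)-\phi(w^{(n+1)}) y_n^*$ in each of these cases.

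The main obstacle is the remaining boundary case, where the algorithm is capped not by $D_n$ but by the knapsack capacity, i.e., $y_n^*=C-w^{(n)}<D_n$ and $w^{(n+1)}=C$. Because $\phi=+\infty$ on $(C,\infty)$, the right sub-derivative of the cost integral at $y_n^*$ is unbounded, so KKT only delivers $g_n'(y_n^*)\ge \phi(C)$ rather than equality, and on its own this does not certify stationarity in the conjugate problem. The hypothesis $\phi(C)\ge U$ is precisely what resolves this: combined with Assumption~\ref{ass:value-function}(iii), which gives $g_n'(y_n^*)\le U\le \phi(C)$, it forces $g_n'(y_n^*)=\phi(C)=\phi(w^{(n+1)})$. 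Hence $y_n^*$ satisfies interior stationarity for the conjugate problem at $\lambda=\phi(w^{(n+1)})$, and the claimed identity follows in this last case as well.
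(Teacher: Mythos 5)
Your proposal is correct and takes essentially the same route as the paper: part (i) by pointwise domination and part (ii) by matching the KKT conditions of the pseudo-utility problem to those of the conjugate problem at $\lambda=\phi(w^{(n+1)})$. The only cosmetic difference is in handling the capacity boundary: the paper observes upfront that $\phi(C)\ge U$ guarantees a level $\bar w\le C$ with $\phi(\bar w)=U$ past which $\ota_\phi$ will never allocate (so the integral stays finite, smooth, and convex on the relevant range), whereas you resolve the same issue at the end with the squeeze $\phi(C)\le g_n'(y_n^*)\le U\le\phi(C)$; both close the same gap and are equivalent in substance.
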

	
	The proof of Proposition~\ref{pro:conjugate-function} is shown in Appendix~\ref{app:proof-conjugate-function}. Property (ii) in Proposition~\ref{pro:conjugate-function} implies that when the linear price $\la$ is set to the marginal cost $\phi(w^{(n+1)})$, the online solution $y_n^*$ of the problem~\eqref{p:utility-maximization-got} also maximizes the problem~\eqref{eq:conjugate-function} in the conjugate function. This relationship connects the online solution and the dual objective, and is important in the \opd analysis. 
	
	Let $w^{(N+1)} := w^{(N+1)}(\cali)$ denote the final utilization of the knapsack after executing the instance $\cali$ by $\ota_\phi$. 
	We divide the set $\Omega$ of all instances into two families $\Omega^1:=\{\cali: 0 \le w^{(N+1)}< \beta \}$ and $\Omega^2:=\{\cali: \beta \le w^{(N+1)}\le C \}$, which contain the instances whose final utilizations fall into the flat segment and the non-decreasing segment, respectively.
	$\Omega^1$ and $\Omega^2$ represent two different types of worst-case instances for \got. 
	$\Omega^1$ contains under-demand instances, in which the knapsack capacity is not used up even when all items are accepted to their weights. Thus, the offline solution is to accept all items.
	$\Omega^2$ includes over-demand instances, in which the capacity can be fully occupied by the offline solution in the worst case.
	This leads to different offline formulations for \opd analysis.

	{\bf Case I: $\cali\in\Omega^1$.} 
	The threshold function $\phi$ estimates the marginal cost of using the knapsack as $L$, the lower bound of the marginal value.
	Thus, all items in $\cali$ are accepted to their weights by $\ota_\phi$ and we have $\alg(\cali,\phi) = \sum_{n\in\caln} g_n(D_n)$.
	We can build an upper bound $\dual(\cali,\phi)$ of the offline optimum $\opt(\cali)$ by constructing a feasible dual solution $\hat{\la}$. 
	A natural choice of the feasible dual solution is $\hat{\la} = \phi(w^{(N+1)}) = L$, which is the marginal cost of the knapsack for packing one more unit of item. Substituting this dual solution to the dual objective in~\eqref{p:got-dual} gives $\dual(\cali,\phi) = \sum\nolimits_{n\in\caln} h_n(L) + L C = \sum\nolimits_{n\in\caln} g_n(D_n) + L (C - w^{(N+1)})$. However, $\dual(\cali,\phi)$ cannot be further bounded by $\alpha \alg(\cali,\phi)$, which can be observed when $w^{(N+1)}\to 0$.
	This is because the capacity parameter $C$ in the dual objective~\eqref{p:got-dual} is not appropriate for the under-demand instances whose capacity constraint will not be binding in the offline problem. 
	
	Instead of using $C$ as the capacity parameter in the offline problem~\eqref{p:got}, we can change it by adding $\sum_{n\in\caln}y_{n} \le w^{(N+1)}$ to the offline formulation. This change will not affect the offline solution for $\cali \in \Omega^1$ since the total accepted demand by offline problem cannot exceed the total weights of all items. In this way, the dual objective is changed to $\sum\nolimits_{n\in\caln} h_n(\hat\la) + \hat\la w^{(N+1)}$ and we have
	\begin{align}
		\opt(\cali) \le \sum\nolimits_{n\in\caln} h_n(L) + L w^{(N+1)} = \sum\nolimits_{n\in\caln} g_n(D_n)  = \alg(\cali,\phi).
	\end{align}
	Thus, we have $\opt(\cali)/\alg(\cali,\phi) \le 1, \forall \cali\in\Omega^1$.

	{\bf Case II: $\cali\in\Omega^2$.} An adversary can always add one more item with weight $C$ and marginal value $\phi(w^{(N+1)})$. This new item will be rejected by $\ota_{\phi}$ while the offline optimum will accept this item making the knapsack fully occupied.
	In this case, we can keep using the dual objective~\eqref{p:got-dual} and set the feasible dual solution to $\hat{\la} = \phi(w^{(N+1)})$.
	Based on weak duality, we have
	\begin{subequations}
		\label{eq:proof-got-opd}
		\begin{align}
			\nonumber
			\opt(\cali) &\le \sum\nolimits_{n\in\caln} h_n(\phi(w^{(N+1)})) + \phi(w^{(N+1)}) C\\
			\label{eq:proof-got-property1}
			&\le \sum\nolimits_{n\in\caln} h_n(\phi(w^{(n+1)})) + \phi(w^{(N+1)}) C\\
			\label{eq:proof-got-property2}
			&= \sum\nolimits_{n\in\caln} [g_n(y_n^*) - \phi(w^{(n+1)})y_n^*] + \phi(w^{(N+1)}) C\\
			\label{eq:proof-got-ineq1}
			&\le \sum\nolimits_{n\in\caln}g_n(y_n^*) + \phi(w^{(N+1)})C - \smallint_{0}^{w^{(N+1)}}\phi(u)du\\
			\label{eq:proof-got-sufficient-conds}
			&\le \sum\nolimits_{n\in\caln}g_n(y_n^*) + (\alpha -1)\smallint_{0}^{w^{(N+1)}}\phi(u)du\\
			\label{eq:proof-got-utility-max}
			&\le \sum\nolimits_{n\in\caln}g_n(y_n^*) + (\alpha -1)\sum\nolimits_{n\in\caln}g_n(y_n^*) = \alpha \alg(\cali).
		\end{align}
	\end{subequations}
	Based on the properties of the conjugate function in Proposition~\ref{pro:conjugate-function}, we can have equations~\eqref{eq:proof-got-property1} and \eqref{eq:proof-got-property2} when $\phi(C) \ge U$. Since $\phi$ is a non-decreasing function, we have $\phi(w^{(n+1)})y_n^* \ge \smallint_{w^{(n)}}^{w^{(n+1)}}\phi(u)du$ and $\sum\nolimits_{n\in\caln}\phi(w^{(n+1)})y_n^* \ge \sum\nolimits_{n\in\caln}\smallint_{w^{(n)}}^{w^{(n+1)}}\phi(u)du = \smallint_{0}^{w^{(N+1)}}\phi(u)du$. Inequality~\eqref{eq:proof-got-ineq1} holds. If the threshold function $\phi$ satisfies the differential equation~\eqref{eq:ode-sufficient} in Lemma~\ref{lem:sufficient}, we can have inequality~\eqref{eq:proof-got-sufficient-conds}.  
	Based on the pseudo-utility maximization problem~\eqref{p:utility-maximization-got}, non-negative utility is achieved for each $n\in\caln$, i.e., $g_n(y_n^*) \ge \smallint_{w^{(n)}}^{w^{(n+1)}}\phi(u)du, \forall n\in\caln$. Thus, we have $\sum_{n\in\caln}g_n(y_n^*)\ge\smallint_{0}^{w^{(N+1)}}\phi(u)du$, and this gives inequality~\eqref{eq:proof-got-utility-max}.
	Thus, we have $\opt(\cali)/ \alg(\cali) \le \alpha, \forall \cali \in\Omega^2$ if the sufficient conditions in Lemma~\ref{lem:sufficient} are satisfied.

	Finally, combining the two cases completes the proof.
\end{proof}

Continuing with the proof of Theorem \ref{thm:threshold-got}, we next prove that $\phi^*$ in~\eqref{eq:threshold-got} achieves the smallest competitive ratio among all threshold functions that satisfy the sufficient conditions in Lemma~\ref{lem:sufficient}. To do so, we make use of Gronwall's Inequality, summarized below.

\begin{lem}[Gronwall's Inequality, Theorem 1, p.356, \cite{Mitrinovic1991}, and Lemma 4, \cite{Jones1964}]\label{lem:Gronwall-inequality}
	Let $f(x)$ be a function defined on $[\underline{x},\overline{x}]$ either continuous or of bounded variation. Let $a(x)$ and $b(x)$ be integrable functions, and $b(x)\ge 0$ for $x\in [\underline{x},\overline{x}]$. We can claim the following statements.
	
	(i) If $f(x) \ge a(x) + b(x)\smallint_{\underline{x}}^{x} f(u) du, x\in[\underline{x},\overline{x}]$, then we have
	\begin{align}\label{eq:Gronwall-result1}
		f(x) \ge a(x) + b(x)\smallint_{\underline{x}}^{x} a(u) \exp({\smallint_{u}^{x}b(s)ds}) du,\quad x\in[\underline{x},\bar{x}].
	\end{align}
	
	(ii) The result remains valid if $\ge$ is replaced by $\le$ in both conditions and results of statement (i).
	
	(iii) Equation \eqref{eq:Gronwall-result1} holds in equality for $x\in[\underline{x},\overline{x}]$ if the condition holds in equality for $x\in[\underline{x},\overline{x}]$.
\end{lem}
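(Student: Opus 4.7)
The plan is to reduce the integral inequality to a linear differential inequality by introducing the cumulative integral $F(x) := \smallint_{\underline{x}}^{x} f(u)\,du$, and then solve it via an integrating factor. For statement (i), the hypothesis rewrites as $F'(x) \ge a(x) + b(x) F(x)$ (in the a.e.\ sense; when $f$ is only of bounded variation we interpret $F$ as the cumulative integral of $f$ viewed as an integrable function, and the fundamental theorem of calculus for Lebesgue integrals applies). Setting $\mu(x) := \exp\!\bigl(-\smallint_{\underline{x}}^{x} b(s)\,ds\bigr)$, which is positive because $b \ge 0$ is integrable, the product rule gives $\frac{d}{dx}[\mu(x) F(x)] = \mu(x)\bigl[F'(x) - b(x) F(x)\bigr] \ge \mu(x) a(x)$.

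Next I would integrate this inequality from $\underline{x}$ to $x$, using $F(\underline{x}) = 0$ and $\mu(\underline{x}) = 1$, to obtain
\begin{equation*}
\mu(x) F(x) \ \ge\ \smallint_{\underline{x}}^{x} \mu(u)\, a(u)\, du.
\end{equation*}
Dividing through by $\mu(x) > 0$ and noting that $\mu(u)/\mu(x) = \exp\!\bigl(\smallint_{u}^{x} b(s)\,ds\bigr)$ yields
\begin{equation*}
F(x) \ \ge\ \smallint_{\underline{x}}^{x} a(u)\, \exp\!\bigl(\smallint_{u}^{x} b(s)\,ds\bigr)\, du.
\end{equation*}
Plugging this lower bound on $F(x)$ back into the original hypothesis $f(x) \ge a(x) + b(x) F(x)$ gives exactly~\eqref{eq:Gronwall-result1}, proving (i).

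For statement (ii), I would apply (i) to the function $\tilde f := -f$ with $\tilde a := -a$: the reversed hypothesis $f \le a + b \smallint f$ becomes $\tilde f \ge \tilde a + b \smallint \tilde f$, (i) yields the analogous lower bound on $\tilde f$, and negating recovers the upper bound on $f$. Statement (iii) is immediate from the derivation: if the hypothesis holds with equality throughout $[\underline{x},\overline{x}]$, then every inequality above—the differential inequality, the integrated version after the integrating factor, and the final substitution—becomes an equality, so~\eqref{eq:Gronwall-result1} holds with equality.

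The main obstacle is not the algebra but the regularity bookkeeping: the statement only assumes $f$ is continuous or of bounded variation and $a,b$ merely integrable, so I would need to justify the product rule and the fundamental theorem of calculus for $\mu F$ in this low-regularity setting. For continuous $f$ this is routine; for $f$ of bounded variation, the cleanest route is to treat $F$ as the Lebesgue integral of $f$ (so $F$ is absolutely continuous with $F' = f$ a.e.), observe that $\mu$ is absolutely continuous because $b$ is integrable, and then invoke the product rule for absolutely continuous functions to integrate $(\mu F)'$ rigorously. Since this is a classical lemma already cited to Mitrinovi\'c~\cite{Mitrinovic1991} and Jones~\cite{Jones1964}, I would keep this technical verification brief and refer to those sources for the measure-theoretic details.
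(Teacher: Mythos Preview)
The paper does not prove this lemma: it is quoted as a known result with citations to Mitrinovi\'c and Jones, and then applied as a tool in the subsequent arguments. Your integrating-factor proof is the standard one and is correct; in particular you use the hypothesis $b(x)\ge 0$ exactly where it is needed, namely when multiplying the lower bound on $F(x)$ by $b(x)$ to preserve the inequality in the final substitution.
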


Applying Gronwall's Inequality to the differential equation in~\eqref{eq:ode-sufficient} gives
\begin{align}
	\varphi(w) \le \frac{\alpha L \beta}{C} + \frac{\alpha}{C}{\smallint_{\beta}^{w}}\frac{\alpha L \beta}{C} e^{(w-u)\alpha/C} du = \frac{\alpha L \beta}{C} e^{(w-\beta)\alpha/C}, w\in[\beta,C].
\end{align}
Since $\varphi(C) \ge U$, we have
$U \le \varphi(C) \le \frac{\alpha L \beta}{C} e^{(C-\beta)\alpha/C}$.
Combining with the sufficient condition~\eqref{eq:ode-sufficient}, we can conclude that the smallest $\alpha$ is achieved when all above inequalities hold in equality. It is equivalent that all inequalities in the sufficient condition hold in equality based on statement (iii) in Lemma~\ref{lem:Gronwall-inequality}. Solving those equality equations gives the threshold function $\phi^*$ in~\eqref{eq:threshold-got} and the resulting competitive ratio is $\alpha_{\phi^*} = 1 + \ln\theta$.

\subsection{Proof of Theorem~\ref{thm:opt-CR}: Bounding the Optimal Competitive Ratio}
\label{subsec:proof-opt-CR}

Note that the optimal competitive ratio achievable for the one-way trading problem has been shown to be $1+\ln\theta$~\cite{yang2019online}.
Since one-way trading is a special case of \got, the competitive ratio of \got is also lower bounded by $1+\ln\theta$.
Thus, Theorem~\ref{thm:threshold-got} equivalently shows that $\ota_{\phi^*}$ achieves the optimal competitive ratio. 
However, our goal in this section is to provide a new proof of the optimal competitive ratio based on understanding special instances. This, in turn, builds a connection between the online algorithm and the worst-cast instance. 

Our approach is to first characterize a necessary condition that any $\alpha$-competitive online algorithm must satisfy, and then derive the lower bound as the minimal $\alpha$ ensuring that there exist online algorithms satisfying the necessary condition.  The necessary condition is constructed based on a subset of instances $\Omega_{CN} \subseteq \Omega$ called continuously non-decreasing instances.

\begin{dfn}\label{dfn:instance}
	An instance is called $p$-continuously non-decreasing, $p\in[L, U]$, if 
	\begin{itemize}
		\item the instance is composed of a sequence of items indexed by $n\in\caln$. Each item has a linear value function $g_n(y_n) = v_n y_n$ and its weight is $D_n = C$. 
		\item the marginal value of the first item is $L$, i.e., $v_1 = L$
		\item the increment of the marginal values between successive items is non-negative and arbitrarily small, i.e., $0\le v_{n+1} - v_{n} \le \epsilon$, where $\forall \epsilon > 0$. 
		\item the marginal value of the last item is $p$, i.e., $v_N = p$.
	\end{itemize}
\end{dfn}
Let $\cali_p$ denote the $p$-continuously non-decreasing instance and let $\Omega_{CN} := \{\cali_p\}_{p\in[L,U]}$.

\begin{dfn}[Utilization Function]
	A utilization function $\psi(p):[L,U]\to[0,C]$ is defined as the final utilization of the knapsack after executing the instance $\cali_p$ by an online algorithm.
\end{dfn}
Note that every online algorithm can be mapped to a utilization function via $\Omega_{CN}$.
The key to our approach here is Lemma~\ref{lem:necessary-condition} and we next show that the utilization function of any $\alpha$-competitive online algorithm must satisfy the necessary condition in Lemma~\ref{lem:necessary-condition}.

\begin{proof}[Proof of Lemma~\ref{lem:necessary-condition}]
	Since online algorithms make real-time irrevocable decisions only based on causal information, $\psi(p)$ is a non-decreasing function in $[L,U]$. Since the maximum utilization is $C$, the utilization function must satisfy the boundary condition $\psi(U) \le C$. Additionally, by definition, the total value achieved by an $\alpha$-competitive online algorithm is at least $1/\alpha$ of the offline optimum for any arrival instances.    Thus, under the instance $\cali_L$, we have
	\begin{align*}
		{\opt}(\cali_L) &= LC \quad {\text{and}}\quad {\alg}(\cali_L) =L \psi(L),
	\end{align*}
	and an $\alpha$-competitive algorithm must ensure ${\alg}(\cali_L) \ge {\opt(\cali_L)}/{\alpha}$, which gives the boundary condition $\psi(L) \ge {C}/{\alpha}$.
	
	More specifically, under the instance $\cali_p, p\in(L,U]$, we have
	\begin{align*}
		{\opt}(\cali_p) = p C, \quad {\text{and}}\quad {\alg}(\cali_p) = L\psi(L)+\smallint_{L}^{p} u d\psi(u),
	\end{align*}
	where $u d\psi(u)$ denotes the value achieved by the item with marginal value $u$. An $\alpha$-competitive algorithm must ensure ${\alg}(\cali_p) \ge {{\opt}(\cali_p)}/{\alpha}$ which gives the differential equation in~\eqref{eq:necessary-condition}. Combining all above conditions gives the necessary condition \eqref{eq:necessary-condition}.
\end{proof}

Finally, to complete the proof of the theorem we  derive the minimal $\alpha$ that can ensure there exists a non-decreasing utilization function $\psi$ satisfying the necessary condition~\eqref{eq:necessary-condition}.
Using integration by parts, we have
$L\psi(L)+{\smallint_{L}^{p}} u d\psi(u) = L\psi(L)+[\psi(u)u ]|_{L}^p - {\smallint_{L}^{p}} \psi(u)du = \psi(p)p - {\smallint_{L}^{p}} \psi(u)du$.
Combining above equation and the necessary condition \eqref{eq:necessary-condition}, we see that the utilization function $\psi$ corresponding to any $\alpha$-competitive online algorithms must satisfy 
$\psi(p)p - \smallint_{L}^{p} \psi(u)du \ge pC/\alpha, p\in[L,U]$.
Applying Gronwall's Inequality in Lemma \ref{lem:Gronwall-inequality}, we obtain
\begin{align*}
	\psi(p) \ge \frac{C}{\alpha} + \frac{1}{p} {\smallint_{L}^{p}} \frac{C}{\alpha} \exp({\smallint_{u}^{p}\frac{1}{s}ds}) du = \frac{C}{\alpha} \cdot \left[1 + {\smallint_{L}^{p}} \frac{1}{u}du \right]
	= \frac{C}{\alpha} \cdot \left[1 + \ln\left(\frac{p}{L}\right) \right]. 
\end{align*}
Since $\psi(U) \le C$, we have
$\frac{C}{\alpha} \left[1 + \ln\theta \right] = \psi(U) \le C$, which gives $\alpha \ge 1 + \ln\theta$.
And the minimal $\alpha = 1 + \ln\theta$ can be achieved when inequalities in~\eqref{eq:necessary-condition} all hold in equality. Thus, $1+\ln\theta$ is a lower bound of the competitive ratio.

\subsection{Two Variants of \got} \label{subsec:variants}

In order to show the generality of our approach for \got, we further devise $\ota_\phi$ for two variants of \got using the approach.  In both cases we obtain results that match or improve the state-of-the-art.

\subsubsection*{Variant 1: \got without leftover capacity}
This variant considers the classical setting of the one-way trading problem in which after the last item, the remaining capacity of the knapsack, if any, can be used to pack items with the lowest marginal value $L$. 
When the value function is linear, this variant is studied by \cite{El2001}. It is solved using a threat-based online algorithm, and the optimal competitive ratio that is achieved is the solution of the equation $\alpha = \ln\frac{U-L}{\alpha L - L}$. Concretely, the offline formulation of this variant can be stated as   
\begin{align}\label{p:varaint1-got}
	\max_{y_n} \quad \sum\nolimits_{n\in\caln} g_n(y_n) + \left(C - \sum\nolimits_{n\in\caln} y_n \right)L \quad {\rm s.t.} \sum\nolimits_{n\in\caln}y_n \le C, \quad 0 \le y_n \le D_n, \forall n\in\caln.
\end{align}
Note that this variant cannot be considered as a \got with a value function $g_n(y_n) - L y_n$ since (i) its marginal value is lower bounded by $0$, which does not satisfy Assumption~\ref{ass:value-function}, and (ii) the total value is lower bounded by $CL$ even when no item is accepted by an online algorithm. 
The following Corollary~\ref{lem:threshold-varaint1-got} (see proof in Appendix~\ref{app:proof-variant1-got-threshold}) shows that we can design $\ota_\phi$ to achieve the optimal competitive ratio of this variant.

\begin{cor}\label{lem:threshold-varaint1-got}
	Under Assumption~\ref{ass:value-function}, if the threshold function of $\ota_\phi$ for Variant 1 of \got is
	\begin{align}\label{eq:threshold-got-variant1}
		\phi^*(w)= L + (U-L) e^{\frac{\alpha_{\phi^*}}{C}w - \alpha_{\phi^*}}, w\in[0, C],
	\end{align} 
	the competitive ratio $\alpha_{\phi^*}$ of $\ota_{\phi^*}$ is the solution of the equation $\alpha_{\phi^*} = \ln\frac{U-L}{\alpha_{\phi^*} L - L}$.
\end{cor}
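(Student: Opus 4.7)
The plan is to apply the instance-dependent online primal--dual framework used for Theorem~\ref{thm:threshold-got}, carrying through the additional leftover-capacity term $L(C-\sum_n y_n)$ that now appears in both the online and offline objectives of problem~\eqref{p:varaint1-got}. A first observation is that the ``under-demand'' case split from the proof of Theorem~\ref{thm:threshold-got} should no longer be necessary: any unused capacity already earns value at rate $L$ for the online algorithm, so a single-segment threshold with no flat portion is expected to suffice, matching the form claimed in~\eqref{eq:threshold-got-variant1}.

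First I would derive the dual of~\eqref{p:varaint1-got}. Absorbing the leftover term into the Lagrangian of the capacity constraint yields $\min_{\mu\ge L}\sum_{n\in\caln} h_n(\mu)+\mu C$, where the restriction $\mu\ge L$ reflects the shift of the capacity multiplier by $L$. Because $\phi^*(w)\ge L$ for every $w\in[0,C]$, I would pick the feasible dual choice $\hat\mu=\phi(w^{(N+1)})$ and replicate the chain of inequalities in~\eqref{eq:proof-got-opd}, using Proposition~\ref{pro:conjugate-function} (note that $\phi^*(C)=U$) and the monotonicity of $\phi$, to obtain
\begin{align*}
\opt(\cali) \le \sum\nolimits_{n\in\caln} g_n(y_n^*) - \smallint_0^{w^{(N+1)}}\phi(u)\,du + \phi(w^{(N+1)})\,C.
\end{align*}

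Combining this bound with $\alg(\cali,\phi)=\sum_n g_n(y_n^*)+L(C-w^{(N+1)})$ and the pseudo-utility lower bound $\sum_n g_n(y_n^*)\ge \smallint_0^{w^{(N+1)}}\phi(u)\,du$, requiring $\opt(\cali)\le\alpha\,\alg(\cali,\phi)$ for all $\cali$ collapses to the single sufficient condition
\begin{align*}
\phi(w)\,C \le \alpha\smallint_0^{w}\phi(u)\,du + \alpha L(C-w),\quad w\in[0,C],
\end{align*}
together with the boundary $\phi(C)\ge U$. Binding this integral inequality and differentiating yields the linear ODE $\phi'(w)=\frac{\alpha}{C}(\phi(w)-L)$ with $\phi(0)=\alpha L$, whose solution rewrites as $\phi^*(w)=L+(U-L)e^{\alpha w/C-\alpha}$ once the boundary $\phi(C)=U$ pins down $e^{\alpha}=(U-L)/((\alpha-1)L)$, i.e., $\alpha=\ln\frac{U-L}{\alpha L - L}$. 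Alternatively, Gronwall's Inequality (Lemma~\ref{lem:Gronwall-inequality}) applied directly to the integral inequality yields the same closed-form upper bound on $\phi$, and the boundary condition then produces the same transcendental equation for the smallest feasible $\alpha$.

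The hard part will be executing the OPD chain cleanly while carrying the extra $L(C-w^{(N+1)})$ term: instead of producing a case split as in Theorem~\ref{thm:threshold-got}, it must appear as the \emph{inhomogeneous} forcing term $\alpha L(C-w)$ in the sufficient condition displayed above, which is precisely what eliminates the flat segment of $\phi^*$. Once that algebraic step is correct, the ODE analysis is routine, and the resulting transcendental equation matches the optimal competitive ratio proved in~\cite{El2001} for the linear-value special case, confirming that $\ota_{\phi^*}$ with the stated threshold function is optimal for this variant.
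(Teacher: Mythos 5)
Your proposal is correct and follows essentially the same route as the paper's proof in Appendix~\ref{app:proof-variant1-got-threshold}: the same dual formulation (your $\mu=\lambda+L$ is just a reparameterization of the paper's choice $\hat\lambda=\phi(w^{(N+1)})-L$), the same online primal--dual chain via Proposition~\ref{pro:conjugate-function}, the same sufficient integral inequality $\phi(w)C\le\alpha\smallint_0^w\phi(u)\,du+\alpha L(C-w)$, and the same Gronwall/ODE step with boundary $\phi(C)\ge U$ yielding the transcendental equation $\alpha=\ln\frac{U-L}{\alpha L - L}$.
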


\begin{cor}\label{lem:varaint1-got}
	The optimal competitive ratio for Variant 1 of \got is the solution of $\alpha^* =  \ln\frac{U-L}{\alpha^* L - L}$.
\end{cor}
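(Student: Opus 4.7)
The plan is to prove a matching lower bound on the competitive ratio of Variant 1 of \got\ using the same instance-dependent strategy as in the proof of Theorem~\ref{thm:opt-CR}; combined with Corollary~\ref{lem:threshold-varaint1-got}, which exhibits an $\ota$ achieving the claimed ratio, this establishes optimality.

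I would re-use the family of $p$-continuously non-decreasing instances $\{\cali_p\}_{p\in[L,U]}$ of Definition~\ref{dfn:instance} and the associated utilization function $\psi(p)$. The only change from \got\ is that the unused knapsack capacity is now credited at price $L$, so that
\begin{align*}
\alg(\cali_p) = L\psi(L) + \smallint_L^p u\, d\psi(u) + L\bigl(C - \psi(p)\bigr), \qquad \opt(\cali_p) = pC.
\end{align*}
Applying integration by parts exactly as in Section~\ref{subsec:proof-opt-CR}, the requirement $\alg(\cali_p) \geq \opt(\cali_p)/\alpha$ for every $p \in [L,U]$ becomes the necessary condition
\begin{align*}
(p-L)\psi(p) - \smallint_L^p \psi(u)\, du \geq \frac{C(p - \alpha L)}{\alpha}, \quad p \in [L,U],
\end{align*}
together with $\psi$ non-decreasing, non-negative, and satisfying $\psi(U) \leq C$.

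A key structural observation is that the right-hand side is non-positive on $[L,\alpha L]$, so the condition is automatically satisfied there by any non-decreasing $\psi \geq 0$; the binding region is $(\alpha L, U]$. To extract the smallest admissible $\alpha$, I would minimize $\psi$ pointwise: set $\psi \equiv 0$ on $[L, \alpha L]$ and saturate the constraint above $\alpha L$. Differentiating the tight identity yields the linear ODE $\psi'(p) = C/(\alpha(p - L))$ with initial condition $\psi(\alpha L) = 0$, whose solution is $\psi(p) = (C/\alpha)\ln\frac{p - L}{\alpha L - L}$. Enforcing $\psi(U) \leq C$ finally gives $\alpha \geq \ln\frac{U - L}{\alpha L - L}$, and the smallest such $\alpha$ is precisely the $\alpha^*$ claimed.

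The main obstacle is that the ODE $\psi'(p) = C/(\alpha(p - L))$ is singular at $p = L$, so a naive Gronwall-style integration starting from $L$, as in the proof of Theorem~\ref{thm:opt-CR}, diverges. The resolution is to exploit the leftover-capacity term in $\alg(\cali_p)$, which slackens the constraint on $[L, \alpha L]$ and effectively shifts the active starting point from $L$ to $\alpha L$, regularising the integration. This structural relaxation, absent from the base \got\ analysis, is exactly what produces the improved implicit equation $\alpha^* = \ln\frac{U-L}{\alpha^* L - L}$ rather than $1 + \ln\theta$.
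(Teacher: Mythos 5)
Your proof is correct and follows essentially the same route as the paper's Appendix proof: same family of $p$-continuously non-decreasing instances, same necessary condition on the utilization function, same observation that the constraint is slack on $[L,\alpha L]$ so one may take $\psi\equiv 0$ there and only the region $(\alpha L, U]$ is binding, leading to the same lower bound $\psi(p)\ge \frac{C}{\alpha}\ln\frac{p-L}{\alpha L - L}$ and hence $\alpha \ge \ln\frac{U-L}{\alpha L - L}$, matched by Corollary~\ref{lem:threshold-varaint1-got}. The only difference is presentational: the paper applies Gronwall's Inequality (Lemma~\ref{lem:Gronwall-inequality}) starting from $\alpha L$ to obtain the lower bound on $\psi$ rigorously, whereas you solve the saturated ODE $\psi'(p)=C/(\alpha(p-L))$ directly and appeal informally to "pointwise minimization"; Gronwall's lemma is exactly what justifies that the equality solution bounds every admissible $\psi$ from below, so your argument is the same one phrased less formally.
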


We can prove Corollary~\ref{lem:varaint1-got} using the same approach in Section~\ref{subsec:proof-opt-CR} and the worst-case instance is still the continuously non-decreasing instance $\Omega_{CN}$. The detail is presented in Appendix~\ref{app:proof-varaint1-got-opt}.

\subsubsection*{Variant 2: Relaxed \got}
In this variant, condition (iii) in Assumption~\ref{ass:value-function} is relaxed to the following: 
\begin{ass}\label{ass:relaxed}
	The derivative of the value function satisfies $ L \le g_n'(0) \le U$ and $ {g_n(D_n)}/{D_n} \ge {L}/{c}, \forall n\in\caln$, where $c \ge 1$ is a given parameter.
\end{ass}
Assumption~\ref{ass:relaxed} bounds the marginal value of the value function at origin between $L$ and $U$, and the average value is lower bounded by $L/c$. This new assumption allows a broader class of value functions whose marginal values may reach $0$ (e.g., quadratic functions that can reach their maximums).
The assumption has also been introduced by~\cite{Lin2019}, in which a CR-Pursuit online algorithm is proposed to solve \got (without rate limits) and is shown to achieve a competitive ratio upper bounded by $O(c(\ln(\theta)+1))$.  Our approach yields the following result.

\begin{cor}\label{lem:varaint2-got}
	Under the conditions (i) and (ii) in Assumption~\ref{ass:value-function} and Assumption~\ref{ass:relaxed}, when the threshold function of $\ota_\phi$ for Variant 2 of \got is given by
	\begin{align}\label{eq:variant2-got-threshold-fun}
		\phi^*(w) = 
		\begin{cases}
			\frac{L}{c} \cdot \frac{e^{w/C} - 1}{e^{\beta^*/C} - 1} & w\in[0,\beta^*)\\
			\frac{L}{c} \cdot e^{(w-\beta^*)\ln(c\theta)/(C-\beta^*)} & w \in [\beta^*, C]
		\end{cases},
	\end{align}
	where $\beta^* = (W(c\theta\ln(c\theta)/e) - \ln(c\theta) + 1)C$, the competitive ratio of $\ota_{\phi^*}$ is
	$ \frac{\ln(c\theta)}{\ln(c\theta) - W(c\theta \ln(c\theta)/e) }$.
\end{cor}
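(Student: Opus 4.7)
The plan is to apply the instance-dependent online primal-dual framework of Section~\ref{subsec:approach} to the relaxed setting, mirroring the proof of Theorem~\ref{thm:threshold-got} in Section~\ref{subsec:proof-threshold-got} but adapting both cases to Assumption~\ref{ass:relaxed}. The key distinction is that $g_n'(y)$ is no longer bounded below by $L$ at every $y$: only $g_n'(0)\ge L$ holds, together with the weaker average-value bound $g_n(D_n)/D_n\ge L/c$.

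First, I would derive an analogue of Lemma~\ref{lem:sufficient} by partitioning the instance space into $\Omega^1=\{\cali:w^{(N+1)}<\beta\}$ and $\Omega^2=\{\cali:w^{(N+1)}\ge\beta\}$. In the over-demand case, the primal-dual chain in~\eqref{eq:proof-got-opd} carries through with two changes: the boundary value at $w=\beta$ becomes $\varphi(\beta)=L/c$ rather than $L$, since the smallest marginal value that can trigger acceptance under the relaxed assumption is $L/c$, while the top condition $\varphi(C)\ge U$ is unchanged. This yields a differential inequality $\varphi(w)C\le\alpha\int_0^w\phi(u)\,du$ on $[\beta,C]$ with $\varphi(\beta)=L/c$; solving it at equality via Gronwall's Inequality (Lemma~\ref{lem:Gronwall-inequality}) gives the second segment $\phi^*(w)=(L/c)e^{(w-\beta)\ln(c\theta)/(C-\beta)}$ and forces the relation $\alpha(C-\beta)/C=\ln(c\theta)$.

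The under-demand case requires genuinely new analysis because Assumption~\ref{ass:relaxed} permits concave $g_n$ whose marginal value drops well below $L$, so the trivial step $\opt(\cali)=\alg(\cali)=\sum_n g_n(D_n)$ from Case I of Lemma~\ref{lem:sufficient} is no longer available. I would construct an instance-dependent dual by adding the constraint $\sum_n y_n\le w^{(N+1)}$ to the offline problem, then combine the average-value bound $g_n(D_n)\ge(L/c)D_n$ with the pseudo-utility non-negativity $g_n(y_n^*)\ge\int_{w^{(n)}}^{w^{(n+1)}}\phi(u)\,du$ implied by~\eqref{p:utility-maximization-got} to bound $\opt/\alg$ from above. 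Imposing this bound uniformly in $w^{(N+1)}\in[0,\beta]$ produces a differential condition of the form $C\varphi'(w)=\varphi(w)+A$ for a constant $A$, whose equality solution with $\varphi(0)=0$ and $\varphi(\beta)=L/c$ gives exactly the first segment $\phi^*(w)=(L/c)(e^{w/C}-1)/(e^{\beta/C}-1)$.

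The main obstacle is choosing $\beta$ so that a common competitive ratio $\alpha$ emerges from both segments. The condition from the second segment gives $\alpha=C\ln(c\theta)/(C-\beta)$; matching this with the relation from the first segment (which couples $\alpha$ to $\beta/C$ through the exponential form of the solution) yields a transcendental equation in $\beta/C$ of the form $xe^{x}=c\theta\ln(c\theta)/e$. This is precisely the defining equation of the Lambert $W$ function, so inverting it produces $\beta^*=(W(c\theta\ln(c\theta)/e)-\ln(c\theta)+1)C$ and the claimed competitive ratio $\alpha_{\phi^*}=\ln(c\theta)/(\ln(c\theta)-W(c\theta\ln(c\theta)/e))$. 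A final verification is that $\beta^*\in[0,C]$ and that $\phi^*$ is continuous and non-decreasing across the junction $w=\beta^*$, both of which follow from monotonicity of $W$ on positive arguments.
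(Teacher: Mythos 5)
Your over-demand case matches the paper's (boundary condition $\varphi(\beta)=L/c$, differential inequality via Gronwall, and the relation $\alpha(C-\beta)/C=\ln(c\theta)$), and your Lambert-$W$ algebra recovering $\beta^*$ and $\alpha_{\phi^*}$ is correct. The problem is your treatment of the under-demand case, which contains a concrete gap.

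You propose to add the constraint $\sum_n y_n \le w^{(N+1)}$ to the offline problem, mirroring Case I of Lemma~\ref{lem:sufficient}. In the original \got under Assumption~\ref{ass:value-function}(iii) this is valid precisely because, when the threshold is flat at $L$ and every $g_n'\ge L$, the algorithm $\ota_\phi$ packs every arriving item to its full size, so $w^{(N+1)}=\sum_n D_n$ and no offline solution can pack more. Under Assumption~\ref{ass:relaxed} this fails: $g_n'$ may fall below the first-segment threshold $\varphi_1$ (indeed to zero) well before $y_n=D_n$, so $\ota_{\phi^*}$ can reject part of an item even at low utilization, giving $w^{(N+1)}<\sum_n D_n$. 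The offline optimum may then allocate more than $w^{(N+1)}$, so the added constraint strictly lowers the offline value and the dual of the tightened problem is no longer an upper bound on $\opt(\cali)$. The premise that justifies the instance-dependent constraint in Case I — that it leaves the offline optimum unchanged — is exactly what Assumption~\ref{ass:relaxed} destroys, which is why the paper says the Case I argument fails here. Relatedly, the ODE you quote, $C\varphi'(w)=\varphi(w)+A$, does not come from that constrained dual: replacing $C$ by $w^{(N+1)}$ in the chain gives $\phi(w)w-\int_0^w\phi(u)\,du\le(\alpha-1)(L/c)w$, whose equality version is $w\varphi'(w)=A$, a different and ill-posed ODE at $w=0$. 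The paper instead keeps the dual with capacity $C$, reuses the Case II chain down to~\eqref{eq:proof-got-ineq1}, i.e.\ $\opt(\cali)\le\sum_n g_n(y_n^*)+\phi(w^{(N+1)})C-\int_0^{w^{(N+1)}}\phi(u)\,du$, and imposes $\varphi_1(w)C\le\int_0^w\varphi_1(u)\,du+(\alpha-1)(L/c)w$ on $[0,\beta)$ (Claim~\ref{lem:sufficient-variant2-got}(i)); differentiating this at equality is what yields $C\varphi'(w)=\varphi(w)+A$ with $A=(\alpha-1)L/c$ and $\varphi_1(0)=0$. The final inequality $(L/c)w^{(N+1)}=\sum_n(L/c)y_n^*\le\sum_n g_n(y_n^*)$ is where Assumption~\ref{ass:relaxed} and concavity enter (via $g_n(y_n^*)/y_n^*\ge g_n(D_n)/D_n\ge L/c$); the pseudo-utility non-negativity you invoke plays no role in this step. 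Repairing these two points makes your argument coincide with the paper's.
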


In the corollary above, $W(\cdot)$ is the Lambert-$W$ function, which is the inverse function of $f(x) = xe^x$. 
Since $W(x) \le \ln(x) - \ln\ln(x) + O(1)$, we have $\ln(c\theta) - W(c\theta \ln(c\theta)/e) \ge O(1)$. Consequently, the competitive ratio achieved by $\ota_{\phi^*}$ in Corollary~\ref{lem:varaint2-got} is $O(\ln(c\theta))$, improving the upper bound in~\cite{Lin2019} from linear order $O(c(\ln(\theta)+1))$ in $c$ to logarithmic order.

\section{Competitive Algorithms for \fomkp}
\label{sec:omkp}

In this section, we prove our main results, which bound the competitive ratio for the general form of \fomkp. To do this, we use the same general approach as illustrated in the previous section for \got. However, the generality of \fomkp adds considerable complexity to this case. 
We primarily focus on the proof of Theorem~\ref{thm:threshold-omkp-aggregate} for the \fomkp with aggregate value functions. The proof of Theorem~\ref{thm:threshold-omkp-separable} for the separable functions proceeds much the same. Thus, we highlight the key differences here and defer the full proof to Appendix~\ref{app:proof-thm-fomkp-separable}.

\textbf{Proof of Theorem~\ref{thm:threshold-omkp-aggregate}: Aggregate Functions.} First, we construct a counterpart to Lemma~\ref{lem:sufficient} for \got, providing sufficient conditions for designing the threshold function. The sufficient condition on the threshold function of each knapsack is not a trivial extension of the single knapsack case.

\begin{lem}
	\label{lem:fomkp-sufficient-aggregate}
	Under Assumption~\ref{ass:value-function}, $\ota_\phi$ for \fomkp with aggregate value functions is $\alpha$-competitive if the threshold function $\phi=\{\phi_m\}_{m\in\calm}$ is in the form of, $\forall m\in\calm$, 
	\begin{align*}
		\phi_m(w)=
		\begin{cases}
			L&  w\in[0,\beta_m)\\
			\varphi_m(w) & w\in[\beta_m, C_m]
		\end{cases},
	\end{align*}
	where $\beta_m \in [0,C_m]$ is a utilization threshold and $\varphi_m$ is a non-decreasing function, and $\phi_m$ satisfies
	\begin{align}\label{eq:fomkp-aggregate-sufficient}
		\begin{cases}
			\varphi_m(w)C_m \le \alpha \smallint_{0}^{w}\phi_m(u)du - L\beta_m, \quad  w\in[\beta_m,C_m],\\
			\varphi_m(\beta_m) = L,\varphi_m(C_m) \ge U.
		\end{cases}
	\end{align} 
\end{lem}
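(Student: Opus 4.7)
The plan is to follow the online primal-dual strategy used in the proof of Lemma~\ref{lem:sufficient}, now adapted to multiple knapsacks with an aggregate value function. First, I would write the Lagrangian dual of the offline \fomkp~\eqref{p:omkp} with $g_n(\by_n)=g_n(\sum_m y_{nm})$, introducing dual variables $\lambda_m\!\ge\!0$ for the capacity constraints~\eqref{p:2} and absorbing the demand~\eqref{p:1} and rate constraints~\eqref{p:3} into per-item conjugate functions $h_n(\{\lambda_m\}):=\max_{\by_n\in\caly_n}\,g_n(\by_n)-\sum_m\lambda_m y_{nm}$. I would then establish the multi-knapsack analog of Proposition~\ref{pro:conjugate-function}: under $\phi_m(C_m)\ge U$, evaluating the dual at $\hat\lambda_m=\phi_m(w_m^{(n+1)})$ makes the KKT conditions of the conjugate coincide with those of the pseudo-utility problem~\eqref{p:utility-maximization}, yielding the identity $h_n(\{\hat\lambda_m\})=g_n(\by_n^*)-\sum_m\phi_m(w_m^{(n+1)})y_{nm}^*$.

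Next, I would partition the knapsacks by their final utilization, $\calm_1=\{m:w_m^{(N+1)}<\beta_m\}$ and $\calm_2=\{m:w_m^{(N+1)}\ge\beta_m\}$, and adapt the instance-dependent offline modification from Case~I of the \got proof: for each $m\in\calm_1$, I would augment the primal with $\sum_n y_{nm}\le w_m^{(N+1)}$. Because $\phi_m\equiv L$ on $[0,\beta_m)$, the pseudo-utility maximizer in~\eqref{p:utility-maximization} assigns each item up to its rate/demand cap across $\calm_1$; combined with the aggregate form of $g_n$, this pins the offline's usage of every $m\in\calm_1$ at $w_m^{(N+1)}$ as well, so the modification preserves $\opt(\cali)$. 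The dual coefficient on $\lambda_m$ thereby becomes $w_m^{(N+1)}$ for $m\in\calm_1$ and remains $C_m$ for $m\in\calm_2$. Weak duality, the conjugate identity, and monotonicity of each $\phi_m$ (which gives $\sum_n\phi_m(w_m^{(n+1)})y_{nm}^*\ge\smallint_0^{w_m^{(N+1)}}\phi_m(u)du$) then produce
\begin{align*}
\opt(\cali)\le\sum_{n\in\caln}g_n(\by_n^*)-\sum_{m\in\calm}\smallint_0^{w_m^{(N+1)}}\phi_m(u)du+\sum_{m\in\calm_1}Lw_m^{(N+1)}+\sum_{m\in\calm_2}\varphi_m(w_m^{(N+1)})C_m.
\end{align*}

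Finally, I would invoke the sufficient condition~\eqref{eq:fomkp-aggregate-sufficient} on each $m\in\calm_2$ to bound $\varphi_m(w_m^{(N+1)})C_m\le\alpha\smallint_0^{w_m^{(N+1)}}\phi_m(u)du-L\beta_m$, and observe that for $m\in\calm_1$ the term $Lw_m^{(N+1)}$ equals $\smallint_0^{w_m^{(N+1)}}\phi_m(u)du$ exactly. These two facts telescope the bracketed terms into $(\alpha-1)\sum_m\smallint_0^{w_m^{(N+1)}}\phi_m(u)du$ (with the $-L\beta_m$ corrections absorbing the flat-portion slack that would otherwise spoil the collapse), and non-negativity of each per-item pseudo-utility, $g_n(\by_n^*)\ge\sum_m\smallint_{w_m^{(n)}}^{w_m^{(n+1)}}\phi_m(u)du$, converts that remainder into $(\alpha-1)\alg(\cali,\phi)$, giving $\opt(\cali)\le\alpha\,\alg(\cali,\phi)$.

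The main obstacle I anticipate is justifying the instance-dependent primal modification for $\calm_1$ rigorously: unlike in the single-knapsack \got setting, OPT may reshuffle items across knapsacks, so I must argue carefully using the aggregate structure of $g_n$ together with the rate constraints $y_{nm}\le Y_{nm}$ that no feasible offline schedule can place more than $w_m^{(N+1)}$ mass into any $m\in\calm_1$. A secondary subtlety is that the $-L\beta_m$ term in~\eqref{eq:fomkp-aggregate-sufficient} is strictly tighter than its \got analog; verifying that this extra slack is exactly what is needed to merge the $\calm_1$ and $\calm_2$ contributions into a single bound in terms of $\sum_m\smallint_0^{w_m^{(N+1)}}\phi_m(u)du$ is the calibration step that makes the lemma work.
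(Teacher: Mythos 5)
Your high-level strategy is the same as the paper's (instance-dependent OPD, conjugate-function proposition, partition of knapsacks by whether the final utilization is below or above $\beta_m$), but the key step you flag as your main obstacle is in fact where the argument breaks. You add per-knapsack constraints $\sum_n y_{nm}\le w_m^{(N+1)}$ for every $m\in\calm_1$, justified by the claim that the pseudo-utility maximizer maximally fills each $\calm_1$ knapsack and hence the offline is ``pinned'' there. This is false. The pseudo-utility problem has ties: below threshold every knapsack charges the same marginal cost $L$, so the online maximizer may split an item's mass between a $\calm_1$ knapsack and a $\calm_2$ knapsack operating below $\beta_m$, leaving $w_m^{(N+1)}<\min(Y_{nm},D_n)$ for $m\in\calm_1$. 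The offline can then legitimately re-route that mass into $\calm_1$, so the optimal value of your augmented primal can be strictly smaller than $\opt(\cali)$, breaking weak duality as an upper bound.

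The paper avoids this by imposing a single \emph{aggregate} constraint over $\calm^1$,
\begin{align*}
\sum\nolimits_{n\in\caln}\sum\nolimits_{m\in\calm^1}y_{nm} \le \sum\nolimits_{m\in\calm^1} w_m^{(N+1)} + \sum\nolimits_{m\in\calm^2} \beta_m,
\end{align*}
where the extra $\sum_{m\in\calm^2}\beta_m$ precisely accounts for the below-threshold mass in $\calm^2$ that could be reassigned to $\calm^1$; mass placed above threshold in $\calm^2$ had marginal cost $>L$, so the only reason the online algorithm placed it there rather than in a cost-$L$ knapsack of $\calm^1$ is a rate limit, which binds the offline as well. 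This is the argument you needed but did not make. Note a secondary tell that your modification is too strong: with your constraints the dual coefficient on $\lambda_m$ for $m\in\calm_1$ is $w_m^{(N+1)}$, which cancels exactly against $\smallint_0^{w_m^{(N+1)}}\phi_m(u)du = Lw_m^{(N+1)}$, and you end up not needing the $-L\beta_m$ term at all, since $-\sum_{m\in\calm_2}L\beta_m\le 0$ would only help. The $-L\beta_m$ in~\eqref{eq:fomkp-aggregate-sufficient} is not there to absorb ``flat-portion slack''; it exists to cancel the $+L\beta_m$ contributed by the weaker aggregate constraint (through $\hat\eta\sum_{m\in\calm^2}\beta_m$ with $\hat\eta=L$). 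The fact that your calibration makes the term superfluous is a direct symptom of the invalid primal modification. To fix the proof, replace the per-knapsack constraints with the paper's single aggregate constraint, take $\hat\lambda_m=0$ for $m\in\calm^1$, $\hat\lambda_m=\phi_m(w_m^{(N+1)})$ for $m\in\calm^2$, and $\hat\eta=L$, and then the $-L\beta_m$ term does its job. You should also treat the all-below and all-above cases ($\Omega^1,\Omega^2$) explicitly, as in the paper, rather than folding them into a single argument, since the feasible dual solution differs across cases.
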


To prove Lemma~\ref{lem:fomkp-sufficient-aggregate}, we divide the set of instances $\Omega$ into three subsets $\Omega^1$, $\Omega^2$, and $\Omega^3$. The instances in those subsets result in different worst cases. Thus, we construct instance-dependent dual objectives to bound the offline optimum in each case, leading to the sufficient conditions in Lemma~\ref{lem:fomkp-sufficient-aggregate}.  
We sketch the proof of Lemma~\ref{lem:fomkp-sufficient-aggregate} here and include the full version in Appendix~\ref{app:proof-fomkp-aggregate}.

\begin{proof}[Proof Sketch of Lemma \ref{lem:fomkp-sufficient-aggregate}]
	Let $w_m^{(N+1)}:= w_m^{(N+1)}(\cali)$ denote the final utilization of the knapsack $m$ after executing instance $\cali$ by $\ota_\phi$. 
	$\Omega^1 := \{\cali: 0 \le w_m^{(N+1)} < \beta_m, \forall m\in\calm \}$ and $\Omega^2 := \{\cali: \beta_m\le w_m^{(N+1)} \le C_m, \forall m\in\calm \}$ contain the instances whose final utilizations of all knapsacks are below and above their utilization thresholds $\beta_m$, respectively. Excluding these two subsets, the remaining instances form $\Omega^3 := \Omega\setminus(\Omega^1\cup \Omega^2)$, in which some knapsacks $\calm^1:=\{m\in\calm: 0 \le w_m^{(N+1)} < \beta_m \}$ have final utilizations below the utilization thresholds and the others $\calm^2:=\{m\in\calm: \beta_m \le w_m^{(N+1)} < C_m \}$ have final utilizations above the thresholds.
	
	The cases when $\cali\in\Omega^1$ and $\cali\in\Omega^2$ correspond to general versions of Case I and Case II in the proof of Lemma~\ref{lem:sufficient} for \got.
	The idea is to use the optimal primal and dual variables of the pseudo-utility maximization problem to construct the feasible dual solution in the \opd analysis, and decompose the dual objective into a summation of equations corresponding to individual knapsacks, making these cases similar to those in \got.  
	
	The main new challenge comes from Case III, in which the knapsacks are coupled in a non-trivial way.
	The key difference between Case II and Case III is that the knapsacks in $\calm^1$ may not be fully occupied by the offline solution under the worst-case instance in $\Omega^3$.
	This is because the total amount of items, which can be packed into $\calm^1$, is limited by $\sum_{m\in\calm^1}w_{m}^{(N+1)} + \sum_{m\in\calm^2} \beta_m$.
	Compared to the online solution, the additional amount of items that can be assigned to the knapsacks in $\calm^1$ in the offline solution is upper bounded by the total amount of items that is assigned to $\calm^2$ before reaching the utilization threshold. 
	The marginal cost of the assigned items above the utilization threshold in $\calm^2$ is larger than $L$. 
	Therefore, the reason why such items are not assigned to $\calm^1$ is that the items are not feasible for such assignment due to the rate limits. 
	Thus, those items cannot be assigned to $\calm^1$ in the offline solution as well.
	Based on this understanding of the worst-case instance, we add the following constraint to the offline formulation
	\begin{align*}
		\sum\nolimits_{n\in\caln}\sum\nolimits_{m\in\calm^1}y_{nm} \le \sum\nolimits_{m\in\calm^1} w_m^{(N+1)} + \sum\nolimits_{m\in\calm^2} \beta_m.
	\end{align*}
	Applying \opd analysis to the new offline problem gives the sufficient condition in Lemma~\ref{lem:fomkp-sufficient-aggregate}.
\end{proof}

Now, to complete the proof of Theorem~\ref{thm:threshold-omkp-aggregate}, we apply Gronwall's Inequality to the differential equation~\eqref{eq:fomkp-aggregate-sufficient} and obtain
\begin{align*}
	\varphi_m(w) \le \frac{\alpha - 1}{C_m} L \beta_m + \frac{\alpha}{C_m} {\smallint_{\beta_m}^{w}} \frac{\alpha - 1}{C_m}L\beta_m e^{\alpha(w-u)/C_m}du = \frac{\alpha-1}{C_m} L\beta_m e^{\alpha(w - \beta_m)/C_m}, w\in[\beta_m,C_m].
\end{align*}
Since $\varphi_m(C_m) \ge U$, we have $U \le \varphi(C_m) \le \frac{\alpha-1}{C_m} L\beta_m e^{\alpha(C_m - \beta_m)/C_m}$. The minimum $\alpha$ is achieved when all inequalities in the sufficient condition~\eqref{eq:fomkp-aggregate-sufficient} hold in equality. This gives $U = \frac{\alpha-1}{C_m} L\beta_m e^{\alpha(C_m - \beta_m)/C_m}$ and $\beta_m = \frac{C_m}{\alpha - 1}$. Thus, the resulting competitive ratio $\alpha_{\phi^*}$ is the solution of the equation $\alpha_{\phi^*} - 1 - \frac{1}{\alpha_{\phi^*} - 1} = \ln\theta$ and the threshold function is given by \eqref{eq:threshold-fomkp-aggregate}.

\textbf{Proof Sketch of Theorem~\ref{thm:threshold-omkp-separable}: Separable Functions.} Compared to Theorem~\ref{thm:threshold-omkp-aggregate}, the key difference in proving Theorem~\ref{thm:threshold-omkp-separable} occurs in Case III.  Cases I and II proceed similarly in both cases but, for separable value functions, the total amount of items that can be reassigned from knapsacks in $\calm^2$ to knapsacks in $\calm^1$ is upper bounded by $\sum_{m\in\calm^2}w_m^{(N+1)}$ instead of $\sum_{m\in\calm^2} \beta_m$. This is because each knapsack is associated with an independent value function, and thus the marginal utility, which determines the assignment of a small bit of item, depends on both the marginal value of the item and the marginal cost of the knapsack. So, the reason that the items are assigned to knapsacks in $\calm^2$ may not be due to the rate limits restricting the assignment from knapsacks in $\calm^1$.  
Instead, this may happen because assigning to the knapsacks in $\calm^2$ can result in higher marginal utility. 
In this case, we add a new constraint to the offline problem and the resulting dual objective finally leads to a different threshold function and competitive ratio in Theorem~\ref{thm:threshold-omkp-separable}.

\section{Case Study}
\label{sec:exp} 
This section presents a brief demonstration of our proposed algorithm in the context of the EV charging problem. The experiments are not meant to be exhaustive, rather they are intended to validate the theoretical results and illustrate the potential of the approach. We consider a system consisting of multiple stations working parallel where EVs can charge.  The power capacity is limited, and is much smaller than the total power demanded by the vehicles. Therefore, the station cannot admit the total demands of all vehicles and must decide the amount of power to allocate to the new vehicle upon its arrival.

\textbf{Experimental Setup.} We use the Adaptive Charging Network Dataset, ACN-Data, which includes over 50,000 EV charging sessions, and 54 charging stations~\cite{LeeLiLow2019a}. We use a sequence of more than 2,000 charging sessions. The dataset includes information about the arrival and departure time, and the power demand. In our experiments, we consider one-hour time slots within a time horizon of one day, hence, the total number of time slots, $M$, is chosen as 24. 

We compare our online algorithm with a fixed threshold algorithm, \textsf{FTA}, which admits an item if its value is above a fixed threshold of $\sqrt{U\times L}$, and then it delivers the maximum possible supply power up to the vehicle's demand or the station's capacity. While no prior work exists on the problem studied in this paper, \textsf{FTA} is the most common approach for online knapsack problems, and provides a contrast to the utilization-based threshold in our proposed \ota algorithm. The threshold value $\sqrt{U\times L}$ is selected because our focus is on improving the worst-case performance and this value achieves the best possible competitive ratio among the fixed threshold policies~\cite{El2001}. 

We use a linear, and also the quadratic value function, following the assumption of previous work such as~\cite{zheng2014online}. In our experiments, we set the value fluctuation ratio $\theta = 36$. The results are not too sensitive to this choice. We evaluate the performance of our algorithm in three different congestion levels: low, medium, and high, where the system is able to cover roughly 55\%, 10\%, and 2.7\% of the demand, respectively. 
For each instance, we randomly generate 20 trials for each day, each with different values, and report the average results for $90\times20 = 1800$ trials. 
Last, we report the \textit{empirical profit ratio} of different algorithms, which is the ratio between the profit obtained by the offline optimal solution and that of an online algorithm in experiments. 
\begin{figure*}
	\centering
	\subfigure[Linear value function and low congestion]{\includegraphics[width=0.24\textwidth]{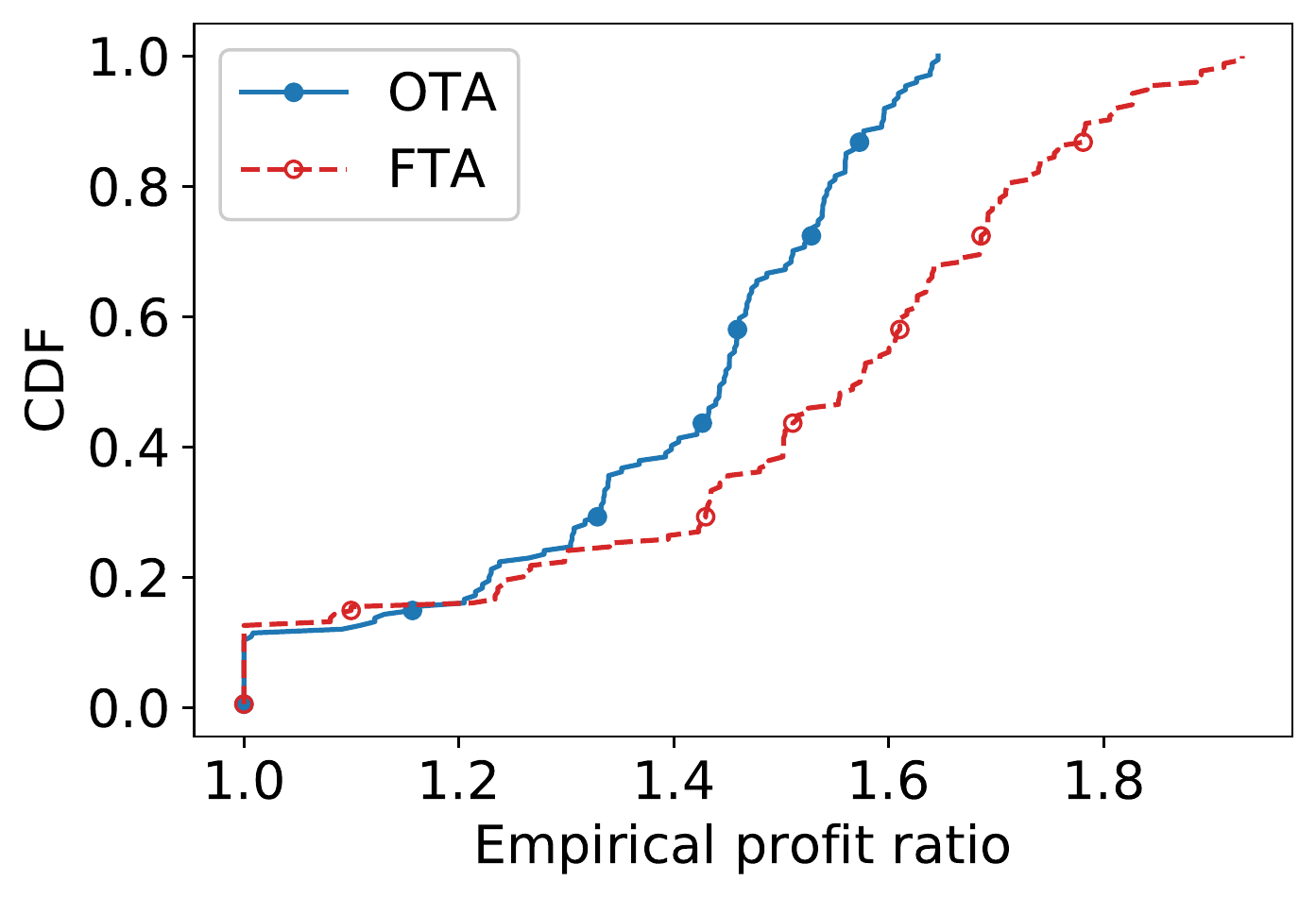}}
	\subfigure[Linear value function and medium congestion]{\includegraphics[width=0.24\textwidth]{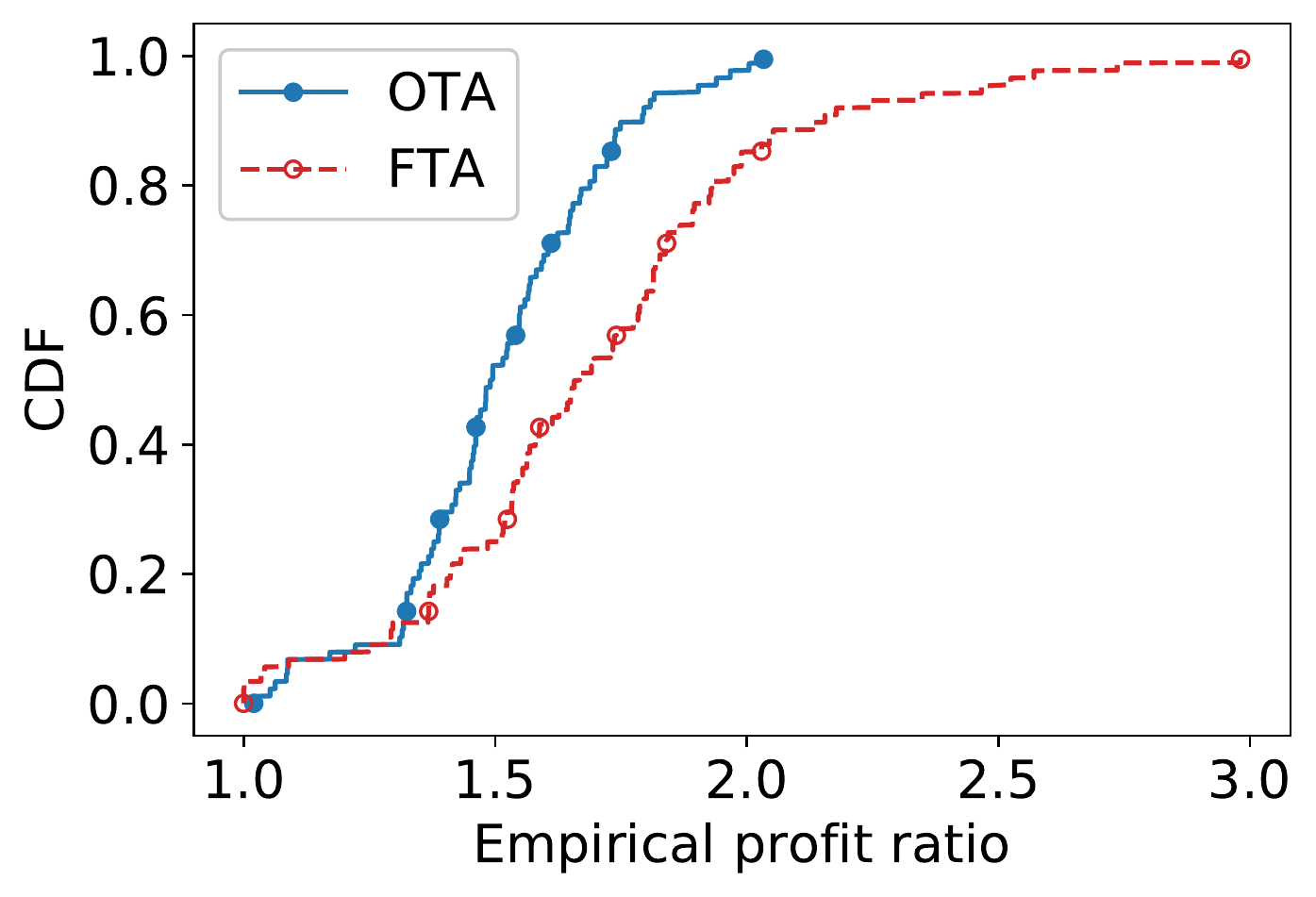}}
	\subfigure[Linear value function and high congestion]{\includegraphics[width=0.24\textwidth]{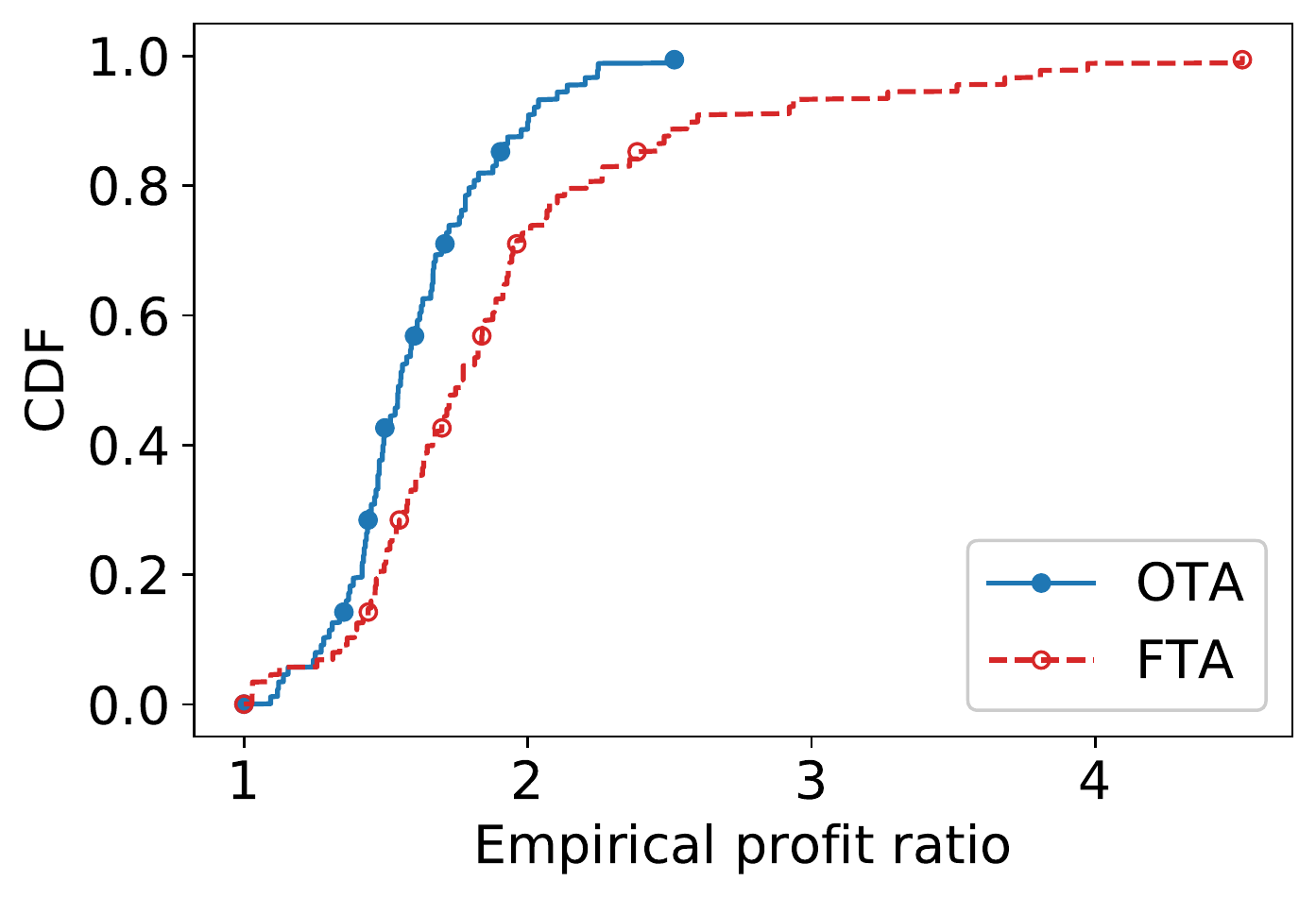}}
	\subfigure[Linear value function and varying congestion]{\includegraphics[width=0.24\textwidth]{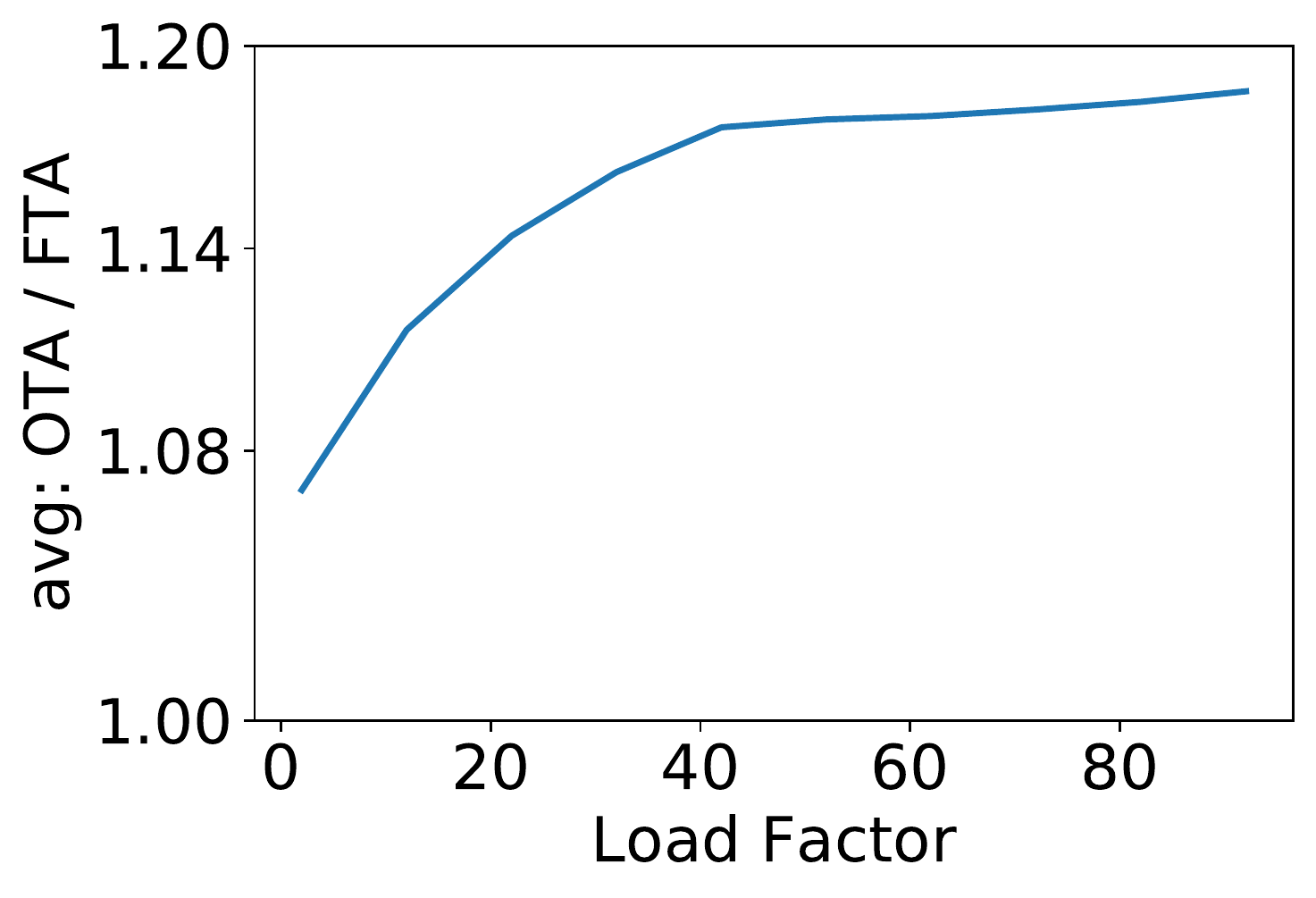}}
	
	\subfigure[Quadratic value function and low congestion]{\includegraphics[width=0.24\textwidth]{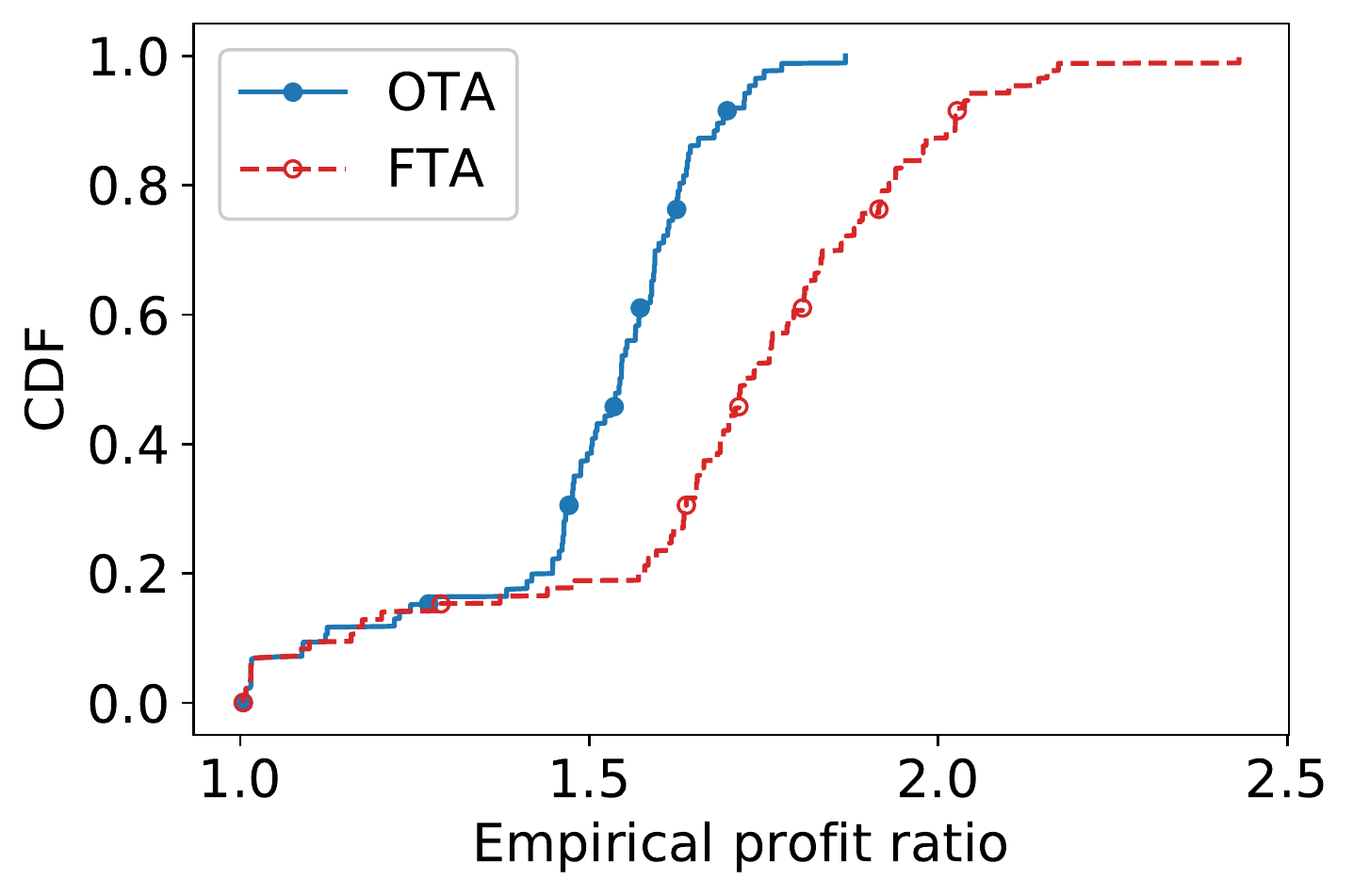}}
	\subfigure[Quadratic value function and medium congestion]{\includegraphics[width=0.24\textwidth]{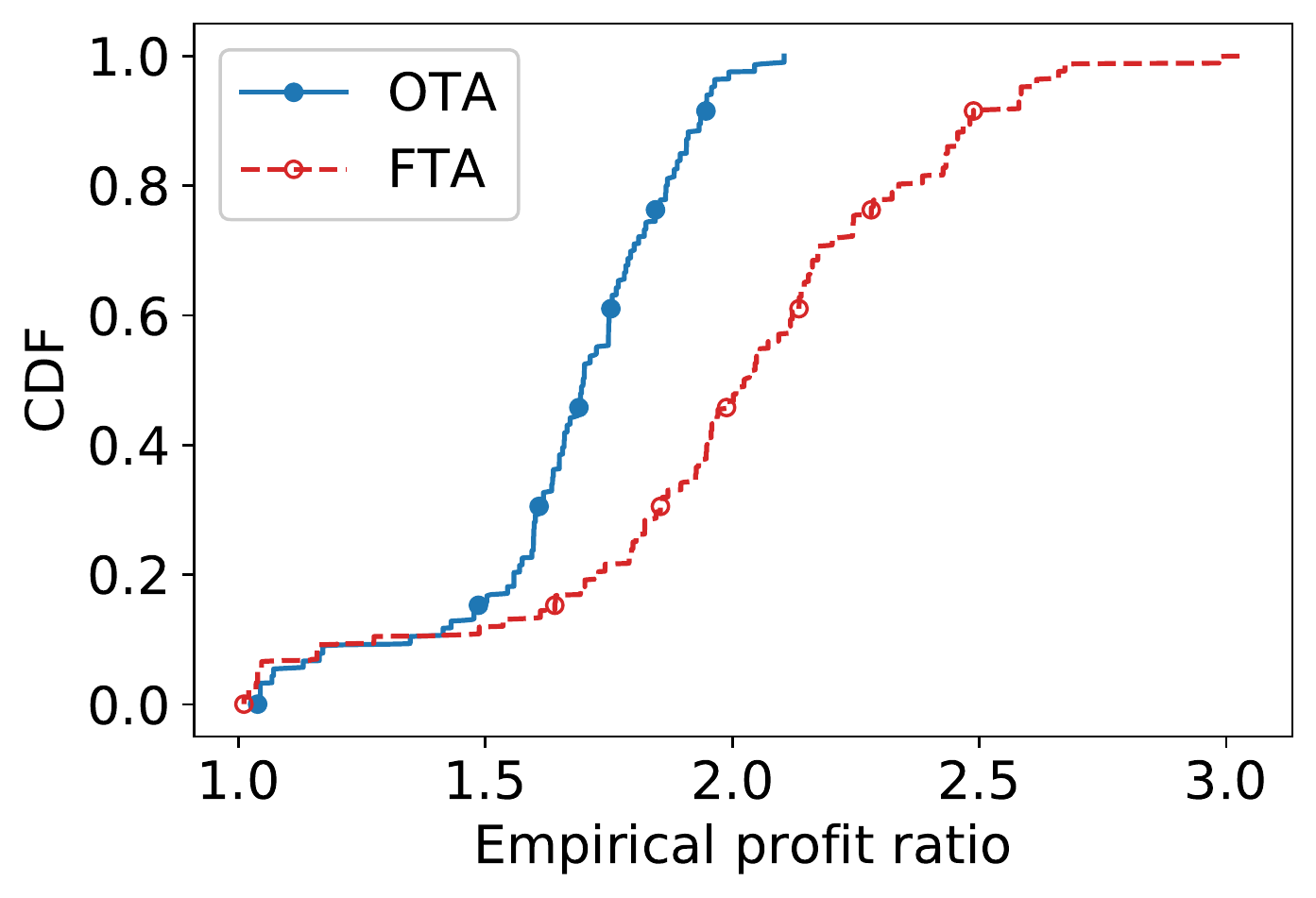}}
	\subfigure[Quadratic value function and high congestion]{\includegraphics[width=0.24\textwidth]{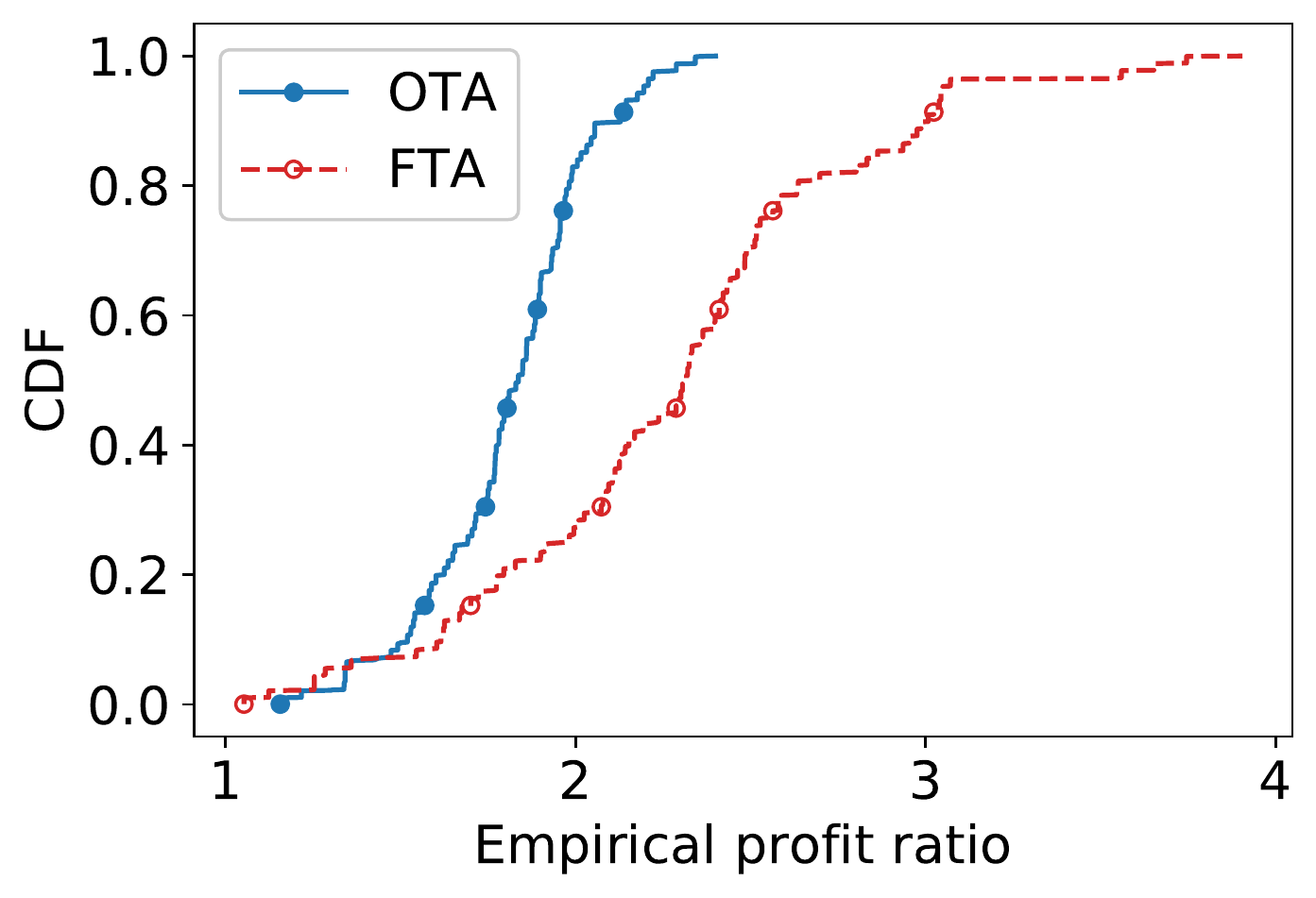}}
	\subfigure[Quadratic value function and varying congestion]{\includegraphics[width=0.24\textwidth]{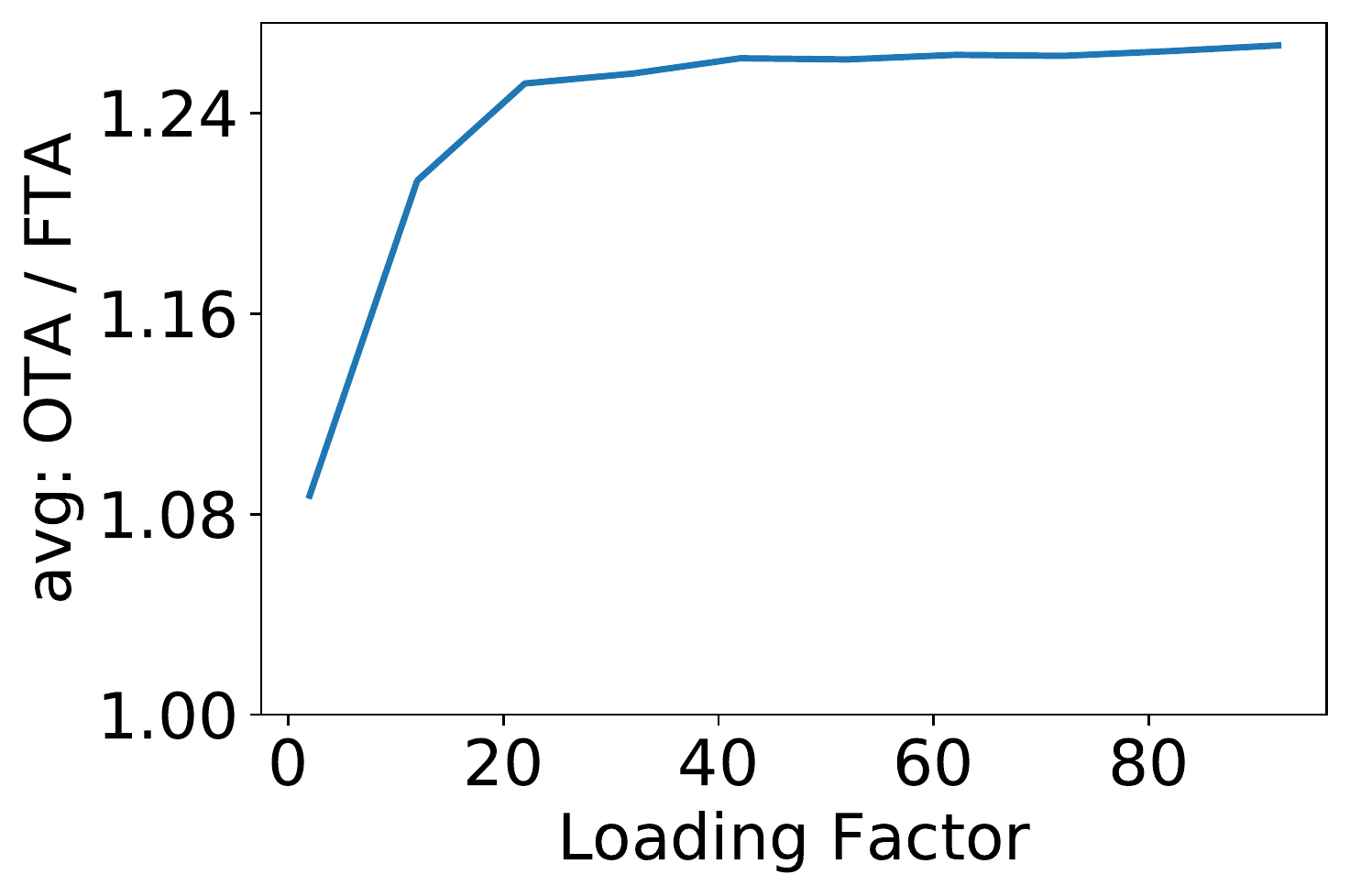}}
	\caption{The CDF of empirical profit ratios of \ota and \textsf{FTA} in low, medium, and high congestion settings with linear (a--d) or quadratic (e--h) value functions are shown. The results highlight that when the value function is linear, \ota improves the worst-case profit ratio over \textsf{FTA} by 10.0\%, 33.3\%, and 44.4\% for low, medium, and high congestion settings, respectively. When value functions are quadratic, these improvements are 23.2\%, 30.6\%, and 38.5\%. Figures~(d) and~(h) show the average improvement of \ota as compared to \textsf{FTA} as the load factor varies, highlighting a nearly 20\% improvement in high load settings when value functions are linear and 25\% when value functions are quadratic.}
	\label{fig:results_1}
\end{figure*}

\textbf{Experimental Results.} Our focus is on the competitive ratio. To illustrate the improvement in the worst-case performance, Figure~\ref{fig:results_1} demonstrates the cumulative distribution function (CDF) of the empirical profit ratios of \ota, our proposed algorithm, and \textsf{FTA} in low, medium, and high congestion levels. 
When value functions are linear, the results show that the profit ratio of our algorithm in the low, medium, and high congestion levels is bounded by 1.7, 2.0, and 2.5 while the maximum profit ratio of \textsf{FTA} is 1.9, 3.0, and 4.5. This represents a decrease of 10.0\%, 33.3\%, and 44.4\%, respectively, thus we see nearly a factor of 2 improvement in the worst case when congestion is high. While the adaptive threshold was designed with the worst-case in mind, we also see an improvement in the average profit of 6.7\%, 11.8\%, and 15.8\% in low, medium, and high congestion levels, respectively. In Figure~2(d), we report the percentage of improvement in the average profit ratio as the congestion level increases. The result shows that, as the system becomes more congested, the improvement of \ota grows since the value of scheduling increases with congestion.
In practice, EVs may have diminishing returns, which can be characterized by quadratic value functions.
In Figures~2(e)--2(h), we show the empirical profit ratios when value functions are quadratic.
Comparing the linear and quadratic settings, we observe the average improvement of \ota compared to \textsf{FTA} under high congestion is improved from nearly 20\% to 25\%. This highlights the importance of taking into account the non-linearity of value functions in \ota compared to \textsf{FTA}, which only uses the upper and lower bounds of the marginal value function.    

Finally, we performed experiments to understand the impact of heterogeneity in the values of drivers. We considered 9 classes of arrivals, with non-i.i.d. value distributions.  The mean of the classes differ by 20.  Heterogeneity leads to an increase in the improvement of our algorithm over \textsf{FTA}.  The resulting improvements are 27.4\%, 68.9\%, and 56.8\% for the worst case and 10.7\%, 17.9\%, and 20.7\% for the average case in low, medium, and high congestion, respectively.

\section{Concluding remarks}

Motivated by EV charging, we have introduced a general form of a fractional online multiple knapsack problem that includes heterogeneous rate constraints and provides a unification of a range of variants of both online knapsack and one-way trading.  Our main result provides a near-optimal algorithm for the general problem, as well as results for special cases corresponding to a number of variants that have received attention in recent years. In all cases, we either match or improve on state-of-the-art results while also including features, such as rate-constraints, that were not included in prior work.  

The key to our results is a new analytic technique called instance-dependent online primal-dual analysis, which provides a systematic way to design threshold functions for \ota, something that was previously more art than science.  The approach exposes a novel and powerful connection between the design of algorithms and the identification of worst-case instances.  We expect this technique to be applicable beyond the online knapsack and one-way trading problems, and an important line of future work is to understand the breadth of online algorithms problems where the identification of worst-case instances can systematically guide the design of algorithms.

Beyond exploring the impact of the analytic approach, another important line of future work is to explore the application of the new algorithm proposed here.  We briefly highlight an application in EV charging, but more work is needed before real-world deployment. Moreover, applications to cloud scheduling and geographical load balancing will be exciting to pursue.

\section{Acknowledgment}
Bo Sun and Danny H.K. Tsang acknowledge the support received from the Hong Kong Research Grant Council (RGC) General Research Fund (Project 16202619 and Project 16211220). 
Ali Zeynali and Mohammad Hajiesmaili's research is supported by NSF CNS-1908298. 
Tongxin Li's research is supported by NSF grants (CPS ECCS 1932611 and CPS ECCS 1739355).
Adam Wierman acknowledges the support received from NSF grants (AitF-1637598 and CNS-1518941).

\bibliographystyle{plain}  
\bibliography{references}

\newpage
\appendix

\section{Proofs}
\subsection{Proof of Proposition~\ref{pro:conjugate-function}}\label{app:proof-conjugate-function}

We begin by proving property (i) in Proposition~\ref{pro:conjugate-function}.
Let $\la \ge \la' \ge 0$ and denote by $\tilde{y}_n$ the optimal solution that maximizes the conjugate problem~\eqref{eq:conjugate-function} given $\la$. We then have
\begin{align*}
	h_n(\la) = g_n(\tilde{y}_n) - \la \tilde{y}_n  \le g_n(\tilde{y}_n) - \la' \tilde{y}_n \le \max_{0\le y_n\le D_n} g_n(y_n) - \la' y_n = h_n(\la').
\end{align*}
Thus, $h_n(\la)$ is a non-increasing function.

Note that the threshold function $\phi$ is discontinuous at $C$ since $\phi(w) = +\infty, w\in(C,+\infty)$ by definition. When $\phi(C) \ge U$, there must exist a utilization level $\bar{w} \le C$ and $\phi(\bar{w}) = U$. Since $g_n' \le U$, we can have $w^{(n)} + y_n \le \bar{w}$. Consequently, the derivative of the integral function
\begin{align*}
	\Phi(y_n) :=  \smallint_{w^{(n)}}^{w^{(n)}+y_n}\phi(u)du
\end{align*}
is continuous and non-decreasing, and hence this integral function is convex when $y_n\leq\bar{w}-w^{(n)}$. 
Thus, when $\phi(C) \ge U$, the pseudo-utility maximization problem~\eqref{p:utility-maximization-got} is a convex optimization problem, and its optimal solution can be determined by the KKT conditions.
Part of the KKT conditions is given by
\begin{align*}
	g_n'(y_n^*) - \phi(w^{(n+1)}) - \kappa_n^* + \nu_n^* = 0, \quad \kappa_n^*(D_n - y_n^*) = 0, \quad \nu_n^* y_n^* = 0,
\end{align*}
where $y_n^*$ and $\{\kappa_n^*,\nu_n^*\}$ are the optimal primal and dual variables, and $\kappa_n^*$ and $\nu_n^*$  correspond to the constraints $y_n \le D_n$ and $y_n \ge 0$, respectively. 
We next can verify the property (ii) in Proposition~\ref{pro:conjugate-function} by showing that $y_n^*$ maximizes the conjugate problem~\eqref{eq:conjugate-function} given $\la = \phi(w^{(n+1)})$. 

(i) When $g_n'(y_n^*) - \phi(w^{(n+1)}) >0$, we have $y_n^* = D_n$, i.e., $g_n'(D_n) > \phi(w^{(n+1)})$. In this case, we have $y_n^* = D_n = \argmax_{0\le y_n\le D_n} g_n(y_n) - \phi(w^{(n+1)}) y_n$. 

(ii) When $g_n'(y_n^*) - \phi(w^{(n+1)}) <0$, we have $y_n^* = 0$, i.e., $g_n'(0) < \phi(w^{(n+1)}) = \phi(w^{(n)})$. In this case, we have $y_n^* = 0 = \argmax_{0\le y_n\le D_n} g_n(y_n) - \phi(w^{(n+1)}) y_n$. 

(iii) When $g_n'(y_n^*) = \phi(w^{(n+1)})$, $y_n^*$ satisfies the KKT conditions of the conjugate optimization problem~\eqref{eq:conjugate-function}, given $\la = \phi(w^{(n+1)})$, i.e.,
\begin{align*}
	g_n'(y_n^*) - \phi(w^{(n+1)}) - \tilde{\kappa}_n + \tilde{\nu}_n = 0, \quad \tilde{\kappa}_n(D_n - y_n^*) = 0, \quad \tilde{\nu}_ny_n^* = 0,
\end{align*}
in which $\tilde{\nu}_n = \tilde{\mu}_n = 0$ since $g_n'({y}_n^*) - \phi(w^{(n+1)}) = 0$. Therefore, $y_n^* \in \argmax_{0\le y_n\le D_n} g_n(y_n) - \phi(w^{(n+1)}) y_n$.

Thus, we have $h_n(\phi(w^{(n+1)})) = g_n(y_n^*) - \phi(w^{(n+1)}) y_n^*, \forall n\in\caln$.

\subsection{Proof of Corollary~\ref{lem:threshold-varaint1-got}}\label{app:proof-variant1-got-threshold}

The dual of the offline problem~\eqref{p:varaint1-got} is
$$\min_{\la \ge 0} \sum\nolimits_{n\in\caln} h_n(\la) + (\la + L) C,$$
where $h_n(\la) = \max_{0\le y_n\le D_n} g_n(y_n) - (\la+L)y_n$. 
The pseudo-utility maximization problem of this variant is the same as \got, which is given by \eqref{p:utility-maximization-got}. 
We set the feasible dual solution as $\hat{\la} = \phi(w^{(N+1)}) - L$. 
Based on weak duality, we have
\begin{subequations}
	\begin{align*}
		\opt(\cali) &\le \sum\nolimits_{n\in\caln} h_n(\phi(w^{(N+1)}) - L) +  \phi(w^{(N+1)})C\\
		&\le\sum\nolimits_{n\in\caln} \left[g_n(y_n^*) - \phi(w^{(n+1)}) y_n^*\right] +  \phi(w^{(N+1)})C\\
		&\le \sum\nolimits_{n\in\caln}g_n(y_n^*) + (C - w^{(N+1)})L  +  \phi(w^{(N+1)})C - \smallint_{0}^{w^{(N+1)}}\phi(u)du - (C- w^{(N+1)})L\\
		&\le \alpha \left[\sum\nolimits_{n\in\caln}g_n(y_n^*) + (C - w^{(N+1)})L \right] = \alpha \alg(\cali).
	\end{align*}
\end{subequations}
When $\phi(C) \ge U$, applying Proposition~\ref{pro:conjugate-function} gives the second inequality. 
The last inequality holds if the threshold function further satisfies the following differential equation
\begin{align*}
	\phi(w) C \le \alpha {\smallint_{0}^w} \phi(u) du + \alpha (C - w)L, w\in[0,C].
\end{align*}

Applying Gronwall's Inequality to above equation, we obtain 
\begin{align*}
	\phi(w)\le \alpha L - \frac{\alpha L w}{C} + \frac{\alpha}{C} {\smallint_{0}^w} \phi(u) du \le L + (\alpha L - L) e^{\alpha w/C}.
\end{align*}
Note that $\phi(C) \ge U$ gives $ U\le \phi(C) \le L + (\alpha L - L) e^{\alpha}$. When all inequalities hold with equality, the competitive ratio $\alpha$ is minimized and is given by the solution of the equation $\alpha = \ln\frac{U-L}{\alpha L - L}$, and the threshold function $\phi$ is given by~\eqref{eq:threshold-got-variant1}.

\subsection{Proof of Corollary~\ref{lem:varaint1-got}}\label{app:proof-varaint1-got-opt}

In this variant of \got, we can have the following necessary condition for the existence of an $\alpha$-competitive online algorithm based on the continuously non-decreasing instance (see Definition~\ref{dfn:instance}).

\begin{claim}
	\label{pro:necessary-variant1-got}
	If there exists an $\alpha$-competitive online algorithm for Variant 1 of \got, there must exist a utilization function $\psi(p): [L,U]\to[0,C]$ that is non-decreasing and satisfies 
	\begin{align}\label{eq:necessary-variant1-got}
		\begin{cases}
			\psi(p)p - \smallint_{L}^p\psi(u)du + [C - \psi(p)]L \ge pC/\alpha, p\in[L,U]\\
			\psi(U) \le C
		\end{cases}.
	\end{align} 
\end{claim}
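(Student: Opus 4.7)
The plan is to mirror the proof of Lemma~\ref{lem:necessary-condition}, adapting it to the leftover-capacity setting of Variant~1. Specifically, for any online algorithm $\cala$ I will define $\psi(p)$ as the final knapsack utilization after $\cala$ processes the continuously non-decreasing instance $\cali_p$ from Definition~\ref{dfn:instance}. Because $\cala$ makes irrevocable decisions based only on causal information, its trajectory on $\cali_p$ and on $\cali_{p'}$ (with $p' > p$) must coincide along the common prefix, so $\psi$ is well-defined; the same causality argument shows that $\psi$ is non-decreasing on $[L,U]$, while the capacity constraint immediately yields the boundary condition $\psi(U) \le C$.

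Next, I will compute $\opt(\cali_p)$ and $\alg(\cali_p)$ for the family $\cali_p$. Since each item in $\cali_p$ has weight $D_n=C$ and the largest marginal value present is $p$, the offline optimum simply fills the knapsack entirely using the final (highest-value) item, giving $\opt(\cali_p) = pC$; the leftover term vanishes because no capacity remains. The online algorithm, in contrast, accepts the stream of items of increasing marginal values, producing a total value of $L\psi(L) + \smallint_L^p u\, d\psi(u)$, and then fills the unused capacity $C - \psi(p)$ at the lowest price $L$, so that
\begin{align*}
    \alg(\cali_p) \;=\; L\psi(L) \;+\; \smallint_L^p u\, d\psi(u) \;+\; \bigl(C - \psi(p)\bigr) L.
\end{align*}
Imposing the $\alpha$-competitiveness requirement $\alg(\cali_p) \ge \opt(\cali_p)/\alpha$ and rewriting via integration by parts $\smallint_L^p u\, d\psi(u) = p\psi(p) - L\psi(L) - \smallint_L^p \psi(u)\,du$ yields precisely the differential inequality stated in~\eqref{eq:necessary-variant1-got}.

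The main subtlety I anticipate, and the only place where this proof genuinely departs from that of Lemma~\ref{lem:necessary-condition}, is explaining why no separate boundary condition of the form $\psi(L) \ge C/\alpha$ appears in~\eqref{eq:necessary-variant1-got}. Intuitively, in classical \got an algorithm that refuses the initial low-value items forfeits that capacity forever, so a lower bound at $p=L$ must be enforced to guarantee competitiveness. In Variant~1, however, any capacity left unused after the last arrival is automatically monetized at marginal value $L$, so at $p=L$ the competitive inequality collapses to the identity $L\psi(L)+(C-\psi(L))L = LC \ge LC/\alpha$ and provides no additional constraint on $\psi(L)$. Combining the absence of this boundary requirement with the inequality derived above gives exactly the pair of conditions in~\eqref{eq:necessary-variant1-got}, completing the proof.
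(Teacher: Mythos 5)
Your proposal is correct and takes essentially the same route as the paper's proof in Appendix~\ref{app:proof-varaint1-got-opt}: define $\psi$ via the continuously non-decreasing instances $\cali_p$, evaluate $\opt(\cali_p)=pC$ and $\alg(\cali_p)=L\psi(L)+\smallint_L^p u\,d\psi(u)+(C-\psi(p))L$, impose $\alpha$-competitiveness, and apply integration by parts. Your explicit remark that the $\psi(L)\ge C/\alpha$ boundary condition from Lemma~\ref{lem:necessary-condition} disappears because the $p=L$ inequality collapses to the trivial $LC\ge LC/\alpha$ is a useful clarification, though the paper leaves it implicit.
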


To prove this, note that under instance $\cali_p$, we have
\begin{align*}
	&\opt(\cali_p) = pC, \\
	&\alg(\cali_p) = \psi(L)L + \smallint_{L}^p u d\psi(u) + (C - \psi(p))L = \psi(p)p - \smallint_{L}^p\psi(u)du + [C - \psi(p)]L.
\end{align*}
Since any $\alpha$-competitive online algorithm must satisfy $\alg(\cali_p) \ge \opt(\cali_p)/\alpha$, this gives the differential equation in \eqref{eq:necessary-variant1-got}. The utilization function cannot exceed the capacity so we have the boundary condition $\psi(U)\le C$.

The differential equation in~\eqref{eq:necessary-variant1-got} holds when $\psi(p) = 0,  p\in[L,\alpha L]$. We can then apply Gronwall's Inequality to \eqref{eq:necessary-variant1-got} and obtain
\begin{align*}
	\psi(p)\ge \frac{C}{\alpha} \frac{p}{p-L} - \frac{C L}{p- L} + \frac{1}{p-L}{\int_{\alpha L}^{p}} \left[\frac{C}{\alpha} \frac{u}{u-L} - \frac{C L}{u- L}\right] \frac{p-L}{u-L}du = \frac{C}{\alpha} \ln\frac{p-L}{\alpha L - L}.
\end{align*}
Since $\psi(U) \le C$, we have $C \ge \psi(U) \ge \frac{C}{\alpha} \ln\frac{U-L}{\alpha L - L}$, and those inequalities hold when all inequalities in ~\eqref{eq:necessary-variant1-got} are binding. Thus, a lower bound of the optimal competitive ratio is the solution of the equation $\alpha = \ln\frac{U-L}{\alpha L - L}$ and the corresponding utilization function is given by
\begin{align*}
	\psi^*(p) = 
	\begin{cases}
		0 & p \in[L,\alpha L]\\
		\frac{C}{\alpha}\ln\frac{p-L}{\alpha L - L} & p\in [\alpha L,U]
	\end{cases}.
\end{align*}
Since this lower bound can be achieved by $\ota_\phi$ with the threshold function given in Corollary~\ref{lem:threshold-varaint1-got}, the optimal competitive ratio $\alpha^*$ for Variant 1 of \got is the solution of the equation $\alpha^* =  \ln\frac{U-L}{\alpha^* L - L}$.

\subsection{Proof of Corollary~\ref{lem:varaint2-got}}

We begin by proving a sufficient condition for ensuring an $\alpha$-competitive \ota for Variant 2 of \got.

\begin{claim}\label{lem:sufficient-variant2-got}
	Under the conditions (i) and (ii) in Assumption~\ref{ass:value-function} and Assumption~\ref{ass:relaxed}, the $\ota_\phi$ for Variant 2 of \got is $\alpha$-competitive if the threshold function is in the form of 
	\begin{align*}
		\phi(w)=
		\begin{cases}
			\varphi_1(w)&  w\in[0, \beta)\\
			\varphi_2(w) & w\in[\beta, C]
		\end{cases},
	\end{align*}
	where $\beta \in[0, C]$ is a utilization threshold and $\phi(w)$ satisfies the following conditions: 
	
	(i) $\varphi_1(w)$ is a non-decreasing differentiable function that satisfies
	\begin{align}\label{eq:variant2-got-sufficient-cond-segment1}
		\begin{cases}
			\varphi_1(w) C \le \smallint_{0}^{w}\varphi_1(u)du + (\alpha - 1)\frac{L}{c}w, w\in[0, \beta),\\
			\varphi_1(0) = 0, \varphi_1(\beta) = L/c .
		\end{cases}
	\end{align}
	
	(ii) $\varphi_2(w)$ is a non-decreasing differentiable function that satisfies
	\begin{align}\label{eq:sufficient-condition-relaxed-segment2}
		\begin{cases}
			\varphi_2(w) C \le \alpha \smallint_{0}^{w}\phi(u)du, w\in[\beta,C],\\
			\varphi_2(\beta) = L/c, \varphi_2(C) \ge U.
		\end{cases}
	\end{align} 
\end{claim}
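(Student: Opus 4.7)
\textbf{Proof plan for Claim~\ref{lem:sufficient-variant2-got}.} The plan is to mirror the instance-dependent online primal-dual analysis used in Lemma~\ref{lem:sufficient}, but carefully adapted to (a) the relaxed Assumption~\ref{ass:relaxed} in which marginal values can drop to $0$ (as long as the average value is at least $L/c$), and (b) the two-segment threshold in which the first segment $\varphi_1$ is itself strictly increasing rather than flat. I would partition the instance set as $\Omega = \Omega^1 \cup \Omega^2$ with $\Omega^1 := \{\cali : 0 \le w^{(N+1)} < \beta\}$ and $\Omega^2 := \{\cali : \beta \le w^{(N+1)} \le C\}$, then construct an instance-dependent offline formulation and a feasible dual for each. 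Since $\varphi_2(C) \ge U$ is imposed by \eqref{eq:sufficient-condition-relaxed-segment2}, the analog of Proposition~\ref{pro:conjugate-function} continues to hold: the conjugate $h_n(\cdot)$ is non-increasing and, with $\hat{\lambda} = \phi(w^{(n+1)})$, satisfies $h_n(\phi(w^{(n+1)})) = g_n(y_n^*) - \phi(w^{(n+1)}) y_n^*$, where $y_n^*$ is the maximizer of the pseudo-utility problem~\eqref{p:utility-maximization-got}. This conjugate identity is the bridge connecting $\alg(\cali,\phi)$ and $\dual(\cali,\phi)$ in both cases.

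For $\cali \in \Omega^2$, the argument is essentially Case~II of Lemma~\ref{lem:sufficient}: take $\hat{\lambda} = \varphi_2(w^{(N+1)})$, bound $\opt(\cali)$ via weak duality by $\sum_n h_n(\hat\lambda) + \hat\lambda C$, substitute the conjugate identity, use monotonicity of $\phi$ to replace $\sum_n \phi(w^{(n+1)}) y_n^*$ with $\int_0^{w^{(N+1)}} \phi(u) du$, and then apply the differential inequality in \eqref{eq:sufficient-condition-relaxed-segment2} followed by the pseudo-utility nonnegativity $g_n(y_n^*) \ge \int_{w^{(n)}}^{w^{(n+1)}} \phi(u) du$ to close the chain with $\opt(\cali) \le \alpha\, \alg(\cali,\phi)$. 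The only new wrinkle is that the integral $\int_0^{w^{(N+1)}} \phi(u) du$ splits across both segments, but \eqref{eq:sufficient-condition-relaxed-segment2} already accounts for this since it integrates the full $\phi$.

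For $\cali \in \Omega^1$, the analysis genuinely differs from the flat-segment case in \got. Items may now be only partially accepted in the first segment because $\varphi_1$ is strictly increasing, so one cannot simply claim $\alg = \sum_n g_n(D_n)$. I would take $\hat\lambda = \varphi_1(w^{(N+1)}) \le L/c$, and add the instance-dependent constraint bounding the total offline capacity usage in a way that captures both the online final utilization $w^{(N+1)}$ and the slack $C - w^{(N+1)}$ that the offline could still fill, each unit of which contributes value at most $L/c$ by the average-value hypothesis $g_n(D_n)/D_n \ge L/c$. Using weak duality together with the conjugate identity and pseudo-utility nonnegativity, this produces an upper bound of the form $\sum_n g_n(y_n^*) + \varphi_1(w^{(N+1)}) C - \int_0^{w^{(N+1)}}\varphi_1(u)du$, where the last two terms are exactly what the left-hand side of \eqref{eq:variant2-got-sufficient-cond-segment1} controls. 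The boundary conditions $\varphi_1(0) = 0$ and $\varphi_1(\beta) = L/c$ guarantee continuity with the second segment and, critically, that $\hat\lambda$ never exceeds the average-value bound, so that the residual $(\alpha - 1)(L/c) w$ term can be absorbed into $\alpha\,\alg(\cali,\phi)$.

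The main obstacle will be Case~I, since in the original \got this case was essentially trivial (everything packed, ratio $1$), whereas here the rising first segment forces a genuinely new bounding argument that weaves together the average-value relaxation with the conjugate-function machinery; getting the exact scaling $(\alpha - 1)(L/c)w$ in \eqref{eq:variant2-got-sufficient-cond-segment1} requires a careful choice of the added instance-dependent constraint so that the weak-duality gap collapses to precisely this term. Combining the two cases and taking the maximum ratio completes the proof.
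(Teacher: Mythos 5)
Your two-case decomposition by the final utilization $w^{(N+1)}$ and the treatment of Case~$\Omega^2$ are correct and match the paper's argument (the paper just replaces $L$ by $L/c$ in the boundary condition). But there are real problems in your Case~$\Omega^1$ plan.

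First, you propose to ``add an instance-dependent constraint'' capturing $w^{(N+1)}$ and the slack $C - w^{(N+1)}$. This is unnecessary and the paper does not do it: for $\cali\in\Omega^1$, the paper simply runs the same weak-duality chain as in Case~II of Lemma~\ref{lem:sufficient} with the original capacity $C$, taking $\hat\lambda=\phi(w^{(N+1)})=\varphi_1(w^{(N+1)})$, to reach
\[
\opt(\cali) \le \sum\nolimits_{n}g_n(y_n^*) + \varphi_1(w^{(N+1)})C - \smallint_{0}^{w^{(N+1)}}\varphi_1(u)\,du,
\]
and then applies~\eqref{eq:variant2-got-sufficient-cond-segment1} to get a residual of $(\alpha-1)\tfrac{L}{c}w^{(N+1)}$. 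No modified formulation is needed; what changed, compared to the original \got Case~I, is that a new differential inequality is imposed on $\varphi_1$ instead of a flat segment.

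Second, and more seriously, your explanation for why the residual $(\alpha-1)\tfrac{L}{c}w^{(N+1)}$ can be absorbed is wrong. You write that the slack ``contributes value at most $L/c$ by the average-value hypothesis $g_n(D_n)/D_n \ge L/c$,'' but this hypothesis is a \emph{lower} bound, not an upper bound; the offline could fill slack at marginal value up to $U$. You also attribute the absorption to ``$\hat\lambda$ never exceeding the average-value bound,'' which is not the mechanism either. The actual step the paper uses is: since $g_n$ is concave with $g_n(0)=0$, the ratio $g_n(x)/x$ is non-increasing, so for any $0< y_n^*\le D_n$ one has $g_n(y_n^*)/y_n^* \ge g_n(D_n)/D_n \ge L/c$. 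Summing gives $\sum_n g_n(y_n^*) \ge \tfrac{L}{c}\sum_n y_n^* = \tfrac{L}{c}w^{(N+1)}$, which is precisely the lower bound on $\alg(\cali,\phi)$ that lets you conclude $\opt \le \alpha\,\alg$. Without this step your proof does not close, and the intuition you state would lead you in the wrong direction if you tried to formalize it.
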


The derivation of the above sufficient condition for $w\in[\beta, C]$ is the same as that of Case II in \got by changing $L$ to $L/c$. 
Different from Case I of \got, the marginal value function of this case is not strictly lower bounded. Thus, setting the threshold function in $[0,\beta)$ to a flat segment cannot ensure all items are accepted to their sizes and the argument in Case I fails. To handle this relaxed assumption, we can design the threshold function as a non-decreasing function $\varphi_1(w)$. Following the OPD approach in Case II of \got, we can build the upper bound of $\opt(\cali)$ until the equation~\eqref{eq:proof-got-ineq1}. Instead of~\eqref{eq:proof-got-sufficient-conds}, a smaller competitive ratio can be achieved by enforcing a stringent sufficient condition~\eqref{eq:variant2-got-sufficient-cond-segment1} since $\varphi_1(w) \le L/c$. Then we have
\begin{align*}
	\opt(\cali) \le & \sum\nolimits_{n\in\caln}g_n(y_n^*) + (\alpha -1) \frac{L}{c}w^{(N+1)} \le \sum\nolimits_{n\in\caln}g_n(y_n^*) + (\alpha -1) \sum\nolimits_{n\in\caln}g_n(y_n^*) = \alpha \alg(\cali).
\end{align*}
Based on Assumption~\ref{ass:relaxed} and the concavity of $g_n(\cdot)$, we have $L/c \le g_n(D_n)/D_n \le g_n(y_n^*)/y_n^*$. 
Then the second inequality is given by $w^{(N+1)}L/c = \sum\nolimits_{n\in\caln}y_n^ *L/c \le \sum\nolimits_{n\in\caln} g_n(y_n^*)$. 

Solving $\varphi_1$ and $\varphi_2$ by binding all inequalities in the sufficient conditions (i) and (ii), we can obtain the threshold function~\eqref{eq:variant2-got-threshold-fun} and the corresponding competitive ratio in Corollary~\ref{lem:varaint2-got}.

\subsection{Proof of Lemma~\ref{lem:fomkp-sufficient-aggregate}}\label{app:proof-fomkp-aggregate}

To begin, we rewrite the offline formulation of the \fomkp with aggregate value functions as follows:
\begin{subequations}\label{p:FOMKP-aggregate}
	\begin{align}
		\max_{0\le x_n \le D_n, y_{nm} \ge 0} &\quad \sum\nolimits_{n\in\caln} g_n(x_n),\\
		\label{eq:FOMKP-primal-energy}
		{\rm s.t.} &\quad \sum\nolimits_{m\in\calm} y_{nm} \ge x_n, \forall n\in\caln, \quad (\mu_n)\\
		\label{eq:FOMKP-primal-capacity}
		&\quad \sum\nolimits_{n\in\caln}y_{nm} \le C_m, \forall
		m\in\calm, \quad (\la_m)\\
		\label{eq:FOMKP-primal-rate}
		&\quad y_{nm} \le Y_{nm}, \forall n\in\caln, m\in\calm, \quad (\gamma_{nm})
	\end{align}
\end{subequations}
where $x_n$ is the assigned aggregate fraction of item $n$. 
Similarly, the pseudo-utility maximization problem~\eqref{p:utility-maximization} in $\ota_\phi$ can be rewritten as
\begin{subequations}
	\label{p:FOMKP-utility-max}
	\begin{align}
		\max_{0\le x_n\le D_n, y_{nm}} \quad & g_n(x_n) - \sum\nolimits_{m\in \calm}  \smallint\nolimits_{w_m^{(n)}}^{w_m^{(n)} + y_{nm}}\phi_{m}(u)du\\
		\label{eq:FOMKP-utility-max-energy}
		{\rm s.t.}\quad& \sum\nolimits_{m\in\calm}y_{nm} \ge x_n, \quad (\mu_n) \\
		\label{eq:FOMKP-utility-max-rate}
		& 0\le y_{nm}\le Y_{nm},\forall n\in\caln, m\in\calm. \quad (\xi_{nm}, {\gamma}_{nm})
	\end{align}
\end{subequations}

The dual of the offline problem~\eqref{p:FOMKP-aggregate} can be derived as
\begin{subequations}\label{p:dual-FOMKP-aggregate}
	\begin{align}\label{eq:dual-FOMKP-aggregate-obj}
		\min_{\la_m\ge0,\mu_n\ge0,\gamma_{nm}\ge0}\quad& \sum\nolimits_{n\in\caln} h_n(\mu_n) + \sum\nolimits_{m\in\calm} \la_m C_m + \sum\nolimits_{n\in\caln}\sum\nolimits_{m\in\calm} \gamma_{nm} Y_{nm} \\
		{\rm s.t.}\quad& \la_m \ge \mu_n - \gamma_{nm}, \quad \forall n\in\caln,m\in\calm,
	\end{align}
\end{subequations}
where $h_n(\mu_n) = \max_{0\le x_n\le D_n} g_n(x_n) - \mu_n x_n$ is the conjugate function of $g_n(\cdot)$, and $\bmu:=\{\mu_n\}_{n\in\caln}$, $\bla:=\{\la_m\}_{m\in\calm}$, and $\bga:=\{\gamma_{nm}\}_{n\in\caln,m\in\calm}$ are the dual variables associated with constraints~\eqref{eq:FOMKP-primal-energy},~\eqref{eq:FOMKP-primal-capacity}, and~\eqref{eq:FOMKP-primal-rate}, respectively.
Let $\dual(\bmu, \bla, \bga)$ denote the dual objective~\eqref{eq:dual-FOMKP-aggregate-obj}.

To build the connection between the online solution and the dual objective in the \opd analysis, we need the following proposition, which is a general version of Proposition~\ref{pro:conjugate-function}. 
\begin{pro}\label{pro:conjugate-function-fomkp}
	When $\phi_m(C_m)\ge U, \forall m\in\calm$, the conjugate function $h_n(\mu_n)$ satisfies $h_n(\mu^*_n) = g_n(x_n^*) - \mu^*_n x_n^*$, where $x_n^*$ and $\mu_n^*$ are the optimal primal and dual solutions of the problem~\eqref{p:FOMKP-utility-max}.
\end{pro}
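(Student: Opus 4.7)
The plan is to generalize the argument used for Proposition~\ref{pro:conjugate-function} to the multi-knapsack setting, by showing that problem~\eqref{p:FOMKP-utility-max} is a smooth convex program at its optimum and then matching its KKT system against that of the univariate conjugate problem. The first step is to argue that, under the hypothesis $\phi_m(C_m)\ge U$ for all $m$, the potentially-problematic discontinuity of $\phi_m$ at $C_m$ plays no role: since each $\phi_m$ is non-decreasing and $g_n'\le U$, there exists $\bar w_m\le C_m$ with $\phi_m(\bar w_m)=U$, and any optimal $y_{nm}^*$ must satisfy $w_m^{(n)}+y_{nm}^*\le \bar w_m$ (otherwise the marginal cost strictly exceeds the marginal return and the assignment could be reduced). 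Restricted to this region, the integrand is continuous and non-decreasing, hence the estimated-cost term is convex in $y_{nm}$, the full objective of~\eqref{p:FOMKP-utility-max} is concave, the feasible set is polyhedral, and Slater's condition holds, so KKT is both necessary and sufficient.

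Next I would write the KKT system with multipliers $\mu_n^*$ on the aggregation constraint~\eqref{eq:FOMKP-utility-max-energy}, multipliers $\xi_{nm}^*,\gamma_{nm}^*$ on the two sides of~\eqref{eq:FOMKP-utility-max-rate}, and multipliers $\kappa_n^*,\eta_n^*$ on $x_n\ge 0$ and $x_n\le D_n$. Stationarity in $x_n$ gives
\[
 g_n'(x_n^*)-\mu_n^*+\kappa_n^*-\eta_n^*=0,
\]
together with complementary slackness $\kappa_n^* x_n^*=0$ and $\eta_n^*(D_n-x_n^*)=0$. This is exactly the KKT system at $\tilde x_n=x_n^*$ for the univariate problem $\max_{0\le \tilde x_n\le D_n}\;g_n(\tilde x_n)-\mu_n^*\tilde x_n$ that defines $h_n(\mu_n^*)$. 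Hence $x_n^*$ maximizes $g_n(\tilde x_n)-\mu_n^*\tilde x_n$ over $[0,D_n]$, and therefore $h_n(\mu_n^*)=g_n(x_n^*)-\mu_n^* x_n^*$.

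The main obstacle I anticipate is the case-by-case verification that $\mu_n^*$ extracted from the pseudo-utility problem really plays the role of the linear price in the conjugate problem across the three regimes $x_n^*\in(0,D_n)$, $x_n^*=0$, and $x_n^*=D_n$, paralleling the three-case argument in Proposition~\ref{pro:conjugate-function} but now with the added complication that $\mu_n^*$ enters only through the coupling constraint~\eqref{eq:FOMKP-utility-max-energy} rather than through direct stationarity of a single decision variable. A secondary subtlety is ensuring that the $y_{nm}$-block of the KKT conditions (involving $\phi_m$, $\xi_{nm}^*$, and $\gamma_{nm}^*$) is consistent with a non-negative $\mu_n^*$ so that the conjugate-problem KKT system is well-posed; this is handled by observing that $\mu_n^*$ equals the common value of $\phi_m(w_m^{(n)}+y_{nm}^*)+\gamma_{nm}^*-\xi_{nm}^*$ across active knapsacks, which is non-negative since $\phi_m\ge L>0$.
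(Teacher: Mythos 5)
Your proposal is correct and follows essentially the same route as the paper's own proof: establish convexity of the pseudo-utility maximization problem from the hypothesis $\phi_m(C_m)\ge U$, write the $x_n$-block of the KKT system, and observe that $(x_n^*,\kappa_n^*,\eta_n^*)$ also satisfies the KKT conditions of the univariate conjugate problem $\max_{0\le x_n\le D_n} g_n(x_n)-\mu_n^* x_n$, whence $x_n^*$ is optimal there by convexity. The paper spells this out as a three-case verification (by reference to the analogous cases in Proposition~\ref{pro:conjugate-function}), whereas you phrase it as a single KKT-sufficiency argument; these are equivalent. The ``secondary subtlety'' you raise about $\mu_n^*\ge 0$ is not actually an issue to be checked — it is automatic from dual feasibility, since $\mu_n^*$ is the multiplier of the inequality constraint~\eqref{eq:FOMKP-utility-max-energy} — though your observation that $\mu_n^*=\phi_m(w_m^{(n+1)})+\gamma_{nm}^*-\xi_{nm}^*$ (with $\xi_{nm}^*=0$ on active knapsacks, hence $\mu_n^*\ge L$) is correct and harmless.
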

\begin{proof}
	When $\phi_m(C_m) \ge U, \forall m\in\calm$, the pseudo-utility maximization problem~\eqref{p:FOMKP-utility-max} is a convex optimization problem. 
	Part of its KKT conditions is given by
	\begin{align*}
		&g_n'(x_n^*) - \mu_n^* - \kappa_n^* + \nu_n^* = 0, \quad \kappa_n^*(D_n - x_n^*) = 0, \quad \nu_n^* x_n^* = 0,
	\end{align*}
	where $\mu_n^*$, $\kappa_n^*$, and $\nu_n^*$ are the optimal dual variables associated with the constraint~\eqref{eq:FOMKP-utility-max-energy}, $x_n \le D_n$, and $x_n \ge 0$.
	We can then follow the same arguments as those (i)-(iii) in the proof of Proposition~\ref{pro:conjugate-function} in Appendix~\ref{app:proof-conjugate-function} by just replacing $\phi(w^{(n+1)})$ with $\mu_n^*$.
\end{proof}

Let $w_m^{(N+1)}:= w_m^{(N+1)}(\cali)$ denote the final utilization of the knapsack $m$ after executing instance $\cali$ by $\ota_\phi$. The set of all instances $\Omega$ can be divided into three families $\Omega^1$, $\Omega^2$, and $\Omega^3$.
In particular, $\Omega^1 := \{\cali: 0 \le w_m^{(N+1)} < \beta_m, \forall m\in\calm \}$ and $\Omega^2 := \{\cali: \beta_m\le w_m^{(N+1)} \le C_m, \forall m\in\calm \}$ contain the instances whose final utilizations of all knapsacks are below and above their utilization thresholds, respectively. 
Excluding these two families, the remaining instances form $\Omega^3 := \Omega\setminus(\Omega^1\cup \Omega^2)$, in which some knapsacks have final utilizations below the utilization thresholds and the others' final utilizations are above the thresholds. We now treat these three cases separately.

{\bf Case I: $\cali\in\Omega^1$.} The threshold functions of all knapsacks are on the flat segment, which implies that the marginal cost of packing items into all knapsacks are at lowest price $L$. Thus, in this case all items are packed up to their sizes by maximizing the pseudo-utility in $\ota_\phi$. The offline optimal solution also accepts all items and hence is the same as the online solution. So, we have $\opt(\cali)/\alg(\cali) = 1 \le \alpha, \forall \cali \in\Omega^1$.

{\bf Case II: $\cali\in\Omega^2$.} 
The adversary can add one more item for each knapsack. The new item for knapsack $m$ is with size $C$ and marginal value $\phi_m(w_m^{(N+1)})$.  
Under this created worst-case instance, all knapsack capacities are occupied in the offline solution while the online solution keeps the same.
In this case, we can reply on the dual objective $\dual(\bmu, \bla, \bga)$ in~\eqref{eq:dual-FOMKP-aggregate-obj}.
A feasible dual solution can be constructed as
\begin{align*}
	\hat{\la}_m = \phi_m(w_m^{(N+1)}), \quad \hat{\mu}_n = \mu_n^*, \quad \hat{\gamma}_{nm} = {\gamma}_{nm}^* = 
	\begin{cases}
		\mu_n^* - \phi_m(w_m^{(n+1)}) & y_{nm}^* > 0\\
		0 & y_{nm}^* = 0
	\end{cases},
\end{align*}
where $\mu_n^*$ and $\gamma^*_{nm}$ are the optimal dual variables of the pseudo-utility maximization problem~\eqref{p:FOMKP-utility-max} associated with constraints~\eqref{eq:FOMKP-utility-max-energy} and~\eqref{eq:FOMKP-utility-max-rate}.
We first show that the dual variables are feasible. $\hat{\la}_m, \hat{\mu}_n, \hat{\gamma}_{nm} \ge0$ can be immediately observed. The dual constraint can be checked by
\begin{align}\label{eq:opd-feasibility}
	\hat{\mu}_n - \hat{\gamma}_{nm}
	= \phi_m(w_m^{(n+1)}) - {\xi}_{nm}^* 
	\le \phi_m(w_m^{(n+1)}) \le \phi_m(w_m^{(N+1)}) = \hat{\la}_m, \forall n\in\caln, m\in\calm,
\end{align}
where the first equality is the KKT condition of the problem~\eqref{p:FOMKP-utility-max} and $\xi_{nm}^* \ge 0$ is the optimal dual variable associated with the constraint $y_{nm} \ge 0$.
The third inequality holds since $w_m^{(n+1)} \le w_m^{(N+1)}$ and $\phi_m$ is a non-decreasing function. 
Then we can have
\begin{subequations}
	\begin{align}\label{eq:opd-fomkp-weak-duality}
		\opt(\cali) &\le 
		\sum\nolimits_{n\in\caln} \left[g_n(x_n^*) - \mu_n^* x_n^* \right] + \sum\nolimits_{m\in\calm} \phi_m(w_m^{(N+1)})C_m + \sum\nolimits_{n\in\caln}\sum\nolimits_{m\in\calm} \gamma_{nm}^*Y_{nm}\\
		\label{eq:opd-fomkp-kkt}
		&\le \sum\nolimits_{n\in\caln}g_n(x_n^*) + \sum\nolimits_{m\in\calm} \left[\phi_m(w_m^{(N+1)})C_m - \smallint_{0}^{w_m^{(N+1)}} \phi_m(u)du\right]\\
		\label{eq:opd-fomkp-sufficient}
		&\le \sum\nolimits_{n\in\caln}g_n(x_n^*) + (\alpha - 1)\sum\nolimits_{m\in\calm} \smallint_{0}^{w_m^{(N+1)}} \phi_m(u)du\\
		\label{eq:opd-fomkp-lower-bound-primal}
		&\le \sum\nolimits_{n\in\caln}g_n(x_n^*) + (\alpha - 1)\sum\nolimits_{n\in\caln}g_n(x_n^*) = \alpha \alg(\cali).
	\end{align}
\end{subequations}
Applying weak duality and Proposition~\ref{pro:conjugate-function-fomkp} gives inequality~\eqref{eq:opd-fomkp-weak-duality}.
Based on the KKT conditions of the problem~\eqref{p:FOMKP-utility-max}, $\mu_n^*(\sum_{m\in\calm}y_{nm}^* - x_n^*) = 0$ and $\gamma_{nm}^*(Y_{nm} - y_{nm}^*) = 0$, we have
\begin{align*}
	-\sum\nolimits_{n\in\caln} \mu_n^* x_n^* + \sum\nolimits_{n\in\caln}\sum\nolimits_{m\in\calm} \gamma_{nm}^*Y_{nm} 
	& = -\sum\nolimits_{n\in\caln}\sum\nolimits_{m\in\calm}[\mu_n^*- \gamma_{nm}^*]y_{nm}^*\\
	&=-\sum\nolimits_{n\in\caln}\sum\nolimits_{m\in\calm}\phi_m(w_m^{(n+1)})y_{nm}^*.
\end{align*}
Combining with $\phi_m(w_m^{(n+1)})y_{nm}^* \ge \smallint_{w_m^{(n)}}^{w_m^{(n+1)}}\phi_m(u)du$, the inequality~\eqref{eq:opd-fomkp-kkt} holds.
If the differential equation~\eqref{eq:fomkp-aggregate-sufficient} in Lemma~\ref{lem:fomkp-sufficient-aggregate} holds, the inequality~\eqref{eq:opd-fomkp-sufficient} holds.
Finally, we can have the inequality~\eqref{eq:opd-fomkp-lower-bound-primal} by observing that $g_n(x_n^*) \ge \sum\nolimits_{m \in \calm}  \smallint\nolimits_{w_m^{(n)}}^{w_m^{(n+1)}}\phi_{m}(u)du$ for $n\in\caln$ based on the problem~\eqref{p:FOMKP-utility-max}. Thus, in this case, we have $\opt(\cali)/\alg(\cali) \le \alpha, \forall \cali\in\Omega^2$ if the sufficient conditions in Lemma~\ref{lem:fomkp-sufficient-aggregate} are satisfied.

{\bf Case III: $\cali\in\Omega^3$.}
Let $\calm^1:=\{m\in\calm: 0 \le w_m^{(N+1)} < \beta_m \}$ and $\calm^2:=\{m\in\calm: \beta_m \le w_m^{(N+1)} < C_m \}$ denote the subsets of knapsacks, whose final utilizations are below and above the utilization thresholds, respectively.
The key difference between Case II and Case III is that the knapsacks in $\calm^1$ may not be fully occupied in the offline solution under the worst-case instance in $\Omega^3$.
This is because the total amount of items, which can be packed into $\calm^1$, is limited by $\sum_{m\in\calm^1}w_{m}^{(N+1)} + \sum_{m\in\calm^2} \beta_m$.

Based on this understanding of the worst-case instance, we can add the following constraint to the offline formulation~\eqref{p:FOMKP-aggregate}
\begin{align}\label{eq:opd-new-constraint}
	\sum\nolimits_{n\in\caln}\sum\nolimits_{m\in\calm^1}y_{nm} \le \sum\nolimits_{m\in\calm^1} w_m^{(N+1)} + \sum\nolimits_{m\in\calm^2} \beta_m.
\end{align}

The dual problem of the new offline problem can be stated as
\begin{subequations}\label{eq:opd-dual-caseIII}
	\begin{align}\label{eq:opd-dual-caseIII-obj}
		\min_{\la_m\ge0,\mu_n\ge0,\gamma_{nm}\ge0, \eta \ge 0}\quad& \dual(\bmu, \bla, \bga) + \eta(\sum\nolimits_{m\in\calm^1} w_m^{(N+1)} + \sum\nolimits_{m\in\calm^2} \beta_m ) \\
		\label{eq:opd-dual-caseIII-cont1}
		{\rm s.t.}\quad& \la_m \ge \mu_n - \gamma_{nm}, \quad \forall n\in\caln,m\in\calm^2,\\
		\label{eq:opd-dual-caseIII-cont2}
		&\la_m \ge \mu_n - \gamma_{nm} - \eta, \quad \forall n\in\caln, m\in\calm^1,
	\end{align}
\end{subequations}
where $\eta$ is the dual variable of the new constraint~\eqref{eq:opd-new-constraint}. 
We can construct a feasible dual solution as
\begin{align*}
	\hat{\la}_m =
	\begin{cases}
		\phi_m(w_m^{(N+1)}) & m\in\calm^2 \\
		0 & m\in\calm^1
	\end{cases},\quad
	\hat{\mu}_n = \mu_n^*,\quad
	\hat{\gamma}_{nm} &= {\gamma}_{nm}^*, \quad
	\hat\eta = L.
\end{align*}

Based on equation~\eqref{eq:opd-feasibility} in Case II, the constructed dual solution satisfies the constraint~\eqref{eq:opd-dual-caseIII-cont1}. Furthermore, the constraint~\eqref{eq:opd-dual-caseIII-cont2} can be checked by observing
\begin{align}
	\hat{\mu}_n - \hat{\gamma}_{nm}&\le \phi_m(w_m^{(N+1)})= L = \hat{\la}_m + \hat{\eta},n\in\caln, m\in\calm^1.
\end{align}

Applying the dual objective in~\eqref{eq:opd-dual-caseIII-obj} and KKT conditions of the problem~\eqref{p:FOMKP-utility-max}, we can have
\begin{align}\nonumber
	\opt(\cali) &\le 
	\sum\nolimits_{n\in\caln}g_n(x_n^*) + \sum\nolimits_{m\in\calm^2} [\phi_m(w_m^{(N+1)}) C_m -\smallint_{0}^{w_m^{(N+1)}} \phi_m(u)du + L \beta_m]\\
	\label{eq:opd-sufficient-caseIII}
	&{\le \sum\nolimits_{n\in\caln}g_n(x_n^*) + (\alpha - 1)\sum\nolimits_{m\in\calm^2}\smallint_{0}^{w_m^{(N+1)}} \phi_m(u)du} \le \alpha \alg(\cali),
\end{align}
where the inequality~\eqref{eq:opd-sufficient-caseIII} holds if $\phi$ satisfies the differential equation~\eqref{eq:fomkp-aggregate-sufficient} in Lemma~\ref{lem:fomkp-sufficient-aggregate}. 
Thus, we have $\opt(\cali)/\alg(\cali) \le \alpha, \forall \cali\in\Omega^3$.

In summary, the competitive ratio is $\alpha$ if $\phi$ satisfies the sufficient condition in  Lemma~\ref{lem:fomkp-sufficient-aggregate}.

\subsection{Proof of Theorem~\ref{thm:threshold-omkp-separable}}\label{app:proof-thm-fomkp-separable}

This proof proceeds much the same as that of Theorem~\ref{thm:threshold-omkp-aggregate}, but we now use the following counterpart to the sufficient conditions on the threshold functions of $\ota_\phi$.

\begin{lem}
	\label{lem:fomkp-sufficient-separable}
	Under Assumption~\ref{ass:value-function}, $\ota_\phi$ for the \fomkp with separable value functions is $\alpha$-competitive if the threshold function $\phi=\{\phi_m\}_{m\in\calm}$ is in the form of, $\forall m\in\calm$, 
	\begin{align*}
		\phi_m(w)=
		\begin{cases}
			L&  w\in[0,\beta_m)\\
			\varphi_m(w) & w\in[\beta_m, C_m]
		\end{cases},
	\end{align*}
	where $\beta_m \in [0,C_m]$ is a utilization threshold and $\varphi_m$ is a non-decreasing function, and $\phi_m$ satisfies
	\begin{align}\label{eq:fomkp-sufficient-separable}
		\begin{cases}
			\varphi_m(w)C_m \le \alpha \smallint_{0}^{w}\phi_m(u)du - Lw, \quad  w\in[\beta_m,C_m],\\
			\varphi_m(\beta_m) = L,\varphi_m(C_m) \ge U.
		\end{cases}
	\end{align} 
\end{lem}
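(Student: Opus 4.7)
The plan is to mirror the three-case structure used to prove Lemma~\ref{lem:fomkp-sufficient-aggregate}, identifying where the separable structure forces a different argument. First, I would write down the offline primal and dual for the separable version. Because $g_n(\by_n) = \sum_m g_{nm}(y_{nm})$, the objective decouples across pairs $(n,m)$, and only the demand constraint $\sum_m y_{nm} \le D_n$ (dual $\mu_n$), the capacity constraint $\sum_n y_{nm} \le C_m$ (dual $\la_m$), and the rate constraint $y_{nm} \le Y_{nm}$ (dual $\gamma_{nm}$) couple decisions. The dual objective becomes a sum of per-pair conjugates $h_{nm}(\mu_n + \la_m + \gamma_{nm})$ plus $\sum_n\mu_n D_n + \sum_m \la_m C_m + \sum_{n,m}\gamma_{nm}Y_{nm}$. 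I would then prove the separable analog of Proposition~\ref{pro:conjugate-function-fomkp}: when $\phi_m(C_m)\ge U$ for all $m$, the KKT conditions of the pseudo-utility problem~\eqref{p:utility-maximization} yield $h_{nm}(\mu_n^* + \gamma_{nm}^*) = g_{nm}(y_{nm}^*) - \phi_m(w_m^{(n+1)})y_{nm}^*$, which is the ingredient that lets the online allocation populate a feasible dual.

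Next I would partition $\Omega$ into the same three subsets $\Omega^1, \Omega^2, \Omega^3$ based on whether every knapsack utilization falls below or above the threshold $\beta_m$. Case I ($\cali\in\Omega^1$) proceeds verbatim: marginal costs equal $L$, every item is fully absorbed, and $\opt/\alg = 1 \le \alpha$. Case II ($\cali\in\Omega^2$) also tracks the aggregate argument with the feasible dual $\hat\la_m = \phi_m(w_m^{(N+1)})$, $\hat\mu_n = \mu_n^*$, $\hat\gamma_{nm} = \gamma_{nm}^*$; applying weak duality, the conjugate identity, monotonicity of $\phi_m$, and the non-negativity of each item's pseudo-utility yields the sufficient condition $\varphi_m(w)C_m \le \alpha \smallint_0^w \phi_m(u)du$, which is weaker than~\eqref{eq:fomkp-sufficient-separable} and is therefore implied by it.

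The crux is Case III, with $\calm^1 = \{m: w_m^{(N+1)} < \beta_m\}$ and $\calm^2 = \{m: w_m^{(N+1)} \ge \beta_m\}$. I depart from the aggregate proof exactly as the proof sketch in Section~\ref{sec:omkp} signals: because the per-pair value functions $g_{nm}$ differ, items routed by $\ota_\phi$ into $\calm^2$ may have been chosen for marginal-utility reasons rather than because rate limits blocked $\calm^1$, so the adversary cannot be constrained by $\sum_{m\in\calm^2}\beta_m$. The safe bound on how much the offline optimum can (re)assign to $\calm^1$ is $\sum_{m\in\calm^1} w_m^{(N+1)} + \sum_{m\in\calm^2} w_m^{(N+1)}$. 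I would append this as a new constraint to the offline primal with dual $\eta \ge 0$, then construct the feasible dual $\hat\la_m = \phi_m(w_m^{(N+1)})$ on $\calm^2$, $\hat\la_m = 0$ on $\calm^1$, $\hat\eta = L$, $\hat\mu_n = \mu_n^*$, $\hat\gamma_{nm} = \gamma_{nm}^*$. Feasibility on $\calm^1$ uses $\phi_m(w_m^{(N+1)}) = L = \hat\la_m + \hat\eta$, while feasibility on $\calm^2$ reuses the Case II estimate. Plugging into the perturbed dual objective contributes an extra term $L\sum_{m\in\calm^2}w_m^{(N+1)}$ (rather than $L\sum_{m\in\calm^2}\beta_m$ as in the aggregate case), and this is exactly what forces~\eqref{eq:fomkp-sufficient-separable} to take the form $\varphi_m(w)C_m \le \alpha\smallint_0^w\phi_m(u)du - Lw$ when I collapse the bound to an $\alpha\cdot\alg(\cali)$ estimate using the non-negativity of each item's pseudo-utility.

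The main obstacle I anticipate is the bookkeeping in Case III: I must confirm that with the separable dual, the KKT relation $\mu_n^* + \gamma_{nm}^* - \xi_{nm}^* = g_{nm}'(y_{nm}^*)$ combines with the pricing $\la_m = \phi_m(w_m^{(n+1)})$ to yield the analogue of equation~\eqref{eq:opd-fomkp-kkt}, namely $-\sum_n\mu_n^* D_n + \sum_{n,m}\gamma_{nm}^* Y_{nm} \le -\sum_{n,m} \phi_m(w_m^{(n+1)})y_{nm}^*$, and that the added $\hat\eta = L$ slackens the knapsack-$\calm^1$ dual constraint by exactly the right amount. Once this is verified, the three cases together yield $\opt(\cali)/\alg(\cali) \le \alpha$ for every $\cali \in \Omega$, completing the proof of Lemma~\ref{lem:fomkp-sufficient-separable}.
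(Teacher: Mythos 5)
Your plan follows the paper's proof essentially line for line: you partition $\Omega$ into the same three families, and the crux in Case III---appending the constraint $\sum_{n}\sum_{m\in\calm^1}y_{nm} \le \sum_{m\in\calm} w_m^{(N+1)}$ to the offline primal, pricing it with $\hat\eta = L$, and recognizing that separability lets the offline reassign up to $\sum_{m\in\calm^2}w_m^{(N+1)}$ rather than merely $\sum_{m\in\calm^2}\beta_m$---is exactly the paper's argument and exactly what produces the extra $-Lw$ in condition~\eqref{eq:fomkp-sufficient-separable}. The one deviation is presentational: you dualize the rate constraint $y_{nm}\le Y_{nm}$ explicitly (introducing $\gamma_{nm}$), whereas the paper keeps it inside the conjugate's domain, $h_{nm}(\rho_{nm}) = \max_{0\le y_{nm}\le Y_{nm}} g_{nm}(y_{nm}) - \rho_{nm}y_{nm}$, so its dual carries only $\mu_n$ and $\la_m$ and the identity it proves is $h_{nm}(\phi_m(w_m^{(n+1)})+\mu_n^*) = g_{nm}(y_{nm}^*) - (\phi_m(w_m^{(n+1)})+\mu_n^*)y_{nm}^*$. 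Both routes work, but be careful that the identity you wrote, $h_{nm}(\mu_n^* + \gamma_{nm}^*)=g_{nm}(y_{nm}^*)-\phi_m(w_m^{(n+1)})y_{nm}^*$, is inconsistent as stated: by definition of the conjugate the argument of $h_{nm}$ and the coefficient multiplying $y_{nm}^*$ must agree, and stationarity of the pseudo-utility problem gives $g_{nm}'(y_{nm}^*) = \phi_m(w_m^{(n+1)}) + \mu_n^* + \gamma_{nm}^* - \xi_{nm}^*$, so under your explicit dualization the correct price is $\phi_m(w_m^{(n+1)})+\mu_n^*+\gamma_{nm}^*$. This is a slip rather than a gap---the bookkeeping you flag at the end is precisely where it would surface and be corrected---but it is worth fixing before carrying the identity through the weak-duality chain.
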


\begin{proof}[Proof of Lemma~\ref{lem:fomkp-sufficient-separable}]
	
	The offline formulation of the \fomkp with separable value functions can be stated as 
	\begin{subequations}\label{p:fomkp-separable}
		\begin{align}\label{eq:fomkp-separable-obj}
			\max_{0\le y_{nm} \le Y_{nm}} &\quad \sum\nolimits_{n\in\caln}\sum\nolimits_{m\in\calm} g_{nm}(y_{nm}),\\
			\label{eq:fomkp-separable-demand}
			{\rm s.t.} &\quad \sum\nolimits_{m\in\calm} y_{nm} \le D_n, \forall n\in\caln, \\
			\label{eq:fomkp-separable-capacity}
			&\quad \sum\nolimits_{n\in\caln}y_{nm} \le C_m, \forall
			m\in\calm.
		\end{align}
	\end{subequations}
	
	The dual problem of this offline formulation can be derived as
	\begin{align}\label{p:dual-fomkp-separable}
		\min_{\la_m\ge0,\mu_n\ge0}\quad& \sum\nolimits_{n\in\caln}\sum\nolimits_{m\in\calm} h_{nm}(\mu_n + \la_m) + \sum\nolimits_{m\in\calm} \la_m C_m + \sum\nolimits_{n\in\caln} \mu_{n} D_n,
	\end{align}
	where $h_{nm}(\rho_{nm}) = \max_{0\le y_{nm}\le Y_{nm}} g_{nm}(y_{nm}) - \rho_{nm} y_{nm}$ is the conjugate function of $g_{nm}(\cdot)$, and $\mu_n$ and $\la_m$ are the dual variables that correspond to constraints~\eqref{eq:fomkp-separable-demand} and~\eqref{eq:fomkp-separable-capacity}, respectively.

	With separable value functions, the pseudo-utility maximization problem~\eqref{p:utility-maximization} can be rewritten as
	\begin{subequations}
		\label{p:FOMKP-utility-max-separable}
		\begin{align}
			\max_{0\le y_{nm}\le Y_{nm}} \quad & \sum\nolimits_{m\in\calm}g_{nm}(y_{nm}) - \sum\nolimits_{m\in \calm}  \smallint\nolimits_{w_m^{(n)}}^{w_m^{(n)} + y_{nm}}\phi_{m}(u)du\\
			\label{eq:FOMKP-utility-max-energy-separable}
			{\rm s.t.}\quad& \sum\nolimits_{m\in\calm}y_{nm} \le D_n. 
		\end{align}
	\end{subequations}
	
	We first connect the online solution and the dual objective through the following proposition.
	\begin{pro}\label{pro:conjugate-function-separable}
		The conjugate function $h_{nm}(\rho_{nm})$ satisfies
		
		(i) $h_{nm}(\rho_{nm})$ is a non-increasing function;
		
		(ii) when $\phi_m(C_m) \ge U, \forall m\in\calm$, $ h_{nm}(\phi_m(w_m^{(n+1)}) + \mu_n^*) = g_{nm}(y_{nm}^*) - (w_m^{(n+1)} + \mu_n^*) y_{nm}^*,$ where $y_{nm}^*$ and $\mu_n^*$ are the optimal primal and dual solutions of the problem~\eqref{p:FOMKP-utility-max-separable}, and $w_m^{(n+1)} = w_m^{(n)} + y_{nm}^*$.
	\end{pro}
	\begin{proof}[Proof of Proposition~\ref{pro:conjugate-function-separable}]
		The property (i) of the conjugate function can be shown in the same way as the proof of Proposition~\ref{pro:conjugate-function}.
		
		When $\phi_m(C_m) \ge U, \forall m\in\calm$, the pseudo-utility maximization problem~\eqref{p:FOMKP-utility-max-separable} is a convex optimization problem and part of its KKT conditions is given by
		\begin{align*}
			g_{nm}'(y_{nm}^*) - \phi_m(w_m^{(n+1)}) - \mu_n^* - \gamma_{nm}^* + \xi_{nm}^* = 0, \quad \gamma_{nm}^*(Y_{nm} - y_{nm}^*) = 0, \quad \xi_{nm}^* y_{nm}^* = 0,
		\end{align*}
		where $y_{nm}^*$, and $\{\mu_n^*,\gamma_{nm}^*, \xi_{nm}^*\}$ are the optimal primal and dual solutions. Based on this structure of KKT conditions, we can follow the same arguments in the proof of Proposition~\ref{pro:conjugate-function} in Appendix~\ref{app:proof-conjugate-function} to show that $y_{nm}^*$ maximizes the conjugate optimization problem  given $\rho_{nm} = \phi_m(w_m^{(n+1)}) + \mu_n^*$. 
	\end{proof}

	We proceed to derive the sufficient condition on the threshold function for the \fomkp with separable functions using the instance-dependent \opd analysis. 
	The set of instances $\Omega$ is divided into three families $\Omega^1$, $\Omega^2$, and $\Omega^3$ following the same definitions as those in the proof of Lemma~\ref{lem:fomkp-sufficient-aggregate}.
	
	{\bf Case I: $\cali\in\Omega^1$.} This case is the same as Case I in the proof of Lemma~\ref{lem:fomkp-sufficient-aggregate}. We can have $\opt(\cali)/\alg(\cali) = 1, \forall \cali\in\Omega^1$.
	
	{\bf Case II: $\cali\in\Omega^2$.} In this case, all knapsacks can be fully occupied in the offline solution under the worst-case instance. Thus, we can use the dual objective~\eqref{p:dual-fomkp-separable} in the \opd analysis.
	Particularly, we construct a feasible dual solution as
	$\hat\la_m = \phi_m(w_m^{(N+1)})$ and $\hat\mu_n = \mu_n^*$.
	Based on weak duality, we have
	\begin{subequations}
		\begin{align}\nonumber
			\opt(\cali) & \le\sum\nolimits_{n\in\caln}\sum\nolimits_{m\in\calm} h_{nm}(\mu_n^* + \phi_m(w_m^{(N+1)})) + \sum\nolimits_{m\in\calm} \phi_m(w_m^{(N+1)}) C_m + \sum\nolimits_{n\in\caln} \mu_{n}^* D_n\\
			\label{eq:opd-separable-eq1}
			&\le\sum\nolimits_{n\in\caln}\sum\nolimits_{m\in\calm} g_{nm}(y_{nm}^*) + \sum\nolimits_{m\in\calm} [\phi_m(w_m^{(N+1)})C_m - \sum\nolimits_{n\in\caln}\phi_m(w_m^{(n+1)})  y_{nm}^*]  \\ 
			\label{eq:opd-separable-eq2}
			&\le \sum\nolimits_{n\in\caln}\sum\nolimits_{m\in\calm} g_{nm}(y_{nm}^*) + \sum\nolimits_{m\in\calm} [\phi_m(w_m^{(N+1)})C_m - \smallint_{0}^{w_m^{(N+1)}}\phi_m(u)du]\\\nonumber
			&\le \sum\nolimits_{n\in\caln}\sum\nolimits_{m\in\calm} g_{nm}(y_{nm}^*) + (\alpha - 1)\sum\nolimits_{m\in\calm}\smallint_{0}^{w_m^{(N+1)}}\phi_m(u)du\\\nonumber
			&\le \alpha \sum\nolimits_{n\in\caln}\sum\nolimits_{m\in\calm} g_{nm}(y_{nm}^*) = \alpha \alg(\cali).
		\end{align}
	\end{subequations}
	By applying Proposition~\ref{pro:conjugate-function-separable} and KKT conditions of the problem~\eqref{p:FOMKP-utility-max-separable}, we have the inequality~\eqref{eq:opd-separable-eq1}. 
	If the threshold function $\phi$ satisfies the differential equation~in \eqref{eq:fomkp-sufficient-separable}, the inequality~\eqref{eq:opd-separable-eq2} holds.   
	The following inequalities can be easily verified based on arguments used in previous proofs. 
	Thus, if the differential equation in~\eqref{eq:opd-fomkp-sufficient} is satisfied, we have $\opt(\cali)/\alg(\cali) \le \alpha, \forall \cali\in\Omega^2$. 
	
	{\bf Case III: $\cali\in\Omega^3$.}
	The total amount of items assigned to knapsacks in $\calm^1$ is still limited in the offline solution under the worst-case instance. However, the additional items that can be reassigned from knapsacks in $\calm^2$ to those in $\calm^1$ are up to $\sum_{m\in\calm^2} w_m^{(N+1)}$. This is because in this case, each knapsack corresponds to an individual value function and thus the marginal utility (i.e., the marginal value of items minus the marginal cost of using knapsack when assigning a small bit of items) of assigning items to knapsacks in $\calm^2$ can be larger than to those in $\calm^1$ even through the marginal cost of knapsacks in $\calm^1$ is the lowest value $L$. 
	To construct a tighter upper bound of the offline optimum, we add one more constraint to the offline formulation~\eqref{p:fomkp-separable}. 
	\begin{align}\label{eq:opd-new-constraint2}
		\sum\nolimits_{n\in\caln}\sum\nolimits_{m\in\calm^1}y_{nm} \le \sum\nolimits_{m\in\calm} w_m^{(N+1)}.
	\end{align}
	The dual problem of the new offline formulation is
	\begin{align}\label{p:dual-fomkp-separable-new}
		\min_{\la_m\ge0,\mu_n\ge0,\eta \ge 0}\quad \sum\nolimits_{n\in\caln}&\left[\sum\nolimits_{m\in\calm^1} h_{nm}(\mu_n + \la_m + \eta)+ \sum\nolimits_{m\in\calm^2} h_{nm}(\mu_n + \la_m)\right] \\\nonumber
		& + \sum\nolimits_{m\in\calm} \la_m C_m + \sum\nolimits_{n\in\caln} \mu_{n} D_n + \eta \sum\nolimits_{m\in\calm}w_m^{(N+1)},
	\end{align}
	where $\eta$ is the dual variable corresponding to the new constraint~\eqref{eq:opd-new-constraint2}.
	In this case, the feasible dual variable is chosen as
	\begin{align*}
		\hat\la_m =
		\begin{cases}
			\phi_m(w_m^{(N+1)}) & m\in\calm^2\\
			0 & m\in\calm^1
		\end{cases}, \quad \hat\mu_n = \mu_n^*, \quad \hat\eta = L. 
	\end{align*}
	Then we can have
	\begin{align}\nonumber
		\opt(\cali) & 
		\le\sum\nolimits_{n\in\caln}\sum\nolimits_{m\in\calm} [g_{nm}(y_{nm}^*) - (\phi_m(w_m^{(n+1)}) + \mu_n^*) y_{nm}^*] \\
		\nonumber
		&\quad\quad\quad\quad\quad+ \sum\nolimits_{m\in\calm^2} \phi_m(w_m^{(N+1)}) C_m + \sum\nolimits_{n\in\caln} \mu_{n}^* D_n + L \sum\nolimits_{m\in\calm}w_m^{(N+1)} \\  
		\nonumber
		&\le \sum\nolimits_{n\in\caln}\sum\nolimits_{m\in\calm} g_{nm}(y_{nm}^*) + \sum\nolimits_{m\in\calm^2} [\phi_m(w_m^{(N+1)})C_m - \smallint_{0}^{w_m^{(N+1)}}\phi_m(u)du +  L w_m^{(N+1)}]\\
		\label{eq:opd-separable-eq3}
		&\le \sum\nolimits_{n\in\caln}\sum\nolimits_{m\in\calm} g_{nm}(y_{nm}^*) + (\alpha - 1)\sum\nolimits_{m\in\calm^2}\smallint_{0}^{w_m^{(N+1)}}\phi_m(u)du\\
		\nonumber
		&\le \alpha \sum\nolimits_{n\in\caln}\sum\nolimits_{m\in\calm} g_{nm}(y_{nm}^*) = \alpha \alg(\cali).
	\end{align}
	The key step in the \opd analysis above is to ensure the inequality~\eqref{eq:opd-separable-eq3}. 
	If the threshold function $\phi$ satisfies the differential equation~\eqref{eq:fomkp-sufficient-separable}, the inequality~\eqref{eq:opd-separable-eq3} holds and we have $\opt(\cali)/\alg(\cali) \le \alpha, \forall \cali\in\Omega^3$.
	
	In summary, the competitive ratio of $\ota_\phi$ for the \fomkp with separable value functions is $\alpha$ if the threshold function $\phi$ satisfies the sufficient condition in Lemma~\ref{lem:fomkp-sufficient-separable}.
\end{proof}

Using Lemma~\ref{lem:fomkp-sufficient-separable} to complete the proof, is similar to the case of aggregate functions.  we determine the threshold function that satisfies the differential equation~\eqref{eq:fomkp-sufficient-separable} and can minimize $\alpha$. Applying Gronwall's Inequality to \eqref{eq:fomkp-sufficient-separable} gives
\begin{align*}
	\varphi_m(w) \le \frac{\alpha L\beta_m}{C_m} - \frac{L}{C_m}w + \frac{\alpha}{C_m} {\smallint_{\beta_m}^{w}}\varphi_m(u) du = \frac{L}{\alpha} + [\frac{(\alpha - 1)L\beta_m}{C_m} - \frac{L}{\alpha}] e^{\alpha(w - \beta_m)/C_m}, w\in[\beta_m, C_m]
\end{align*}
Since $\varphi_m(C_m) \ge U$, we have $U\le \varphi_m(C_m) \le \frac{L}{\alpha} + [\frac{(\alpha - 1)L\beta_m}{C_m} - \frac{L}{\alpha}] e^{\alpha(C_m - \beta_m)/C_m}$. The minimal $\alpha$ is achieved when all inequalities in \eqref{eq:fomkp-sufficient-separable} are binding. We then have $U = \frac{L}{\alpha} + [\frac{(\alpha - 1)L\beta_m}{C_m} - \frac{L}{\alpha}] e^{\alpha(C_m - \beta_m)/C_m}$ and $\beta_m = \frac{C_m}{\alpha - 1}$. Thus, the minimal competitive ratio is the solution of the equation $\alpha_{\phi^*} - 1 - \frac{1}{\alpha_{\phi^*} - 1} = \ln\frac{\alpha_{\phi^*} \theta - 1}{\alpha_{\phi^*} - 1}$ and the threshold function is given by \eqref{eq:threshold-fomkp-separable}.


\end{document}